\theoremstyle{plain}%
\newtheorem{theorem}{Theorem}[section]
\newtheorem{result}{Result}
\newtheorem{lemma}[theorem]{Lemma}
\newtheorem{corollary}[theorem]{Corollary}
\newtheorem{claim}[theorem]{Claim} 
\newtheorem{observation}[theorem]{Observation}
\newtheorem{definition}[theorem]{Definition}
\theoremstyle{plain}%
\newtheorem{remark}[theorem]{Remark}%
\newtheorem*{remark:unnumbered}[theorem]{Remark}%
\theoremstyle{nonumberplain}%
\newtheorem{proof}{Proof:}%
\newcommand{\myqedsymbol}{$\square$}
\newtheorem{proofof}{Proof of\!}%
\newcommand{\eps}{\varepsilon}%
\def\bar{\overline}
\def\script#1{\mathcal{#1}}
\def\sep{\;|\;}
\newcommand{\argmin}{\mathrm{argmin}{}\xspace}
\DeclareMathSymbol{\shortminus}{\mathbin}{AMSa}{"39}
\def\sB{\script{B}}
\def\sC{\script{C}}
\def\sD{\script{D}}
\def\sF{\script{F}}
\def\sI{\script{I}}
\def\sM{\script{M}}
\def\sP{\script{P}}
\def\sQ{\script{Q}}
\def\sR{\script{R}}
\def\sS{\script{S}}
\def\cllp{\textsc{FacilityMatLP}}
\def\clla{\textsc{FacilityMatAlg}}
\def\cllc{\textsc{CenterPartAlg}}
\def\opt{\textsc{OPT}}
\def\kcenter{k\mathrm{center}}
\newcommand\cost{{\operatorname{cost}}}
\def\R{\script{R}}
\newcommand{\sol}{\textsc{SOL}}%
\def\cst{T}
\def\pcst{H}
\def\solsp{\mathcal{P}}
\def\intSolSp{\mathcal{Q}}
\def\flt{\sF''}
\def\core{core}
\def\pri{p}
\def\sec{s}
\title{Improved Approximation Algorithms for Individually Fair Clustering}
\author{Ali Vakilian\thanks{Toyota Technological Institute at Chicago (TTIC). Supported by NSF award CCF-1934843. Email: \texttt{vakilian@ttic.edu}} \and Mustafa Yal\c{c}{\i}ner\thanks{Technical University of Dortmund. Email: \texttt{mustafa.yalciner@tu-dortmund.de}}}
\begin{document}
\date{}
\maketitle
\begin{abstract}
We consider the $k$-clustering problem with $\ell_p$-norm cost, which includes $k$-median, $k$-means and $k$-center, under an individual notion of fairness proposed by~\cite{jung2019center}: given a set of points $P$ of size $n$, a set of $k$ centers induces a {\em fair clustering} if every point in $P$ has a center among its $n/k$ closest neighbors. \cite{mahabadi2020individual} presented a $( p^{O(p)},7)$-bicriteria approximation for fair clustering with $\ell_p$-norm cost: every point finds a center within distance at most $7$ times its distance to its $(n/k)$-th closest neighbor and the $\ell_p$-norm cost of the solution is at most $p^{O(p)}$ times the cost of an optimal fair solution.

In this work, for any $\eps>0$, we present an improved $(16^p +\eps,3)$-bicriteria for this problem. Moreover, for $p=1$ ($k$-median) and $p=\infty$ ($k$-center), we present improved cost-approximation factors $7.081+\eps$ and $3+\eps$ respectively. 
To achieve our guarantees, we extend the framework of~\citep{charikar2002constant,swamy2016improved} and devise a $16^p$-approximation algorithm for the facility location with $\ell_p$-norm cost under matroid constraint which might be of an independent interest. 

Besides, our approach suggests a reduction from our individually fair clustering to a clustering with {\em a group fairness requirement} proposed by~\cite{kleindessner2019fair}, which is essentially the {\em median matroid} problem~\citep{krishnaswamy2011matroid}.    
\end{abstract}

\section{Introduction}
As automated decision-making is widely used in a diverse set of important decisions such as job hiring, loan application approval and college admission, there is a debate regarding the fairness of algorithms and machine learning methods. As there are lots of instances in which optimizing machine learning algorithms with respect to the classical measures of efficiency (e.g., accuracy, runtime and space complexity) lead to biased outputs, e.g.,~\citep{Imana21a,angwin2016machine}, there is an extensive literature on {\em algorithmic fairness} which includes both {\em how to define the notion of fairness} and {\em how to design efficient algorithms with respect to fairness constraints}~\citep{dwork2012fairness,chouldechova2017fair,chouldechova2018frontiers,kearns2019ethical}.
Clustering is one of the basic tasks in unsupervised learning and is a commonly used technique in many fields such as pattern recognition, information retrieval and data compression. Due to its wide range of applications, the clustering problem has been studied extensively under fairness consideration. Fair clustering was first introduced in a seminal work of~\cite{chierichetti2017fair} where they proposed the {\em balanced clusters} as the notion of fairness. Further, other {\em group fairness} notions such as {\em balanced centers}~\citep{kleindessner2019fair} and {\em balanced costs}~\citep{abbasi2020fair,ghadiri2020fair} were also introduced as measures of fairness.  

While clustering under group fairness is a well-studied domain by now, we know much less about the complexity of fairness under {\em individual fairness}. Motivated by the interpretation of clustering as a {\em facility location} problem,~\cite{jung2019center} proposed an individual notion of fairness for clustering as follows: a clustering of a given pointset $P$ is {\em fair} if every point in $P$ has a center among its $(|P|/k)$-closest neighbors. To justify, if a set of $k$ centers are supposed to be opened, then, without any prior knowledge, each point (or client) will expect to find a center among $1/k$ fraction of points that are closest to it. This is in particular a reasonable expectation in many scenarios. For example, people living in areas with different densities have different expectations for a ``reasonably close distance''. So, while in an urban area of a major city it is reasonable for a resident to find a grocery store within a mile of her home, it is a less reasonable expectation for someone who lives in a low-density rural area.~\cite{jung2019center} proves that it is NP-hard to find a fair clustering and proposed an algorithm that returns a $2$-approximate fair clustering in any metric space---each point has a center at distance at most twice the distance to its $(|P|/k)$-closest neighbor.\footnote{Here, we assume that each point is the (first) closest neighbor to itself.}   

Recently,~\cite{mahabadi2020individual} extended this notion of fairness and studied the common center-based clustering objective functions such as $k$-median, $k$-means and $k$-center under this individual notion of fairness. 
In particular,~\citep{mahabadi2020individual} showed that a local search type algorithm achieves a bicriteria approximation guarantee for the aforementioned clustering problem with individual fairness. More generally, they considered the $\alpha$-fair $k$-clustering with $\ell_p$-norm cost, $\min_{C\subseteq P} \sum_{v\in P} d(v,C)^p$, and proved that a local search algorithm with swaps of size at most $4$ finds a $(p^{O(p)},7)$-bicriteria approximate solution: every point has a center at distance at most $7$ times its ``desired distance'' and the $\ell_p$-clustering cost of the solution is at most $p^{O(p)}$ times the optimal fair $k$-clustering (refer to Section~\ref{sec:reduction} for more details). Given a pointset of size $n$ and a fairness parameter $\alpha\geq 1$, for every point $v$, the {\em desired distance} of $v$ is $\alpha$ times its {\em fair radius} where the fair radius is the distance of the $(n/k)$-th closest neighbor of $v$ in the pointset.

\subsection{Our Contributions}
In this paper, we study the problem of $\alpha$-fair $k$-clustering with $\ell_p$-norm cost function and improve upon both fairness and cost approximation factors of the $(p^{O(p)},7)$-bicriteria approximation of~\citep{mahabadi2020individual} significantly. 
In particular, our result improves upon the $(O(\log n),7)$-bicriteria approximation of fair $k$-center clustering and achieves a $(O(1),3)$-bicriteria approximation.

\begin{result}[restatement of Theorem~\ref{thm:main}]\label{result:lp-fair-clustering}
For any $\eps>0$, $\alpha\geq 1$ and $p>1$, there exists a $(16^p+\eps,3)$-bicriteria approximation algorithm for $\alpha$-fair $k$-clustering with $\ell_p$-norm cost. Moreover, for $p=1$, which denotes the $\alpha$-fair $k$-median problem, there exists a $(7.081+\epsilon,3)$-bicriteria approximation algorithm.
\end{result}

We remark that for fair $k$-median, our result improves upon the $(84,7)$-bicriteria approximation of~\cite{mahabadi2020individual}.

To achieve our approximation guarantees, we design an $e^{O(p)}$-approximation for the problem of {\em facility location with $\ell_p$-norm cost under matroid constraint}. This is a natural generalization of the well-known facility location problem under matroid constraint~\citep{krishnaswamy2011matroid,swamy2016improved} which includes the {\em matroid median} problem as its special case. Our approach extends the algorithm of~\cite{swamy2016improved} where we show that a careful modification of the analysis obtains the desired approximation guarantee for the more general problem of facility location with $\ell_p$-norm cost.\footnote{We remark that one can apply the same approach and reduce the problem to an instance of $k$-clustering under matroid constraint instead. This still requires a generalization of matroid-median problem with $\ell_p$-cost.} We remark that for the case of $k$-median ($p=1$), we can instead employ the best-known bound for matroid $k$-median  by~\cite{krishnaswamy2018constant} and get $(7.081+\epsilon)$-approximation.

\begin{result}[restatement of Theorem~\ref{thm:main-facility}]\label{result:lp-facility-matroid}
For any $p\in [1, \infty)$, there exists a $16^p$-approximation algorithm for the facility location problem with $\ell_p$-norm cost under matroid constraint.
\end{result}

%Our bicriteria approximation algorithm for fair clustering is via a reduction to a carefully constructed instance of facility location under matroid constraint. We remark that as our reduction duplicates some of the input points, we crucially need to reduce our problem to an instance facility location and not an instance of $k$-clustering. More precisely, by reduction to a facility location instance we can argue that the cost does not change; however, if we reduce it to a $k$-clustering instance due to the introduced duplicate points, the cost may increase significantly.

%our approach crucially requires to reduce the fair clustering problem to a an instance of facility location under matroid constraint. Otherwise, if we reduce the fair clustering instance to an instance of matroid $k$-clustering (even with $\ell_p$-cost), the cost may not be preserved.   

Besides our theoretical contributions, our approach essentially draws an interesting connection between the individual fairness and the group fairness notion {\em with balanced centers}. %We show that any instance of the clustering with individual fairness can be reduced to an instance of group fairness clustering with balanced centers. 
In particular, we show that a ``density-based'' decomposition of the points introduces a set of groups such that a balanced representation of them in the centers guarantees a fair solution w.r.t.~the individual fairness notions considered in this paper. This observation could be of an independent interest as to the best of our knowledge is the first to connect two different notions of fairness that have been introduced for the clustering problem.           
Besides, the connection between our notion of individual fairness and the notion of group fairness with balanced centers has led to an improved algorithm for the fair $k$-center problem---we show a $(3,3)$-bicriteria approximation for $\alpha$-fair $k$-center problem. Unlike our main approach, this result only holds for the $k$-center problem and crucially relies on properties of $k$-center objective function and a recent $3$-approximation algorithm for $k$-center with balanced center~\citep{jones2020fair}.

\begin{result}[restatement of Theorem~\ref{thm:fair-k-center}]\label{result:fair-k-center}
For any $\eps>0$ and $\alpha\geq 1$, there exists a $(3, 3)$-bicriteria approximation algorithm for $\alpha$-fair $k$-center.
\end{result}

%Note that for general fair $k$-clustering with $\ell_p$-objective, a reduction to $k$-clustering with matroid constraint as we descr    

\subsection{Other Related Work}
\paragraph{Clustering with group fairness constraint.}
\cite{chierichetti2017fair} introduced the first notion of fair clustering with balanced clusters: given a set of points coming from two distinct groups, the goal is to find a minimum {\em cost} clustering with proportionally balanced clusters. 
Their approach, which is based on a technique called {\em fairlet decomposition}, achieves constant factor approximations for fair $k$-center and $k$-median. Since then, several variants of clustering w.r.t.~a notion of group fairness have been studied.
\begin{itemize}[leftmargin=*]
    \item{\bf With balanced clusters.} This is the first and the most well-studied notion of fair clustering. In a series of work, this setting has been extended to address general $\ell_p$-norm cost function, multiple groups, relaxed balanced requirements (with both upper and lower bound on ratio of each class in any cluster) and scalability issues~\citep{chierichetti2017fair,backurs2019scalable,bera2019fair,bercea2019cost,ahmadian2019clustering,schmidt2019fair,huang2019coresets}.  
    
    \item{\bf With balanced centers.} Another notion of group fairness, proposed by~\cite{kleindessner2019fair}, aims to minimize the $k$-center cost function and guarantee a fair representation of groups in the selected centers. Their notion is essentially $k$-center under partition matroid constraint. As mentioned earlier in the paper, our approach studies a generalization of this problem, facility location with $\ell_p$-norm cost under matroid constraint, as a subroutine. 
    Recently,~\cite{jones2020fair} designed a $3$-approximation algorithm for the fair $k$-center with balanced centers that runs in time $O(nk)$. We remark that other clustering objective functions, in particular $k$-median, have been studied extensively under the partition matroid constraint, and more generally matroid constraint too~\citep{hajiaghayi2010budgeted,krishnaswamy2011matroid,charikar2012dependent,chen2016matroid,swamy2016improved,krishnaswamy2018constant}. A similar notion has been studied for the related {\em nearest neighbor} problem~\citep{har2019near,aumuller2020fair,aumuller2021sampling}.
    \item{\bf With balanced cost.} Recently,~\cite{abbasi2020fair,ghadiri2020fair} independently proposed a notion of fair clustering, called {\em socially fair} clustering, in which the goal is to minimize the maximum cost that any group in the input pointset incurs.~\cite{makarychev2021approximation} designed an algorithm that improves upon the $O(\ell)$-approximation of~\citep{abbasi2020fair,ghadiri2020fair} for socially fair $k$-means and $k$-median and achieves $O(\log\ell / \log\log \ell)$-approximation where $\ell$ denotes the number of different groups in the input. The objective of socially fair clustering was previously studied in the context of {\em robust} clustering~\citep{anthony2010plant}. In this notion of robust algorithms, a set $\sS$ of possible input scenarios are provided in the input and the goal is to output a solution which is simultaneously ``good'' for all scenarios.~\cite{anthony2010plant} gave an $O(\log n + \log \ell)$-approximation for robust $k$-median and a set of related problems in this model on an $n$-point metric space. Moreover,~\cite{bhattacharya2014new} showed that it hard to approximate robust $k$-median by a factor better than $\Omega(\log \ell / \log \log \ell)$ unless $\mathrm{NP} \subseteq \bigcap_{\delta>0} \mathrm{DTIME}(2^{n^\delta})$ which essentially shows that the approximation guarantee of~\cite{makarychev2021approximation} for socially fair $k$-median is tight up to a constant factor. Very recently,~\cite{chlamtavc2022approximating} studied a more general notion of $(p,q)$-fair clustering which captures socially fair clustering as a special case and its objective smoothly interpolates between the objectives of $k$-clustering with $\ell_p$-cost and socially fair clustering with $\ell_p$-cost.
\end{itemize}
Inspired by the recent work on the fair allocation of public resources,~\cite{chen2019proportionally} introduced a notion of fair $k$-clustering as follows. Given a set of $n$ points in a metric space, a set of $k$ centers $C$ is fair if no subset of $n/k$ points $M\subset P$ have the incentive to assign themselves to a center outside $C$; there is no point $c'$ outside $C$ such that the distance of all points in $M$ to $c'$ is smaller than their distance to $C$.~\cite{chen2019proportionally,micha2020proportionally} devised approximation algorithms for several variants of this notion of fair clustering.   
%\avnote{I need to add Chen et al paper in proportional fair clustering}
\paragraph{Clustering with individual fairness constraint.}~\cite{kleindessner2020notion} studied a different individual notion of fairness in which a point is treated fairly, if its cluster is ``stable''---the average distance of the point to its own cluster is {\em not larger} than the average distance of the point to the points of any other cluster. They proved that in a  general metric, even deciding whether such a fair $2$-clustering exists is NP-hard. Further, they showed that such fair clustering exists in one dimensional space for any values of $k$.  

\cite{anderson2020distributional} proposed a distributional individual fairness for $\ell_p$-norm clustering where each point has to be mapped to a selected set of centers according to a probability distribution over the centers and then the goal is to minimize the expected $\ell_p$-norm clustering cost while ensuring that ``similar'' points have ``similar'' distributional assignments to the centers.  

\paragraph{Connections to priority $k$-center.} The proposed notion of individual fairness by~\cite{jung2019center} which we consider in this paper was also studied in different contexts such as {\em priority clustering} (or clustering with usage weights)~\citep{plesnik1987heuristic} and metric embedding~\citep{chan2006spanners,charikar2010local}. We remark that all of theses results imply a $2$-approximation algorithm for fair clustering. However, as in the work of~\cite{jung2019center}, all these result only find an approximately fair clustering and does not minimize any global clustering cost functions such as $k$-center, $k$-median and $k$-means.

Parallel and independent to this work,~\cite{chakrabarty2021better} presented an $(8, 2^{1 + 2/p})$-biceriteria approximation algorithm for the individually fair $k$-clustering with $\ell_p$-norm cost problem. In particular, their approach achieves $(8,8)$, $(8,4)$ and $(8, 2+\eps)$ (for arbitrarily small values of $\eps$) for fair $k$-median, fair $k$-means and fair $k$-center. Table \ref{fig:table:results} provides an overview of the existing results.
%Note that \cite{chakrabarty2021better} don't explicitly claim $k$-center as a result, however, it is implied by the fact that they have a $(2^{1+\frac{2}{p}},8)$-approximation algorithm for fair $k$-clustering for the general $\ell_p$ cost and the fact that for all $x\in \mathbb{R}^{n}$, $\left\|x\right\|_\infty \leq \left\|x\right\|_{\log n} \leq 2\left\|x\right\|_\infty$.

\begin{table*}[t]
\centering
    \begin{tabular}{lrrrrrr}\toprule
        &\multicolumn{2}{c}{\textbf{$k$-median}}&\multicolumn{2}{c}{\textbf{$k$-means}}&\multicolumn{2}{c}{\textbf{$k$-center}}
        \\\cmidrule(r){2-3}\cmidrule(r){4-5}\cmidrule(r){6-7}   
        &Cost&Fairness&Cost&Fairness&Cost&Fairness\\\midrule
        \cite{mahabadi2020individual}    & $84$ & $7$
                & $O(1)$ & $7$
                & $O(\log(n))$ & $7$\\
        \cite{chakrabarty2021better}   & $8$ & $8$
                & $4$ & $8$
                &  $2+\eps$ & $8$ \\
        Ours     & $7.081+\epsilon$ & $3$ 
                & $16+\eps$ & $3$
                & $3+\eps$ & $3$
        \\\bottomrule
    \end{tabular}
    \caption{Comparison of the results, where $\eps>0$ is an arbitrarily small variable.}\label{fig:table:results}
\end{table*} 

We remark that \cite{chakrabarty2021better} implemented their algorithm and used a parameterized sparsification technique to configure the trade-off between the fairness/cost objective and computational complexity. 

\paragraph{Better bounds for fair $k$-median objective.} As mentioned, we can employ the improved result of~\cite{krishnaswamy2018constant} and obtain $(7.081+\eps, 3)$-approximation for fair $k$-median which strictly improves the recent bounds of~\citep{chakrabarty2021better}. Another improved bound related to matroid $k$-median is the recent result of~\citet{gupta2021structural}. However, we cannot apply~\citep{gupta2021structural} (in a black-box manner) and get a better approximation factor. For our application (i.e., in our partition matroids, we may have $\Theta(k)$ parts) their algorithms only guarantee a \textit{pseudo-approximation} guarantee---which assign fractional values to $O(k)$ facilities/centers---with approximation ratio $6.387 + \eps$. So, while via some pre- and post-processing they can derive a true $6.387$-approximation for $k$-median with knapsack constraint and $k$-median with outliers from their pseudo-approximation, their approach does not imply such approximation algorithms for the general setting with $\Theta(k)$ knapsack constraints which we need for the case of $k$-median with the partition matroid constraint.
\section{Preliminaries}\label{sec:prelim}
\begin{definition}[approximate triangle inequality]
A distance function $d$ satisfies the $\alpha$-approximate triangle inequality over a set of points $P$, if $\forall u,v,w \in P, d(u,w)\leq \alpha \cdot (d(u,v)+d(v,w))$
\end{definition}

\begin{observation}\label{obser:gen-triangle-ineq}
Let $(P,d)$ be a metric space. Then,
\begin{enumerate}[leftmargin=*]
    \item\label{item:lambda} For any $\lambda>0, p\ge 1$, The distance function $d(\cdot,\cdot)^p$ satisfies 
    \begin{align}\label{ineq:gen-triangle-ineq}
        d(u,v)^p 
        &\leq (1+\lambda)^{p-1} d(u,w)^p \nonumber\\
        &+ \Big(\frac{(1+\lambda)}{\lambda} \Big)^{p-1} d(w,v)^p.
    \end{align}
    In particular, for $p\ge 1$, the function $d(\cdot,\cdot)^p$ satisfies the $\alpha_p$-approximate triangle inequality for $\alpha_p=2^{p-1}$.
    \item For any $\lambda>0, p\ge 1$, The distance function $d(\cdot,\cdot)^p$ satisfies
    \begin{align}\label{ineq:2hop-triangle-ineq}
        d(u,v)^p \leq 3^{p-1} \cdot (d(u, w)^p + d(w, z)^p + d(z,v)^p).
    \end{align}
\end{enumerate}
\end{observation}
\begin{proof}
Note that Eq.~\eqref{ineq:gen-triangle-ineq} holds using Lemma~\ref{lem:p-norm-ineq} and the fact that $d(u,v) \leq d(u,w) + d(w,v)$. Furthermore, by setting $\lambda =1$, $d(\cdot, \cdot)^p$ satisfies the $\alpha_p$-approximate triangle inequality for $\alpha_p =2^{p-1}$.

Next, to prove the second inequality, Eq.~\eqref{ineq:2hop-triangle-ineq}, note that $d(u,v) \leq d(u, w) + d(w,z) + d(z,v)$. Then, by an application of Lemma~\ref{lem:p-norm-ineq} with $\lambda = 2$, we get Eq.~\eqref{ineq:2hop-triangle-ineq}. 
\end{proof}
%\begin{claim}\label{ApproxInequality}
%Let $(P,d)$ be a metric space. The distance function $d(\cdot,\cdot)^p$ satisfies the $\alpha_p$-approximate triangle inequality for $\alpha_p=2^{p-1}$.
%\end{claim}
\section{A Reduction from Fair Clustering to Facility Location Under Matroid Constraint}\label{sec:reduction}
%\avnote{Please proofread Section 2 and 3 carefully.}
In this section, we provide a reduction from our fair clustering problem to the problem of facility location under matroid constraint.
We use $P$ to denote the set of points in the input. We use $C\subseteq P$ to denote the subset of points that serve as centers. 
Throughout the paper, we consider the general $\ell_p$-norm cost function which is defined as bellow:  
\begin{equation}\label{eqn:ClusteringCost}
\cost(P, C; p) := \sum_{v\in P}d(v,C)^p,
\end{equation}
%\avnote{TODO for Mustafa: make sure cost function is used consistently across the paper.}
where $d(v,C)$ denotes the distance of $v$ to its closest center in $C$, i.e. $d(v,C) := \min_{c\in C}d(v,c)$. This cost function generalizes the cost functions corresponding to $k$-median ($p=1$), $k$-means ($p=2$) and $k$-center ($p=\infty$).\footnote{Note that for all $x\in \mathbb{R}^{n}$, $\left\|x\right\|_\infty \leq \left\|x\right\|_{\log n} \leq 2\left\|x\right\|_\infty$. This implies that setting $p=\log n$, the objective function $2$-approximates the objective of $k$-center.}.

Next, we set up some notations to formally define the notion of fairness we consider in this paper. For every point $v\in P$, we use $B(v,r) := \{u\in P: d(v,u) \leq r\}$ to denote the subset of all points in $P$ that are at distance at most $r$ from $v$ and call it the {\em ball} around $v$ with radius $r$.

\begin{definition}[fair radius]\label{def:FairRadius}
Let $P$ be a set of points of size $n$ in a metric space $(X,d)$ and let $\ell\in [n]$ be a parameter. For every point $v\in P$, we define the fair radius $r_{\ell}(v)$ to be the minimum distance $r$ such that $|B(v,r)| \geq n/\ell$. When $\ell=k$, we drop the subscript and use $r(\cdot)$ to denote $r_k(\cdot)$. 
%In the case $\ell=k$, we succinctly denote it as $\alpha$-fair k-clustering.
%More formally, the fair radius is defined as: $r(x)\coloneqq \min(r:|B(x,r)|\geq \sfrac{n}{k})$
\end{definition}
Here, we consider a more general variant of the problem studied by~\cite{mahabadi2020individual} as follows. 
\begin{definition}[$\alpha$-fair $k$-clustering]\label{def:alpha-fair-k-clustering}
Let $P$ be a set of points of size $n$ in a metric space $(X,d)$. A set of $k$ centers $C$ is $\alpha$-fair, if for every point $x\in P$, $d(x, C) \leq \alpha r_k(x)$. %More formally, if all $x \in P$ satisfy $d(x,S)\leq \alpha r(x)$ for $S\subseteq P$ with $|S|\leq k$, then $S$ is an $\alpha$-fair $k$-clustering.
\end{definition}
Note that since even deciding whether a given set of points $P$ has a fair clustering or not is NP-hard~\cite{jung2019center} (i.e., $\alpha=1$), the best we can hope for is a {\em bicriteria approximation} guarantee. 

\begin{definition}[bicriteria approximation]\label{def:bicrit-approx}
An algorithm is a {\em $(\beta, \gamma)$-bicriteria approximation} for $\alpha$-fair $k$-clustering w.r.t.~a given $\ell_p$-norm cost function if for any set of points $P$ in the metric space $(X,d)$ the solution $\sol$ returned by the algorithm on $P$ satisfies the following properties:
\begin{enumerate}[leftmargin=*]
    \item $\cost(P,\sol; p) \leq \beta \cdot \cost(P,\opt; p)$ where $\opt$ denotes the optimal set of $k$ centers for $\alpha$-fair $k$-clustering of $P$ w.r.t.~the given $\ell_p$-norm cost function. In particular, $\cost(P,\opt;p)=\infty$ if an $\alpha$-fair $k$-clustering does not exist for $P$. \label{def:bicrit-approx-cost}
    \item $\sol$ is a $(\gamma\cdot \alpha)$-fair $k$-clustering of $P$. \label{def:bicrit-approx-fair}
\end{enumerate}
\end{definition}

%Now we can formally define our notion of fair clustering as a simple optimization problem.
%\begin{definition}[\FairClust]
%Let $P$ be a set of points of size $n$ in a metric space $(X,d)$ and let $k$ with $1\leq k\leq n$ be an integer. A solution to the \FairClust~problem is a $(2\alpha)$-fair $k$-clustering $S$ that minimizes the clustering cost $\sum_{x\in P}d(x,S)^p$.
%\end{definition}

Our main technical contribution is the following.
\begin{theorem}\label{thm:main}
For any $\eps>0$, $\alpha\geq 1$ and $p>1$, there exists a polynomial time algorithm that computes a $(16^p+\eps, 3)$-bicriteria approximate solution for $\alpha$-fair $k$-clustering with $\ell_p$-norm cost. Moreover, for $p=1$, which denotes the $\alpha$-fair $k$-median problem, there exists a $(7.081+\eps,3)$-bicriteria approximation algorithm.
\end{theorem}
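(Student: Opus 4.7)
The plan is to reduce $\alpha$-fair $k$-clustering to an instance of facility location with $\ell_p$-norm cost under a partition matroid constraint and then apply the $16^p$-approximation of Result~\ref{result:lp-facility-matroid} as a black box.

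First I would compute the fair radii $r(v) := r_k(v)$ for every $v \in P$, sort the points in non-decreasing order of $r(\cdot)$, and greedily pick a pivot set $P' \subseteq P$: process the points in this order and add $v$ to $P'$ iff no previously chosen pivot $u \in P'$ lies within distance $2\alpha r(v)$ of $v$. Since $r(u) \le r(v)$ for every pivot $u$ added before $v$, this rule enforces two structural properties: (i) the balls $\{B(p,\alpha r(p))\}_{p\in P'}$ are pairwise disjoint, and (ii) every non-pivot $v$ admits a pivot $\pi(v) \in P'$ with $r(\pi(v)) \le r(v)$ and $d(v,\pi(v)) \le 2\alpha r(v)$. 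Using (i), I would define a partition matroid $\sM$ on $P$ with part $G_p := B(p,\alpha r(p))$ of capacity $1$ for each $p \in P'$ and a catch-all part $G_0 := P \setminus \bigcup_p G_p$ of capacity $k - |P'|$. If $|P'| > k$, the disjoint pivot balls already cannot all be served and no $\alpha$-fair clustering exists.

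Next, I would build a facility location instance with client set $P$, facility set $P$, zero opening costs, connection costs $d(i,j)^p$, and matroid constraint $\sM$, and invoke Result~\ref{result:lp-facility-matroid} to obtain a matroid-independent $C \subseteq P$ whose cost is at most $16^p$ times the optimum of this matroid instance. Matroid feasibility (combined with zero opening cost and the fact that $p \in G_p$ is always a candidate, so any non-basis solution can be extended for free) guarantees a center $c_p \in C \cap B(p,\alpha r(p))$ for every pivot $p$. For any pivot $v=p$ we then have $d(v,C) \le \alpha r(v)$; for any non-pivot $v$, property (ii) and the triangle inequality give
\begin{align*}
d(v, C) \le d(v, \pi(v)) + d(\pi(v), c_{\pi(v)}) \le 2\alpha r(v) + \alpha r(\pi(v)) \le 3\alpha r(v),
\end{align*}
which immediately establishes the $3$-fairness guarantee.

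For the cost bound, the plan is to exhibit a matroid-feasible witness $\widetilde{C}$ derived from an optimal fair clustering $C^*$: since each disjoint $G_p$ must contain at least one center of $C^*$ (otherwise the pivot $p$ would not be served), pick one representative $c'_p$ from each $G_p \cap C^*$ and keep every center of $C^*$ that lies in $G_0$; the remaining slots of $G_0$ can be filled with arbitrary points from $G_0$. Every client whose optimal center was discarded is rerouted to the representative in the same $G_p$, incurring extra travel $\le 2\alpha r(p)$, and Lemma~\ref{lem:p-norm-ineq} applied with a small slack $\lambda > 0$ yields $\cost(P,\widetilde{C};p) \le (1+\eps')\opt$. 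Composing with the $16^p$-approximation of the matroid subroutine produces the claimed $(16^p+\eps,3)$ guarantee. For $p=1$ the triangle inequality introduces no $(1+\lambda)^{p-1}$ blowup, and plugging in the classical $8$-approximation for matroid median in place of the $16^p$ subroutine yields the improved $(8+\eps,3)$ bound.

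The main obstacle is the cost analysis: while matroid feasibility of $\widetilde{C}$ is combinatorially easy, controlling the residual term $((1+\lambda)/\lambda)^{p-1}(2\alpha r(p))^p$ produced by Lemma~\ref{lem:p-norm-ineq} is delicate. The key insight is the density-based interpretation of pivots -- each $p \in P'$ has at least $n/k$ points within $r(p)$ in the original pointset, so $r(p)^p$ can be charged against the $\ell_p$-contribution of these $n/k$ witnesses to $\opt$ using fairness of $C^*$. This charging is what establishes the conceptual link to the balanced-centers group-fairness framework that the introduction highlights.
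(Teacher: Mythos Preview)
Your pivot construction and the $3\alpha$-fairness argument match the paper exactly, and the overall plan of reducing to matroid facility location is right. The gap is in the cost analysis, and it stems from one concrete choice: you set the catch-all part $G_0 := P \setminus \bigcup_{p\in P'} G_p$, so a point lying inside some critical ball can \emph{only} be selected as the unique representative of that ball and never as one of the $k-|P'|$ ``free'' centers. This makes your matroid strictly more restrictive than it needs to be, and the witness $\widetilde{C}$ you build from $\opt$ can be arbitrarily bad. Take $k=2$, two well-separated tight subclusters $A_1,A_2$ of $n/2$ points each at mutual distance $D$, with all fair radii $\approx \eps$ and $\alpha$ large enough that $2\alpha\eps > D$. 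The greedy produces a single pivot $p\in A_1$, so $G_p \supseteq A_1\cup A_2 = P$ and $G_0=\emptyset$. The optimal fair solution places one center in each subcluster, both inside $G_p$, with cost $\Theta(n\eps^p)$; but every independent set of your matroid is a singleton, forcing cost $\Theta(n D^p)$. No amount of rerouting within $G_p$ or ``filling slots in $G_0$'' helps. Your proposed charging of the residual $(2\alpha r(p))^p$ to the $n/k$ density witnesses near $p$ also fails here: those witnesses sit at distance $O(\eps)$ from a center of $\opt$ and contribute essentially nothing to $\opt$, so there is nothing to charge against.

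The paper sidesteps this entirely by a duplication trick: it creates, for every point in a critical ball $B_i$, \emph{two} facility copies, one living in the capacity-$1$ part $B_{F,i}$ and one living in the capacity-$(k-m)$ catch-all part $P_F$ (which is a copy of all of $P$, not just $P\setminus\bigcup_i B_i$). Distinct copies of the same underlying point are placed at a tiny metric distance $\Theta(\eps^{1/p}\delta)$ so that $d'$ remains a metric. Now $\opt$ maps to a matroid-feasible $\opt_F$ with \emph{no} center discarded: one designated center per ball goes to $B_{F,i}$, and every other center of $\opt$ (including additional ones inside critical balls) goes to its copy in $P_F$. The only cost discrepancy is the perturbation, which is at most $k\cdot\frac{\eps(n-k)}{\beta k}\delta^p \le \frac{\eps}{\beta}\cdot\cost(P,\opt;p)$ because $\cost(P,\opt;p)\ge (n-k)\delta^p$. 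This is where the $+\eps$ in $16^p+\eps$ (and $8+\eps$) actually comes from, not from a Lemma~\ref{lem:p-norm-ineq} slack. If you change your $G_0$ to contain (copies of) all of $P$ and add the $\eps$-perturbation to keep the metric honest, your argument goes through without any density-based charging at all.
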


The rest of the paper is to show the above theorem. To satisfy the fairness constraint, our approach relies on the existence of a special set of regions, called \textit{critical regions}.

\begin{definition}[critical regions]\label{def:CritBalls}
Let $P$ be a set of points in a metric space $(X,d)$ and let $\alpha$ be the desired fairness approximation.
%a constant for which there exists an $\alpha$-fair $k$-clustering for points $P$ using at most $k$ centers.
A set of balls $\sB=\{B(c_1,\alpha \cdot r(c_1)),\ldots,B(c_m, \alpha \cdot r(c_m))\}$ where $m\leq k$
%with $m\leq k$ 
is called {\em critical regions}, if they satisfy the following properties:
\begin{enumerate}[leftmargin=*]
    \item For every $x\in P: d(x,\{c_1,\ldots, c_m\})\leq 2\alpha \cdot r(x)$ \label{def:critBalls:prop1}
    \item For any pair of centers $c_i,c_j$, $d(c_i,c_j) > 2\alpha \cdot \max\{r(c_i),r(c_j)\}$; in other words, critical regions are disjoint.
 \label{def:critBalls:prop2}
%    \item Critical areas do not overlap, i.e. for any pair $C_i,C_j\in \sC: C_i\cap C_j=\emptyset$.
\end{enumerate}
%Note that $m$ may be larger than $k$.
\end{definition}

We now provide an algorithm that given a set of points $P$ and a fairness parameter $\alpha$, returns a set of critical regions. Our approach is similar to the methods proposed by~\cite{mahabadi2020individual} which is a slight modification of the greedy approach of~\citep{chan2006spanners,charikar2010local}. 
\begin{algorithm}[h]
	\begin{algorithmic}[1]
		\STATE {\bfseries Input:} Fairness parameter $\alpha$
		\STATE \textbf{initialize} covered points $Z\gets \emptyset$, centers of the selected balls $\sC \gets \emptyset$
		%\STATE \textbf{initialize} Current areas 
		\WHILE{$Z \neq P$}
		\STATE $c \gets \argmin_{x \in P\setminus Z} r(x)$ \label{alg:AlphaCluster:NonDecreasingR}
		\STATE $\sC \gets \sC \cup \{c\}$\label{alg:AlphaCluster:ExpandCritArea}
		%\STATE Expand critical areas $\sC \gets \sC \cup (\{(c,B(c,r(c))\setminus Z)\})$. 
		\STATE $Z \gets Z \cup \{x\in P\setminus Z| d(x,c) \leq 2\alpha\cdot r(x)\}$ \label{alg:AlphaCluster:CoverPoint}
		%\STATE Update the set of covered points $Z \gets Z \cup \{x \in P: d(x,c) \leq\alpha \cdot r(x)\}$.\label{alg:AlphaCluster:CoverPoint}
		\ENDWHILE
		\RETURN $\{B(c,\alpha r(c)):c\in \sC\}$
    \end{algorithmic}
	\caption{outputs a set of critical regions for given parameters $\alpha, k$.}
	\label{alg:AlphaCluster}
\end{algorithm}

%\begin{claim}\label{clm:r(cx)<=r(x)}
%Let $x\in P$ and let $c_x$ be the first center added to $\sC$ that marks $x$ as covered by Algorithm~\ref{alg:AlphaCluster}. Then, $r(c_x)\leq r(x)$.
%\end{claim}
%\begin{proof}
%The claim follows from the fact that we add centers in a non-decreasing order of their fair radius in line~\ref{alg:AlphaCluster:NonDecreasingR} in Algorithm~\ref{alg:AlphaCluster}. Note that the claim holds trivially for $c_x=x$.
%\end{proof}

\begin{lemma}\label{lem:critical-region}
Let $P$ be a set of points of size $n$ in a metric space $(X,d)$, let $k$ be a positive integer and let $\alpha$ be a parameter denoting the desired fairness guarantee. Then, Algorithm~\ref{alg:AlphaCluster} returns a set of at most $k$ critical regions.
\end{lemma}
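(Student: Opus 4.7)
The plan is to verify the two defining conditions of Definition~\ref{def:CritBalls} separately and then bound $|\sC| \le k$ by a packing argument on the fair-radius balls.

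Condition~\ref{def:critBalls:prop1} follows directly from the termination rule of Algorithm~\ref{alg:AlphaCluster}: the while-loop exits only when $Z = P$, and a point $x$ is added to $Z$ in line~\ref{alg:AlphaCluster:CoverPoint} only when the center $c$ just selected satisfies $d(x, c) \le 2\alpha \cdot r(x)$. Hence every $x \in P$ admits some $c_i \in \sC$ with $d(x, c_i) \le 2\alpha \cdot r(x)$.

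For Condition~\ref{def:critBalls:prop2}, I would argue from two properties of the greedy selection in line~\ref{alg:AlphaCluster:NonDecreasingR}. Consider two centers $c_i, c_j \in \sC$ with $c_i$ added strictly before $c_j$. First, since line~\ref{alg:AlphaCluster:NonDecreasingR} always picks the uncovered point with smallest fair radius, $r(c_i) \le r(c_j)$. Second, $c_j$ must have remained uncovered at the end of the iteration that introduced $c_i$ (otherwise it could never be selected later), so the test in line~\ref{alg:AlphaCluster:CoverPoint} failed for $x = c_j$ and $c = c_i$, giving $d(c_i, c_j) > 2\alpha \cdot r(c_j) = 2\alpha \cdot \max\{r(c_i), r(c_j)\}$.

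To bound $|\sC|$, combine the previous inequality with $\alpha \ge 1$ to obtain $d(c_i, c_j) > r(c_i) + r(c_j)$ for every distinct pair, so the (unexpanded) fair-radius balls $\{B(c_i, r(c_i))\}_{c_i \in \sC}$ are pairwise disjoint. By Definition~\ref{def:FairRadius}, each contains at least $n/k$ points of $P$, so summing gives $|\sC| \cdot n/k \le |P| = n$, i.e., $|\sC| \le k$. The only subtle point in the whole argument is recognizing that the $2\alpha$-factor used by the algorithm in its covering test (rather than, say, $\alpha$) is exactly what is needed to make the smaller fair-radius balls $B(c_i, r(c_i))$ pairwise disjoint; without this slack, the volume argument giving $|\sC| \le k$ would not go through. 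The rest is bookkeeping.
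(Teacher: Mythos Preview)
Your proof is correct and follows essentially the same approach as the paper: the termination condition gives Property~\ref{def:critBalls:prop1}, the combination of the non-decreasing fair-radius order with the ``still uncovered'' test gives Property~\ref{def:critBalls:prop2}, and the packing argument on the balls $B(c_i,r(c_i))$ yields $|\sC|\le k$. Your explicit invocation of $\alpha\ge 1$ to deduce $d(c_i,c_j)>r(c_i)+r(c_j)$ from Property~\ref{def:critBalls:prop2} makes the disjointness step slightly more transparent than in the paper, but otherwise the arguments coincide.
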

\begin{proof}
First we show that the set of centers returned by the algorithm satisfies property~\eqref{def:critBalls:prop1} of the critical regions. For every points $x\in P$ let $c_x$ denote the first center added to $\sC$ such that $x\in B(c_x, \alpha\cdot r(c_x))$. %At the time $c_x$ is picked, its fair radius is smaller than the fair radius of all points in $P\setminus Z$ and in particular $x$. 
Hence, $d(x, \sC) \leq d(x, c_x) \leq  2\alpha\cdot r(x)$ where the last inequality follows from the fact that $c_x$ marks $x$ as covered.

Next, consider the iteration of the algorithm in which a center $c$ is added to $\sC$. Since $c$ is an uncovered point, its distance to any other center $c'$ that is already in $\sC$ is more than $2\alpha\cdot r(c) = 2\alpha \cdot \max\{r(c),r(c')\}$ where the equality follows from the fact that centers are picked in a non-decreasing order of their fair radius. Hence, for any pair of centers in $\sC$, property~\eqref{def:critBalls:prop2} holds.

Finally, by property~\eqref{def:critBalls:prop2}, balls of radius $r(.)$ around the centers present in $\sC$ are disjoint. Moreover, by the definition of fair radius, each of the balls $\{B(c, r(c))\}_{c\in \sC}$ contains at least $n/k$ points. Hence, the number of critical regions is at most $k$.
%since the balls contain at least $n/k$ points within their fair radius, there may not be more than $\frac{n}{\sfrac{n}{k}}=k$ critical areas for $n$ points.
\end{proof}
%Note that if $\alpha<2$, then Algorithm~\ref{alg:AlphaCluster} may return more than $k$ critical areas. We therefore need the binary search Algorithm~\ref{alg:BinSearchAlpha} to verify that the number of critical areas returned by Algorithm~\ref{alg:AlphaCluster} does not exceed $k$.
%Furthermore, for an arbitrary but fixed set of points $P$ and $k$, the number of critical areas returned by Algorithm~\ref{alg:AlphaCluster} is not necessarily monotonic with regard to $\alpha$ and therefore the binary search Algorithm~\ref{alg:BinSearchAlpha} does not guarantee to return a global minimum for $\alpha$.
%Furthermore, the depth of the binary search in algorithm Algorithm~\ref{alg:BinSearchAlpha} is determined by the input parameter $t$. 

%\begin{algorithm}[h]
%	\begin{algorithmic}[1]
%		\STATE {\bfseries Input:} Precision parameter $t$, Set of points $P$, number of desired cluster centers $k$.
%		\STATE \textbf{initialize} $low \gets 1$
%		\STATE \textbf{initialize} $\alpha \gets 2$
%		\FOR{$i \gets 1,2, \ldots, t$}
%		\STATE $mid \gets (low+\alpha)/2$
%		\IF{$|AlphaCluster(mid,P,k)|\leq k$}
%		\STATE $\alpha \gets mid$
%		\ELSE
%		\STATE $low \gets mid$
%		\ENDIF
%		\ENDFOR
%		\RETURN $\alpha$ and $AlphaCluster(\alpha,P,k)$
%    \end{algorithmic}
%	\caption{Returns at most $k$ critical areas and minimizes $\alpha$}
%	\label{alg:BinSearchAlpha}
%\end{algorithm}

As shown in~\citep{mahabadi2020individual}, the benefit of a set of critical regions is that it reduces the problem of finding an $\alpha$-fair clustering to a clustering problem with lower bound requirements, i.e., at least one center must be selected from each critical region.
We say that a set of cluster centers $C$ is \textit{feasible} w.r.t.~a set of critical regions $\sB$, if for every ball $B\in \sB$, $|C\cap B|>0$.

%\begin{corollary}\label{cor:opt-feasible-B}
%Let $P$ be a set of points of size $n$ in a metric space $(X,d)$. An optimal $\alpha$-fair $k$-clustering is feasible w.r.t. any set of critical balls $\sB$.
%\end{corollary}
%\begin{proof}
%Suppose there exists an an optimal $\alpha$-fair $k$-clustering $OPT$ that is not feasible w.r.t. the set of critical balls $\sB$.
%Then, there exists a ball $B=(c,\alpha \cdot r(c)) \in \sB$ that has no center from $OPT$ in it, i.e. $|B\cap OPT|=0$. Since, by definition of balls, any point $p\in P$ with $d(c,p) \leq \alpha \cdot r(c)$ is contained in $B$, the distance of point $c$ the its closest center in $OPT$ is more than $\alpha \cdot r(c)$, i.e. $d(c,OPT)>\alpha\cdot r(c)$ which contradicts the $\alpha$-fairness constraint from definition \ref{def:alpha-fair-k-clustering}.
%\end{proof}

\begin{lemma}\label{lem:fair-sol-critical}
Let $\sB=\{B(c_1,\alpha\cdot r(c_1)),\ldots,B(c_m,\alpha\cdot r(c_m))\}$ be a set of critical areas obtained from Algorithm~\ref{alg:AlphaCluster} for a set of points $P$ with parameters $k$ and $\alpha$.
Then, any set of centers $S$ that is feasible w.r.t.~$\sB$ is $(3\alpha)$-fair.
\end{lemma}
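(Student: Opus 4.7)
The plan is to show that any feasible $S$ satisfies $d(x,S)\le 3\alpha\, r(x)$ for every $x\in P$. The natural route is to combine the two ingredients we already have: (i) feasibility of $S$ with respect to $\sB$, which guarantees, for each critical region $B(c_i,\alpha\cdot r(c_i))$, the existence of some $s_i\in S$ with $d(s_i,c_i)\le \alpha\cdot r(c_i)$, and (ii) property~\eqref{def:critBalls:prop1} of the critical regions, which says that every point $x\in P$ is within $2\alpha\cdot r(x)$ of some $c_i$. A single application of the triangle inequality would then give $d(x,S)\le d(x,c_i)+d(c_i,s_i)\le 2\alpha\,r(x)+\alpha\,r(c_i)$, so the whole argument reduces to showing $r(c_i)\le r(x)$ for the appropriate index $i$.

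First I would fix $x\in P$ and pick, not an arbitrary $c_i$ realizing property~\eqref{def:critBalls:prop1}, but specifically the center $c_x\in \sC$ at the iteration of Algorithm~\ref{alg:AlphaCluster} in which $x$ first becomes covered (i.e.\ the unique $c\in \sC$ that executes line~\ref{alg:AlphaCluster:CoverPoint} with $x$). Because the covering condition there is exactly $d(x,c_x)\le 2\alpha\cdot r(x)$, this immediately handles the first term. Crucially, the center-selection rule on line~\ref{alg:AlphaCluster:NonDecreasingR} picks an uncovered point of minimum fair radius; since $x$ was uncovered at the start of that iteration, this gives the key inequality $r(c_x)\le r(x)$.

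Now invoke feasibility: choose any $s_x\in S\cap B(c_x,\alpha\cdot r(c_x))$, which is nonempty by hypothesis. By the triangle inequality in the underlying metric,
\begin{align*}
d(x,S)\ \le\ d(x,s_x)\ \le\ d(x,c_x)+d(c_x,s_x)\ \le\ 2\alpha\cdot r(x)+\alpha\cdot r(c_x)\ \le\ 3\alpha\cdot r(x),
\end{align*}
which is exactly the $(3\alpha)$-fairness guarantee in Definition~\ref{def:alpha-fair-k-clustering}.

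There is really no obstacle: the argument is a two-term triangle inequality. The only subtlety — and the reason the statement is about the output of Algorithm~\ref{alg:AlphaCluster} rather than about an arbitrary family satisfying Definition~\ref{def:CritBalls} — is that one must use the algorithm-specific monotonicity $r(c_x)\le r(x)$, which is not part of the critical-regions axioms themselves. Selecting $c_x$ as the covering center (rather than the nearest center in $\sC$) is what makes this monotonicity available, and the rest of the proof is routine.
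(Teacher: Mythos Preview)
Your proposal is correct and follows essentially the same argument as the paper: pick $c_x$ to be the center at the iteration in which $x$ is first covered, use the covering condition $d(x,c_x)\le 2\alpha\, r(x)$, the greedy selection rule to get $r(c_x)\le r(x)$, and feasibility of $S$ to find $s_x\in S\cap B(c_x,\alpha\, r(c_x))$, then apply the triangle inequality. Your observation that the monotonicity $r(c_x)\le r(x)$ is algorithm-specific (not implied by Definition~\ref{def:CritBalls} alone) is exactly the subtlety the paper relies on.
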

\begin{proof}
Let $S$ be a set of cluster centers that is feasible w.r.t.~$\sB$. 
%For let $c_x$ be the center with area $C_x$ that marked $x\in P$ as covered in line~\ref{alg:AlphaCluster:CoverPoint} of Algorithm~\ref{alg:AlphaCluster} and let $s_x\in (S\cap C_x)$ be the point in the critical area represented by $c_x$.
For every point $x\in P$ let $c_x$ denote the first center picked by Algorithm~\ref{alg:AlphaCluster} such that $x\in B(c_x, \alpha\cdot r(c_x))$. Moreover, let $s_x$ denote the center in $S$ such that $s_x \in B(c_x, \alpha\cdot r(c_x))$. Then, for any point $x\in P$:
\begin{align*}
 d(x, s_x) 
 \leq d(x, c_x) + d(c_x, s_x) 
 \leq 2\alpha\cdot r(x) + d(c_x, s_x)
 \leq 2\alpha\cdot r(x) + \alpha\cdot r(c_x)
 \leq 3\alpha\cdot r(x),
\end{align*}
where the first inequality follows from the triangle inequality, the second inequality follows from the property~\eqref{def:critBalls:prop1} of critical regions, the third inequality follows since $s_x\in B(c_x, \alpha\cdot r(c_x))$ and  the last inequality follows since centers are added in a non-decreasing order of their fair radius in line~\ref{alg:AlphaCluster:NonDecreasingR} of Algorithm~\ref{alg:AlphaCluster}; $r(c_x) \leq r(x)$.
\end{proof}

\paragraph{Facility location under matroid constraint.} Now we formally define the facility location problem {\em with $\ell_p$-norm cost} under matroid constraint to which we reduce the problem of fair clustering with $\ell_p$-norm cost. We remark that for our application, it suffices to solve the facility location under {\em partition matroid} constraint.

In facility location with $\ell_p$-norm cost, we are given a set of facilities $\sF$ and a set of clients $\sD$ where each facility $u$ has an opening cost of $f(u)$ and each client $v$ is assigned with a weight (or demand) $w(v)$. The cost of assigning one unit of weight (or demand) of client $v$ to facility $u$ is $d(v,u)^p$. Furthermore, we are given a matroid $\sM=(\sF, \sI)$. Then the goal is to choose a set of facilities $F$ that forms an independent set in $\sM$ and minimizes the total facility opening and client assignment cost. Formally,
\begin{align}\label{def:facility-location-matroid}
\argmin_{F\in\sI}\sum_{u \in F} f(u)+ \sum_{v\in \sD} w(v) \cdot d(v,\sF)^p
\end{align}

Next, we show a reduction from the $\alpha$-fair $k$-clustering problem to the facility location problem under matroid constraint. Then, in Section~\ref{sec:facility-lp-approx}, we generalize the result of~\cite{swamy2016improved} and devise an approximation algorithm for facility location with $\ell_p$-norm cost under matroid constraint.

\paragraph{Reduction to facility location under matroid constraint.}
Consider an instance of $\alpha$-fair $k$-clustering on a set of points $P$. Let $\sB$ be the set of critical regions of $P$ with parameters $k$ and $\alpha$ constructed via Algorithm~\ref{alg:AlphaCluster}. Then, given an instance of $\alpha$-fair $k$-clustering, Algorithm~\ref{alg:reduction} constructs an instance of facility location problem under matroid constraint. Before stating the main reduction, we show that the distance function $d'$ constructed in Algorithm~\ref{alg:reduction} is a metric distance.
\begin{lemma}\label{lem:d'-metric}
The distance function $d':(\sF\cup \sM)\times (\sF\cup \sM) \rightarrow \mathbb{R}^+$ as constructed in Algorithm~\ref{alg:reduction} constitutes a metric space.
\end{lemma}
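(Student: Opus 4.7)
The plan is to verify the four axioms of a metric for $d'$: non-negativity, the identity $d'(x,x)=0$, symmetry, and the triangle inequality. Non-negativity and symmetry are immediate from the definition, since $d'$ is assembled from the original metric $d$ (which is non-negative and symmetric) via operations in Algorithm~\ref{alg:reduction} that treat the two arguments uniformly. The identity $d'(x,x)=0$ likewise follows from the corresponding property of $d$ together with the fact that any substitution a point receives under the reduction (e.g.\ being represented by the center of its containing critical region) applies to both copies of $x$ in the same way. I would dispatch these three axioms in one short paragraph.

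The core of the proof is the triangle inequality
\[
d'(x,y) \leq d'(x,z) + d'(z,y) \qquad \forall\, x,y,z \in \sF \cup \sD,
\]
which I would establish by case analysis on whether each of $x,y,z$ lies inside some critical region from $\sB$ or outside all of them. When none of the three points lies in any critical region, the distances $d'$ coincide with $d$ on the relevant arguments and the inequality reduces to the triangle inequality for $d$. When one or more of the points lies in a critical region $B(c_i, \alpha r(c_i))$, I would unfold the definition of $d'$ from Algorithm~\ref{alg:reduction}---replacing each affected point by its representative $c_i$---and then invoke the triangle inequality for $d$ in the underlying metric space $(X,d)$. Lemma~\ref{lem:critical-region}, and in particular property~\eqref{def:critBalls:prop2} (pairwise separation of distinct critical regions), will be the main structural tool used to absorb the extra additive terms that this substitution introduces.

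The main obstacle will be the subcase in which $x$ and $y$ lie inside two distinct critical regions centered at $c_i$ and $c_j$ while $z$ lies in a third region (or outside all of them): here the left-hand side effectively becomes $d(c_i,c_j)$ while the right-hand side involves one or two other region representatives, and the two sides must be aligned correctly. I expect this to reduce, after unfolding, to the chain $c_i \to z_{\mathrm{rep}} \to c_j$ together with the triangle inequality for $d$. Finally, in the degenerate cases where two of the three points lie in the same critical region (so that one of the three $d'$-distances is zero or equals a within-region substitute), the inequality follows because the other summands are non-negative and the representative substitution is consistent on both sides. Assembling these cases completes the verification that $d'$ is a metric.
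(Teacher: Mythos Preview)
Your plan is based on a misreading of how Algorithm~\ref{alg:reduction} defines $d'$. The algorithm does \emph{not} replace a point lying in a critical region by the region's center $c_i$; there is no ``representative substitution'' anywhere in the construction. What the algorithm actually does is create several labelled copies of each original point $v\in P$ (one in $P_F$, one in $\sD$, and an extra one in $B_{F,i}$ if $v\in B_i$), and then set
\[
d'(v_x,u_y)=d(v,u)\ \text{if } v\neq u,\qquad d'(v_x,v_y)=\hat\eps\cdot\delta\ \text{if } v_x\neq v_y \text{ but } v=u,
\]
where $\hat\eps=\min\{(\tfrac{\eps(n-k)}{\beta k})^{1/p},1\}\le 1$ and $\delta=\min_{x,y\in P}d(x,y)$. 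Thus $d'$ depends only on the underlying original points, not on which critical region (if any) contains them. Your case split ``inside vs.\ outside a critical region'' and the appeal to Lemma~\ref{lem:critical-region}, property~\eqref{def:critBalls:prop2}, are therefore irrelevant to the triangle inequality for $d'$; none of the subcases you describe (e.g.\ ``the left-hand side effectively becomes $d(c_i,c_j)$'') actually arise.

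The correct argument, which the paper carries out, does a case analysis on which of the three \emph{underlying} points $u_P,v_P,w_P$ coincide. If all three are distinct, $d'$ agrees with $d$ and the triangle inequality is inherited. If, say, $u_P=w_P$ (so $d'(u,w)=\hat\eps\cdot\delta$), one only needs that either $d'(u,v)$ or $d'(v,w)$ is at least $\hat\eps\cdot\delta$; this follows because either $v_P=u_P$ (giving $d'(u,v)=\hat\eps\cdot\delta$) or $v_P\neq u_P$ (giving $d'(u,v)=d(u_P,v_P)\ge\delta\ge\hat\eps\cdot\delta$). The remaining cases are symmetric. The single structural fact driving the proof is $\hat\eps\cdot\delta\le\delta\le d(x,y)$ for any distinct $x,y\in P$, not the separation of critical regions.
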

\begin{algorithm}[h]
	\begin{algorithmic}[1]
		\STATE {\bfseries Input:} set of points $P$, target number of centers $k$, fairness parameter $\alpha$, accuracy parameter $\eps<1$, approximation guarantee of facility location under matroid constraint with $\ell_p$-norm cost $\beta \ge 1$ 
		\STATE \textbf{compute} a set of critical regions $\sB = \{B_1, \cdots, B_m\}$ via Algorithm~\ref{alg:AlphaCluster} on $(P, k, \alpha)$
		\STATE\COMMENT{Construction of facilities}
		\STATE {\bf let} $P_F = \{v_f \sep v\in P\}$ be a copy of $P$  
		\STATE{\bf let} $B_{F,i} = \{v_{f,i} \sep v\in B_i\}$ be a copy of $B_i$ {\bf for all} $B_i\in \sB$
		\STATE $\sF \leftarrow P_F \cup (\bigcup_{B_i \in \sB} B_{F, i})$ \quad \COMMENT{\emph{$\sF$ has two distinct copies of the points that belongs to a critical ball of $\sB$.}}
		
		\STATE $f(u) = 0$ \textbf{for all} $u\in \sF$ 
		\STATE\COMMENT{Construction of facilities}
		\STATE {\bf let} $\sD = \{v_c \sep v\in P\}$ be a copy of $P$ \quad \COMMENT{\emph{$\sD$ is a distinct copy of $P$.}} \label{alg:line:clients-same-points}
		\STATE $w(v) = 1$ \textbf{for all} $v\in \sD$
		
         %\quad \COMMENT{to ensure after moving co-located facilities, no two facilities collide.}
		%\STATE {\bf move} the co-located facilities such that they all located on a ball of radius $\eps \delta$ around the original point.
		
		\STATE\COMMENT{Construction of distance function $d':(\sF\cup \sD)\times (\sF\cup \sD)\rightarrow \mathbb{R}^+$}
		\STATE {\bf let} $\delta \leftarrow  \min_{x,y\in P}d(x,y)$
		\STATE{\bf let} $d'(u,u) = 0$ {\bf for all} $u\in \sF \cup \sD$ \label{alg:reduction:zero}
		\STATE{\bf let} $d'(v_x,u_y) = d(v,u)$ {\bf for all} $v_x, u_y \in \sF \cup \sD$ where $v\neq u$ \label{alg:reduction:same}
		\STATE{\bf let} $d'(v_x, v_y) = \min\{(\frac{\eps(n-k)}{\beta \cdot k})^{1/p},1\}\cdot \delta$ {\bf for all} $v_x, v_y\in \sF\cup \sD$ \label{alg:reduction:epsilon} 
		\STATE\COMMENT{Construction of matroid $\sM$}
		\STATE $\sM \leftarrow$ partition matroid s.t. $|I \cap B_{F, i}|\leq 1$ \textbf{for all} $i\in [m]$ and $|I\cap P_F|\leq k-m$ 
		\RETURN $(\sF, \sD, d', \sM)$
    \end{algorithmic}
	\caption{outputs an instance of facility location under matroid constraint corresponding to the given instance of $\alpha$-fair $k$-clustering.}
	\label{alg:reduction}
\end{algorithm}

\begin{theorem}\label{thm:main-reduction}
Suppose that there exists a $\beta$-approximation algorithm for the facility location with $\ell_p$-norm cost under matroid constraint. Then, for any $\eps>0$, there exists a $(\beta+\eps, 3)$-bicriteria approximation for $\alpha$-fair $k$-clustering with $\ell_p$-norm cost.
\end{theorem}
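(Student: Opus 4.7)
The plan is to use Algorithm~\ref{alg:reduction}'s reduction in two directions: first turn the optimal $\alpha$-fair clustering into a nearly-as-cheap matroid facility location solution, and then turn any approximate matroid facility location solution back into a $(3\alpha)$-fair clustering whose $\ell_p$-cost does not grow. First I would observe that any $\alpha$-fair set of centers must intersect every critical ball $B_i=B(c_i,\alpha r(c_i))$, because $c_i$ itself requires an open center within distance $\alpha r(c_i)$. Given the optimal $\opt$, I would build $F_{\opt}\subseteq\sF$ by choosing one representative of $\opt\cap B_i$ and placing its copy in $B_{F,i}$ for each $i\in[m]$, and placing the remaining $k-m$ centers of $\opt$ via their copies in $P_F$; the counts match the partition matroid bounds, so $F_{\opt}$ is a basis of $\sM$. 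To bound its facility-location cost, note that by line~\ref{alg:reduction:same} each client $v_c$ with $v\in P\setminus\opt$ sees the copy of its closest $\opt$-center at the original distance $d(v,\opt)$, while by line~\ref{alg:reduction:epsilon} each $v\in\opt$ contributes at most $\tfrac{\eps(n-k)}{\beta k}\delta^p$, where $\delta=\min_{u\ne v\in P}d(u,v)$. Since every non-center in $P$ pays at least $\delta^p$ against $\opt$, one has $\cost(P,\opt;p)\ge(n-k)\delta^p$, so summing the $k$ ``self'' contributions adds at most $\tfrac{\eps}{\beta}\cost(P,\opt;p)$, yielding a total facility location cost of at most $(1+\eps/\beta)\cost(P,\opt;p)$ for $F_{\opt}$.

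Next I would feed $(\sF,\sD,d',\sM)$---valid because Lemma~\ref{lem:d'-metric} certifies $d'$ is a metric---to the $\beta$-approximation algorithm, obtaining some $F^*\in\sI$ of cost at most $\beta(1+\eps/\beta)\cost(P,\opt;p)=(\beta+\eps)\cost(P,\opt;p)$. Because opening costs are zero, I would extend $F^*$ to a basis of $\sM$ without increasing cost, so $|F^*\cap B_{F,i}|=1$ for every $i$. Letting $C^*\subseteq P$ be the set of underlying points of the facilities in $F^*$, we get $|C^*|\le k$ and $C^*$ is feasible with respect to the critical regions $\sB$, so Lemma~\ref{lem:fair-sol-critical} declares $C^*$ to be $(3\alpha)$-fair. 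For the matching cost bound, a short case analysis shows that for every $v\in P$ and every $u\in F^*$ with underlying point $c$ one has $d'(v_c,u)\ge d(v,C^*)$ (with equality via line~\ref{alg:reduction:same} when $v\ne c$, and trivially when $v=c$ since then $v\in C^*$). Hence $\cost(P,C^*;p)\le\sum_{v\in P}d'(v_c,F^*)^p\le(\beta+\eps)\cost(P,\opt;p)$, satisfying both conditions of Definition~\ref{def:bicrit-approx}.

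The main obstacle is calibrating the inter-copy distance in line~\ref{alg:reduction:epsilon}: it has to be simultaneously small enough that the $k$ ``self'' contributions total only an $\eps/\beta$ fraction of $\cost(P,\opt;p)$ (via the lower bound $(n-k)\delta^p\le\cost(P,\opt;p)$), yet positive and large enough that $d'$ still satisfies the triangle inequality, so that the matroid facility location approximation can be legitimately invoked---the metric property is exactly what Lemma~\ref{lem:d'-metric} is designed to guarantee. Apart from this calibration, the two cost-transfer arguments reduce to routine comparisons using lines~\ref{alg:reduction:same}--\ref{alg:reduction:epsilon} of Algorithm~\ref{alg:reduction}, together with the critical-region structure provided by Lemma~\ref{lem:critical-region} and Lemma~\ref{lem:fair-sol-critical}.
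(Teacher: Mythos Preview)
Your proposal is correct and follows essentially the same approach as the paper: both construct $\opt_F$ (your $F_{\opt}$) from $\opt$ using one representative per critical ball and bound its cost via the $(n-k)\delta^p\le\cost(P,\opt;p)$ trick, then project the algorithm's output $\sol_F$ (your $F^*$) back to $P$ and invoke Lemma~\ref{lem:fair-sol-critical} for fairness together with the pointwise inequality $d'(v_c,\sol_F)\ge d(v,\sol)$ for the cost bound. The only difference is the order of exposition; the logical content is identical.
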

\begin{proof}
%First, we show that a solution to the facility location problem $\sol_F$ can be turned into an $\alpha$-fair $k$-clustering on $P$ and that a feasible $\alpha$-fair $k$-clustering $\sol$ can be turned into a solution of the facility location problem $\sol_F$. In both directions incurring only an additional cost of $n \cdot \epsilon$. We say that a facility $f\in \sF$ \textit{corresponds} to a point $v \in P$, iff $d(f,v)\leq \epsilon$.
%Let $\sB= \{B_1, \cdots, B_m\}$ be the critical regions constructed in Algorithm~\ref{alg:reduction}.
Let $\clla$ be a $\beta$-approximation algorithm for facility location with $\ell_p$-norm cost under matroid constraint. Consider an instance of $\alpha$-fair $k$-clustering with $\ell_p$-norm cost on pointset $P$ and let $(\sF, \sD, d', \sM)$ be the instance of facility location constructed by Algorithm~\ref{alg:reduction} with input parameters $P, k$ and $\alpha$. We show that the solution returned by $\clla(\sF, \sD, d', \sM)$ can be converted to a $(\beta+\eps,3)$-bicriteria approximation for the given instance of $\alpha$-fair $k$-clustering on $P$.

Let $\sB= \{B_1, \cdots, B_m\}$ be the critical regions constructed in Algorithm~\ref{alg:reduction}.
Let $\sol_F$ be the solution returned by $\clla(\sF, \sD, d', \sM)$ and let $\opt$ be an optimal solution of $\alpha$-fair $k$-clustering of $P$. Note that since adding centers to $\sol_F$ only reduces the $\ell_p$-cost of the solution on $(\sF, \sD, d', \sM)$, without loss of generality we can assume that $\sol_F$ picks exactly one center from each of $B_{F, i}$, for $i\in [m]$, and exactly $k-m$ centers from $P_F$.
Now we construct a solution $\sol$ of $k$-clustering on $P$ using the solution $\sol_F$. We start with an initially empty set of centers $\sol$. For each $B_i\in \sB$, let $c_{f,i}$ denote the center in $\sol_F \cap B_{F,i}$. In the first step, we add the point $c\in P$ corresponding to $c_{f,i}$ to $\sol$. Next, in the second step, for each $o_f\in  \sol_F \cap P_F$, we add the point $o\in  P$ corresponding to $o_{f}$ to $\sol$. Note that as some of these points may have already been added to $\sol$ in the first step, the final solution has at most $k$ distinct centers.

%\begin{itemize}[leftmargin=*]
\paragraph{Fairness approximation.} 
By the first step in the construction of $\sol$ (from the given solution $\sol_F$), for each $i\in [m]$, $|B_i \cap \sol|\geq 1$. Hence, by Lemma~\ref{lem:fair-sol-critical}, $\sol$ is a $(3\alpha)$-fair clustering of $P$.
    
\paragraph{Cost approximation.} Note that by our reduction, all facility opening costs are set to zero. 
Next, we bound the assignment cost of points in $P$ to their closest centers in $\sol$ in terms of the the assignment cost of their corresponding client in $\sD$ to their closest facility in $\sF$: if 
$v\notin \sol$, $d'(v_c, \sol_F) = d(v, \sol)$; otherwise, $d'(v_c, \sol_F) = \min\{(\frac{\eps(n-k)}{\beta \cdot k})^{1/p},1\}\cdot \delta > 0 = d(v, \sol)$.
Hence, 
\begin{align}\label{eq:cost-sol-ub}
    \cost(P, \sol;p) \leq \cost(\sD, \sol_F; p).
\end{align}
%in other words, the $\ell_p$-cost of clustering $P$ using $\sol$ is less than or equal to the $\ell_p$-cost of $\sol_F$ on $(\sF, \sD, d', \sM)$.

Next, we bound the $\ell_p$-cost of $\sol_F$ on $(\sF, \sD, d', \sM)$ in terms of the optimal $\ell_p$-cost of fair clustering $P$---the cost of clustering $P$ using $\opt$. By the definition of $\alpha$-fairness, each point $v\in P$ must have a center within distance at most $\alpha \cdot r(v)$. Hence, for each critical region $B\in \sB$, $|\opt\cap B|\geq 1$. For each $i\in [m]$, let $c_{f, i} \in \sF$ be the copy of an arbitrary center $c_i\in \opt\cap B_i$ in the set $B_{F,i}$. For the remaining points in $\opt$, we pick their corresponding copies in the set $P_F$; the corresponding facilities of type $c_f$. Let $\opt_{F}$ denote the constructed solution for the instance $(\sF, \sD, d', \sM)$. 
Since $\opt_F$ picks exactly one point from each set $B_{F,i}$, for $i\in [m]$, and exactly $k-m$ points from $P_F$, $\opt_F$ is a feasible solution of instance $(\sF, \sD, d', \sM)$. Moreover, since all facility opening cost are set to zero, the $\ell_p$-clustering cost of $\opt$ on pointset $P$ is
    \begin{align*}
        \cost(P, \opt; p) &= \sum_{v\in P} d(v, \opt)^p \\
        &= \sum_{v\in \opt} d(v, \opt)^p + \sum_{v\in P\setminus \opt} d(v, \opt)^p \\
        &= \sum_{v\in \opt} \big(d'(v_c, \opt_F)^p - (\frac{\eps(n-k)}{\beta\cdot k}) \cdot \delta^p \big) + \sum_{v\in P\setminus \opt} d'(v_c, \opt_F)^p \\
        &\geq\sum_{v_c\in \sD} d'(v_c, \opt_F)^p - k \cdot (\frac{\eps(n-k)}{\beta \cdot k}) \cdot \delta^p \\
        &= \cost(\sD, \opt_F; p) - \frac{\eps}{\beta}(n-k)\cdot \delta^p \\
        &\geq \cost(\sD, \opt_F; p) - \frac{\eps}{\beta} \cdot \cost(P, \opt; p), 
    \end{align*}
where the last inequality holds since $\cost(P, \opt; p) \geq (n-k)\cdot \delta^p$.    
    %equal to the $\ell_p$-cost of $\opt_{F}$ on instance $(\sF, \sD, d', \sM)$. 
Hence, 
\begin{align}\label{eq:cost-opt-lb}
    \cost(\sD, \opt_F; p) \leq (1+\frac{\eps}{\beta})\cost(P, \opt;p).
\end{align}
Thus,
\begin{align*}
    \cost(P, \sol; p) 
    &\leq \cost(\sD, \sol_F; p)\;\rhd\text{by~\eqref{eq:cost-sol-ub}} \\
    &\leq \beta\cdot \cost(\sD, \opt_F; p) \\
    &\leq (\beta+\eps)\cdot \cost(P, \opt; p). \;\rhd\text{by~\eqref{eq:cost-opt-lb}}
\end{align*}

In other words, the $\ell_p$-cost of clustering $P$ using $\sol$ is within a $\beta+\eps$ factor of the optimal $\alpha$-fair $k$-clustering of $P$.  
\end{proof}

\begin{proofof}{\bf Theorem~\ref{thm:main}.}
The proof follows from Theorem~\ref{thm:main-reduction} and the $16^p$-approximation algorithm of facility location with $\ell_p$-norm cost under matroid constraint for $p>1$ shown in Theorem~\ref{thm:main-facility}.

Next, for the case with $p=1$, which corresponds to $k$-median, we can employ the approximation guarantee of~\citep{krishnaswamy2018constant} and achieve a better cost approximation factor. In this setting, the proof follows from Theorem~\ref{thm:main-reduction} and the $7.081$-approximation algorithm of~\citep{krishnaswamy2018constant} for facility location under matroid constraint.
\end{proofof}

\subsection{A Simpler Reduction of Fair $k$-Center}
Here, we show a reduction of the $\alpha$-fair $k$-center problem to the $k$-center problem under partition matroid constraint. Then, exploiting the $3$-approximation algorithm of~\cite{jones2020fair}, we achieve a better approximation guarantee for the $\alpha$-fair $k$-center problem. 

Consider an instance of $\alpha$-fair $k$-center on a pointset $P$. Let $\sB$ be the set of critical regions of $P$ with parameters $k$ and $\alpha$ constructed via Algorithm~\ref{alg:AlphaCluster}. Then, given an instance of $\alpha$-fair $k$-center, Algorithm~\ref{alg:k-center-reduction} constructs an instance of $k$-center under partition matroid constraint. %Note that the constructed instance will have co-located points.
\begin{algorithm}[h]
	\begin{algorithmic}[1]
		\STATE {\bfseries Input:} set of points $P$, target number of centers $k$, fairness parameter $\alpha$, accuracy parameter $\eps <1/2$, approximation guarantee of $k$-center under partition matroid constraint $\beta \ge 1$
		\STATE \textbf{compute} a set of critical regions $\sB = \{B_1, \cdots, B_m\}$ via Algorithm~\ref{alg:AlphaCluster} on $(P, k, \alpha)$
		\STATE {\bf let} $\bar{P}_0 = \{v_0 \sep v\in P\}$ be a copy of $P$ 
		\STATE {\bf let} $\bar{B}_{i} =\{v_{i} | v\in B_i\}$ be a copy of $B_i$ for all $B_i\in \sB$, 
		\STATE $P' \leftarrow \bar{P}_0 \cup (\bigcup_{B_i \in \sB} \bar{B}_{i})$ \quad \COMMENT{\emph{$P'$ has two distinct copies of the points that belongs to a critical ball of $\sB$.}}
		\STATE $k_i = 1$ \textbf{for all} $i\in [m]$ \COMMENT{\emph{denotes that we pick at most one center from each critical ball.}}
		\STATE $k_0 = k - m$
        \STATE\COMMENT{Construction of distance function $d':P'\times P'\rightarrow \mathbb{R}^+$}
		\STATE {\bf let} $\delta \leftarrow  \min_{x,y\in P}d(x,y)$
		\STATE{\bf let} $d'(u,u) = 0$ {\bf for all} $u\in P'$ %\label{alg:reduction:zero}
		\STATE{\bf let} $d'(v_x,u_y) = d(v,u)$ {\bf for all} $v_x, u_y \in P'$ where $v\neq u$ %\label{alg:reduction:same}
		\STATE{\bf let} $d'(v_x, v_y) = \eps\cdot\delta/\beta$ {\bf for all} $v_x, v_y\in P'$ %\label{alg:reduction:epsilon} \avnote{check if $n$ is defined}
        \RETURN $(P', \{(\bar{P}_0,k_0), (\bar{B}_1, k_1), \cdots, (\bar{B}_m, k_m)\}, d')$
    \end{algorithmic}
	\caption{outputs an instance of $k$-center under partition matroid constraint corresponding to the given instance of $\alpha$-fair $k$-center.}
	\label{alg:k-center-reduction}
\end{algorithm}
Similarly to Lemma~\ref{lem:d'-metric}, we first show that the distance function $d'$ constructed in Algorithm~\ref{alg:k-center-reduction} is a metric distance.
\begin{lemma}\label{lem:d'-kcenter-metric}
The distance function $d':P'\times P' \rightarrow \mathbb{R}^+$ as constructed in Algorithm~\ref{alg:k-center-reduction} constitutes a metric space.
\end{lemma}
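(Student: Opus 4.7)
The plan is to verify the four metric axioms: non-negativity, the identity of indiscernibles, symmetry, and the triangle inequality. The first three are essentially immediate from the construction: every assigned distance is non-negative; $d'(u,u)=0$ is set explicitly while every pair of distinct elements of $P'$ receives a strictly positive distance (either $d(v,u)\geq \delta$ if the underlying points differ, or $\eps\delta/\beta>0$ if they are two distinct copies of the same point); and symmetry is built into the definition since the rule depends only on the unordered pair of underlying points. All the substantive work is in verifying the triangle inequality.

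For the triangle inequality, I would fix three elements $u_a,v_b,w_c\in P'$, where $u,v,w\in P$ are their underlying points and $a,b,c$ index which copy of each. I would then do a case split based on how many of $u,v,w$ coincide as original points. When $u,v,w$ are three distinct points of $P$, all three relevant distances equal the corresponding $d(\cdot,\cdot)$ values, and the inequality reduces to the triangle inequality for $d$ on $P$. When exactly two coincide, the worst subcase is $u=w\neq v$: the left-hand side is at most $\eps\delta/\beta$, while the right-hand side is $2\,d(u,v)\geq 2\delta$, and this is comfortably larger because $\eps<1/2$ and $\beta\geq 1$. The remaining subcases ($u=v\neq w$ or $v=w\neq u$) have right-hand side at least $d(u,w)$, which already matches the left-hand side. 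When $u=v=w$, every pairwise distance lies in $\{0,\eps\delta/\beta\}$, and a tiny enumeration over which copies agree shows the inequality holds (it fails only if a single step costs $\eps\delta/\beta$ while both component steps cost $0$, which forces all three indices to coincide).

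I do not expect a real obstacle here; the only thing one must keep an eye on is the relationship $\eps\delta/\beta<\delta$, which ensures that a ``copy hop'' is strictly cheaper than any genuine step between distinct points of $P$, and hence cannot create short-cuts that violate the triangle inequality when passing through an intermediate copy. Given that $\eps<1/2$ and $\beta\geq 1$ are assumed in the algorithm's input specification, this inequality is automatic, and the proof reduces to the finite case analysis described above.
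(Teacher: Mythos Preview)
Your proposal is correct and follows essentially the same route as the paper. The paper does not give a separate proof of this lemma but points to the analogous Lemma~\ref{lem:d'-metric}, whose proof (in the appendix) verifies the metric axioms with the same kind of case analysis on whether the underlying points in $P$ coincide, using the key inequality that the ``copy distance'' is at most $\delta$; your organization of the cases is slightly different but the argument is the same.
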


\begin{theorem}\label{thm:k-center-reduction}
Suppose that there exists a $\beta$-approximation algorithm for $k$-center under partition matroid constraint. Then, there exists a $(\beta+\eps, 3)$-bicriteria approximation for $\alpha$-fair $k$-center.
\end{theorem}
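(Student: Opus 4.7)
The plan is to mirror the proof of Theorem 3.5, but adapt the accounting to the max cost function of $k$-center, which is somewhat cleaner than the $\ell_p$ case. Given an instance $(P, k, \alpha)$, I would first run Algorithm 5.1 to produce the $k$-center instance $(P', \{(\bar{P}_0, k_0), (\bar{B}_i, k_i)\}_{i=1}^{m}, d')$ under partition matroid constraint, apply the assumed $\beta$-approximation algorithm to obtain $\sol' \subseteq P'$, and convert it into $\sol \subseteq P$ by taking the underlying point for each element of $\sol'$. Since adding centers only helps the $k$-center objective, I may assume $\sol'$ saturates the matroid, which in particular ensures $|\sol| \leq k$.

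For the fairness part, every critical region $B_i$ contains at least one center of $\sol$ (namely the underlying point of the unique element in $\sol' \cap \bar{B}_i$), so Lemma 3.4 immediately yields that $\sol$ is $(3\alpha)$-fair. For the cost part, I would establish the chain
\begin{align*}
\max_{v \in P} d(v, \sol) \;\leq\; \max_{v_x \in P'} d'(v_x, \sol') \;\leq\; \beta \cdot \max_{v_x \in P'} d'(v_x, \opt') \;\leq\; (\beta + \eps) \cdot \max_{v \in P} d(v, \opt),
\end{align*}
where $\opt'$ is a canonical lift of an optimal $\alpha$-fair solution $\opt$: for each $i \in [m]$ place an arbitrary element of $\opt \cap B_i$ (nonempty by $\alpha$-fairness of $\opt$) in $\bar{B}_i$, and place the remaining $\leq k - m$ centers of $\opt$ in $\bar{P}_0$. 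This lift is feasible for the partition matroid because the critical regions are disjoint.

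The first inequality holds because for each $v \in P$ either $v \in \sol$ (so $d(v, \sol) = 0$) or no copy of $v$ lies in $\sol'$ (so $d'(v_0, \sol') = d(v, \sol)$, since distances between copies of distinct underlying points are preserved by $d'$). The middle inequality is the subroutine's approximation guarantee. The third inequality is where the only real obstacle lies: for each $v_x \in P'$ one has $d'(v_x, \opt') \leq d(v, \opt) + \eps \delta / \beta$, because a client whose underlying point lies in $\opt$ reaches its own copy in $\opt'$ at the fixed extra cost $\eps \delta / \beta$ rather than $0$. I would absorb this additive slack multiplicatively by observing that in the nontrivial regime $k < n$ some point of $P$ lies outside $\opt$ and by definition of $\delta$ contributes at least $\delta$ to $\max_{v \in P} d(v, \opt)$; hence $\max_{v_x \in P'} d'(v_x, \opt') \leq (1 + \eps/\beta) \max_{v \in P} d(v, \opt)$, and multiplying by $\beta$ closes the chain.
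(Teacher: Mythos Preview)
Your proposal is correct and follows essentially the same approach as the paper: construct the partition-matroid $k$-center instance via the reduction algorithm, saturate the matroid, project $\sol'$ back to $\sol$, invoke Lemma~\ref{lem:fair-sol-critical} for fairness, lift $\opt$ to a feasible $\opt'$ using one center per critical region, and absorb the additive $\eps\delta/\beta$ slack via $\cost_{\kcenter}(\opt,P)\ge\delta$ when $k<n$. The only cosmetic difference is that you phrase the cost comparison as a single chain of inequalities, whereas the paper argues $\cost(\sol,P)\le\cost(\sol',P')$ and $\cost(\opt',P')\le\cost(\opt,P)+\eps\delta/\beta$ separately before combining them.
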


\begin{theorem}\label{thm:fair-k-center}
For any $\alpha\geq 1$, there exists a polynomial time algorithm that computes a $(3+\eps, 3)$-bicriteria approximate solution for $\alpha$-fair $k$-center.
\end{theorem}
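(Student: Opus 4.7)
The plan is to obtain Theorem~\ref{thm:fair-k-center} as a direct composition of two ingredients already in place: the reduction in Theorem~\ref{thm:k-center-reduction}, which turns any $\beta$-approximation for $k$-center under partition matroid constraint into a $(\beta+\eps, 3)$-bicriteria approximation for $\alpha$-fair $k$-center, and the $3$-approximation algorithm of~\cite{jones2020fair} for $k$-center with balanced centers, i.e., under a partition matroid constraint. Plugging $\beta = 3$ into Theorem~\ref{thm:k-center-reduction} immediately gives the claimed $(3+\eps, 3)$-bicriteria guarantee.

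The concrete steps are as follows. First, given an instance $P$ of $\alpha$-fair $k$-center together with accuracy parameter $\eps$, I would invoke Algorithm~\ref{alg:k-center-reduction} with $\beta = 3$ to produce the tuple $(P', \{(\bar{P}_0,k_0), (\bar{B}_1, k_1), \dots, (\bar{B}_m, k_m)\}, d')$. By Lemma~\ref{lem:d'-kcenter-metric}, $d'$ is a metric on $P'$, so the resulting instance is a bona fide instance of $k$-center under a partition matroid constraint where the parts $\bar{P}_0, \bar{B}_1, \dots, \bar{B}_m$ form a partition of $P'$ with capacities $k_0, k_1, \dots, k_m$ summing to $k$. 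Next, I would run the algorithm of~\cite{jones2020fair} on this instance to obtain a solution $\sol_C$ whose $k$-center cost on $(P', d')$ is within a factor of $3$ of the optimum.

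Finally, I would apply Theorem~\ref{thm:k-center-reduction} to lift $\sol_C$ to a solution $\sol$ for the original instance. The fairness guarantee, namely that $\sol$ is $(3\alpha)$-fair, is established in the fairness analysis of that theorem via Lemma~\ref{lem:fair-sol-critical} applied to the critical regions $\sB$ produced by Algorithm~\ref{alg:AlphaCluster}. The cost analysis in Theorem~\ref{thm:k-center-reduction} (with $\beta = 3$) yields $\cost_{\kcenter}(\sol, P) \leq (3+\eps)\cdot \cost_{\kcenter}(\opt, P)$, which is exactly the claimed bound.

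There is essentially no obstacle beyond verifying that the inputs are compatible: the algorithm of~\cite{jones2020fair} is stated for $k$-center with balanced centers, which coincides with the partition matroid setting produced by Algorithm~\ref{alg:k-center-reduction}, and its $3$-approximation guarantee is worst-case over metric inputs. The only subtle point is the duplication of points in $P'$ — the reduction makes up to two copies of each point in a critical region — but since $d'$ remains a metric and the partition capacities are specified correctly, no modification of the~\cite{jones2020fair} algorithm is needed. Thus the result follows by chaining Theorem~\ref{thm:k-center-reduction} with their $3$-approximation.
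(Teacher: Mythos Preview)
Your proposal is correct and takes essentially the same approach as the paper: invoke Theorem~\ref{thm:k-center-reduction} with $\beta=3$ supplied by the $3$-approximation of~\cite{jones2020fair} for $k$-center under a partition matroid constraint. The paper's own proof is a one-line application of exactly this composition, and your additional remarks about metric compatibility and point duplication are sound but not needed for the argument.
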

\begin{proof}
The proof follows from Theorem~\ref{thm:k-center-reduction} and the $3$-approximation algorithm of ~\citep{jones2020fair} for $k$-center under partition matroid constraint.
\end{proof}

\subsubsection*{Acknowledgments}
The first author thanks Benjamin Moseley and Rudy Zhou for their helpful feedback on implications of~\cite{gupta2021structural} in our setting.
The second author thanks Beate Bollig for her helpful comments on the exposition of the paper.
Finally, the authors thank anonymous reviewers for their detailed comments which helped improve the paper. 
\bibliographystyle{abbrvnat}
\bibliography{fair-cls}
\newpage
\onecolumn
\appendix
\section{\texorpdfstring{$e^{O(p)}$}{}-Approximation for Matroid
Facility Location with \texorpdfstring{$\ell_p$}{}-Cost}\label{sec:facility-lp-approx}
We consider a natural LP-relaxation for the facility location problem under matroid constraint considered by~\cite{swamy2016improved}. This relaxation is a generalization of the standard LP relaxation of $k$-median and $k$-means clustering (e.g. ~\citep{charikar2002constant}). For every facility $u\in \sF$, the variable $y_u$ denotes whether the facility $u$ is open or not; $y_u=1$ if $u$ is open and $y_u=0$ otherwise. For every client $v\in P$ and facility $u\in \sF$, we have a variable $x_{vu}$ that denotes whether $u$ is the closest facility among the opened facilities to $v$. We also assume that we are given a function $w: P\rightarrow \mathbb{R}$ which denotes the demand of clients.
Finally, we use $r$ to denote the rank function of the matroid $\sM = (\sF, \sI)$.
%The LP relaxation from \citep{swamy2016improved} remains similar for the generalization to $\ell_p$ cost function. 
%In particular, we also use $i$ to index facilities in $\sF$ and $j$ to index clients in $\sD$ and also use $r$ to denote the rank function of the matroid $M=(\sF,\sI)$.
\begin{align}
\textbf{LP Relaxation: }&\cllp(w, \sM)\nonumber\\[1mm]
\text{minimize }& \ \rlap{$\sum_{u \in \sF} f(u) \cdot y_u + \sum_{v\in P, u\in \sF} w(v) \cdot d(v, u)^p \cdot x_{vu}$} \nonumber\\[1mm]
%\min & \quad & \sum_{u \in \sF} f(u) \cdot y_u &+ \sum_{v\in P, u\in \sF} w(v) \cdot d(v, u)^p \cdot x_{vu} \tag{P} \label{P}\\
\text{s.t.}\qquad &\sum_{u\in \sF} x_{vu} \geq 1 &&\forall v\in P \label{casgn} \\ 
&\sum_{u\in S} y_u \leq r(S) &&\forall S\subseteq \sF  \label{matroid_constraint}\\
&0 \leq x_{uv} \leq y_u  &&\forall u\in \sF,v\in P \label{under_assignment}
\end{align}

%Variable $y_i$ indicates if facility $i$ is open and $x_{ij}$ indicates if client $j$ is assigned to facility $i$. 
%The first and third sets of constraints ensure that every client $j$ handles its whole demand and uses only facilities that are open. 
%The second set of constraints ensures that the matroid independence constraints are satisfied. 
%Note that both $x_{ij}$ and $y_i$ can be fractionals, so facilities may be partially open and clients may be partially assigned to several facilities.
%Throughout this paper, we use $(x,y)$ to denote an optimal solution to (\ref{P}) and use $OPT$ to denote its value.
Note that for an optimal solution $(x, y)$ of $\cllp$, may assume that for every client $v$, $\sum_{u\in \sF} x_{vu}=1$.

We follow the framework of \citep{charikar2002constant,swamy2016improved}. First, we reduce the instance into a \textit{well-separated} instance and then we find a half-integral solution of the well-separated instance. 
We start with an optimal \textit{fractional} solution $(x^*,y^*)$ to $\cllp(w,\sM)$. The cost of this optimal fractional solution is denoted by $z^*$. 
The ultimate goal is to obtain a good integral solution whose cost is comparable to $z^*$.
First we construct a modified instance, called \textit{well-separated} instance. A key property of the well-separated instance is that the distance of any pair of clients $v$ and $u$ in this instance is ``large" compared to the contributions of each of $u$ and $v$ w.r.t. $(x^*, y^*)$.  
Then, we prove two important statements: (1) there exists a fractional solution of the well-separated instance whose cost is not more than $z^*$ and (2) an integral solution of the well-separated instance with cost $z'$ can be transformed into a solution of cost $e^{O(p)} (z' + z^*)$ on the original instance.  

Hence, it suffices to find an \textit{integral} solution $F'$ of the well-separated instance whose cost is not ``much larger'' than $z^*$. If we achieve that, then we get a ``good'' approximate integral solution of the original instance.

%We then show why this modification is useful for obtaining a solution to the original instance. We do so, by showing that a solution $F'$ to the well-separated instance (which is simply a set of open facilities respecting the matroid constraint) with cost $z'$, \textit{is} a solution to the original instance at cost $4 \cdot 3^{p-1}\cdot z^* + 3^{p-1}\cdot z'$ by simply assigning all the clients to open nearby facilities.
%Then it is a crucial task to find a good and \textit{integral} solution $F'$ to the well-separated instance (and not just any) such that we have a good bound for the cost of $F'$ for the well-separated instance, i.e. the value of $z'$ shouldn't be too big. If we achieve that, then we get an approximate integral solution to the original instance.

In the first step, we construct a half-integral solution $\hat{y}$ to the well-separated instance whose cost is not more than $3^p\cdot z^*$. First, we show that there exists a solution $y'$ with ``certain structure'' whose cost is at most $3^p \cdot z^*$. Then, we consider a modified cost function that plays a role as a proxy for the actual cost and show that under the new cost function we can always find a feasible half-integral solution with minimum cost. Lastly, we show that the actual cost of the constructed half-integral solution is also at most $3^p \cdot z^*$. 
%We do so by first creating a vector $y'$ which is merely a different representation of the solution $y$ (Further below, where $y'$ is defined I made a suggestion on how we may get rid of this confusing $y'$). Then, we create a linear function $T:\mathbb{R}_+^\sF$ that provides an upper bound on the cost of any solution to the well-separated instance.
%By showing that there exists a polytope with half-integral extreme points that has $y'$ as an element, we use the extreme point of the polytope as a half-integral solution $\hat{y}$, to which $T(\hat{y})$ is an upper bound of the cost on the well-separated instance and since $\hat{y}$ is an extreme point of of the polytope, $T(\hat{y})\leq T(y')$ holds. 
%Note that the actual cost of solution $\hat{y}$ may be higher than the cost of $y'$, but this is no problem because the inequality $T(\hat{y})\leq T(y')\leq 3^p \cdot z^*$ still bounds the cost of $\hat{y}$ by $3^p\cdot z^*$.

In the second step, we construct an integral solution $\tilde{y}$ to the well-separated instance from the half-integral solution $\hat{y}$ and show that the cost of the integral solution is at most $e^{O(p)}$ times the cost of the half-integral solution. Again, we rely on the integrality of matroid intersection polytopes to construct an approximately good integral solution from the given half-integral solution.

All together, we obtain an integral solution to the original instance of cost at most $e^{O(p)}\cdot z^*$.

\subsection{Obtaining a half-integral solution}\label{sec:half-integral}
%\avnote{I am still not sure to add this theorem or not. Leave it as it is for now.}
%In this section, we show the following.
%\begin{theorem}\label{thm:half-integral}
%Let $z^*$ denote the cost of an optimal solution of $\cllp(w, \sM)$.
%\begin{enumerate}
%    \item There exists a half-integral solution of $\cllp(w', \sM)$ of cost at most $3^p\cdot z^*$.
%
%    \item Let $F'$ be an integral solution of matroid facility location with $\ell_p$-norm cost on $(w', \sM)$ of cost at most $z'$. Then, $F'$ is a solution of matroid facility location with $\ell_p$-norm cost on $(w, \sM)$ of cost at most $$4 \cdot 3^{p-1}\cdot z^* + 3^{p-1}\cdot z'.$$ %where $z^*$ is the optimal cost of $\cllp(w, \sM)$.
%\end{enumerate}
%\end{theorem}

\paragraph{Step I: Consolidating Clients.} \label{step1}
In this section, we analyze the ``client consolidation'' subroutine, Algorithm~\ref{alg:consolidating_clients}, which outputs a new set of demands that is supported on a well-separated set of clients. 

An important notion in the framework of~\citep{charikar2002constant} is the {\em fractional distance of a client $v$ to its facility} w.r.t. an optimal fractional solution $(x,y)$. In our setting with $\ell_p$ clustering cost, the fractional distance of clients is defined as $\sR(v) := \big(\sum_{u\in \sF} d(v, u)^p \cdot x_{vu}\big)^{1/p}$. In particular, if $(x,y)$ is an integral solution then $\sR(v)$ denotes the distance of client $v$ to the facility it is assigned to. In other words, $\sR(v)$ is the assignment cost of one unit of demand at client $v$. 

We consider the clients $v_1, \cdots, v_n$ in a non-decreasing order of their fractional distances: $\sR(v_1) \le \cdots \le \sR(v_n)$. At the time we are processing client $v_i$ with {\em non-zero} demand, we check whether there exist another client $v_j$ with non-zero demand such that $j > i$ and $d(v_i, v_j) \leq 2^{\frac{p+1}{p}} \cdot \sR(v_j)$. If there exists such a client, then we add the demand of $v_j$ to $v_i$ and set the demand of $v_j$ to zero.

Observe that throughout the client consolidation subroutine, once a client is processed, the algorithm never moves its demand in the rest of the procedure. This in particular implies that during the client consolidation subroutine, the demand of each client moves at most once.  

\begin{algorithm}[h]
	\begin{algorithmic}[1]
		\STATE {\bfseries Input:} $(x, y)$ is an optimal solution of $\cllp(w, \sM)$
		\STATE $\sR(v) = \left(\sum_{u\in \sF}  d(v,u)^p \cdot x_{vu}\right)^{1/p}$ \textbf{for all} $v\in P$	
        \STATE $w'(v) = w(v)$ \textbf{for all} clients $v\in P$
		\STATE {\bf sort} the points in $P$ so that $\R(v_1) \leq \R(v_2) \leq \cdots \leq \R(v_n)$
		\FOR{$i = 1$ to $n-1$}
			\FOR{$j = i+1$ to $n$}
				\IF{$d(v_i, v_j) \leq 2^{\frac{p+1}{p}} \R(v_j)$  and $w'(v_i) > 0$}
					\STATE $w'(v_i) = w'(v_i) + w'(v_j)$ 
                    \STATE $w'(v_j) = 0$
				\ENDIF
			\ENDFOR
		\ENDFOR	
    \end{algorithmic}
	\caption{consolidating clients.}
	\label{alg:consolidating_clients}
\end{algorithm}

%The first client $j$ with the smallest $\bar{C}_j$ becomes a cluster center, i.e. $d'_j \gets d_j$.
%We move the demand $d_k$ of a client $k$ to a client $j$ by setting $d'_j \gets d'_j + d_k$, iff all the following conditions are satisfied:
%\begin{enumerate}
%    \item Client $j$ is a cluster center, i.e. $d'_j > 0$.
%    \item Client $k$ was encountered later than client $j$, so $\bar{C}_j\leq \bar{C}_k$.
%    \item Client $j$ is 'nearby', i.e. $c_{kj}\leq 2^\frac{p+1}{p}\cdot \bar{C}_k$ 
%\end{enumerate}
%Otherwise client $k$ becomes a new cluster center by setting $d'_k \gets d_k$. 
%We refer to $d'$ as the modified demands.

Next we define $P'\subseteq P$ as the support of $w'$. The following claim shows that the clients in $P'$ are well-separated.
\begin{claim}\label{clm:well-separated}
For every pair of clients $u,v \in P'$, $d(u,v) > 2^{\frac{p+1}{p}}\max(\sR(v), \sR(u))$
\end{claim}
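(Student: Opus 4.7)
The plan is to argue directly from the structure of Algorithm~\ref{alg:consolidating_clients}. Without loss of generality, suppose $u = v_i$ and $v = v_j$ with $i < j$ (which we may assume because the clients are sorted so that $\R(v_1) \leq \cdots \leq \R(v_n)$; this also tells us that $\max(\R(u),\R(v)) = \R(v_j)$). Our goal is then to show that $d(v_i, v_j) > 2^{(p+1)/p}\cdot \R(v_j)$.

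The key preliminary observation I would establish is a monotonicity/protection property of the weights $w'$: during the outer iteration indexed by $i'$, the value $w'(v_{i'})$ can only be \emph{increased}, and only through additions that are gated by the condition $w'(v_{i'}) > 0$. Consequently, if $w'(v_{i'}) = 0$ at the \emph{start} of iteration $i'$, then $w'(v_{i'})$ remains $0$ throughout that iteration, and since no later iteration modifies $w'(v_{i'})$ either, $w'(v_{i'}) = 0$ at termination. Contrapositively: if $v_{i'} \in P'$, then $w'(v_{i'}) > 0$ at the beginning of iteration $i'$, and (by the same monotonicity) throughout it.

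Applying this to $i' = i$: since $v_i \in P'$, the guard $w'(v_i) > 0$ is satisfied at the moment iteration $i$ reaches the inner index $j$. Now suppose for contradiction that $d(v_i, v_j) \leq 2^{(p+1)/p}\cdot \R(v_j)$. Then both conditions of the \texttt{if}-statement are met, so the algorithm executes $w'(v_j) \gets 0$. Since no subsequent iteration $i'' > i$ can increase $w'(v_j)$ (additions only target the outer index, not the inner index), we would have $w'(v_j) = 0$ at termination, contradicting $v_j \in P'$. Hence $d(v_i, v_j) > 2^{(p+1)/p}\cdot \R(v_j) = 2^{(p+1)/p}\cdot \max(\R(u), \R(v))$, which is the desired conclusion.

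The argument is essentially bookkeeping, and I do not anticipate a real obstacle; the main point demanding care is the monotonicity/protection lemma for $w'$, which must be stated precisely so that it rules out both (i) $w'(v_i)$ jumping from $0$ to positive during iteration $i$, and (ii) $w'(v_j)$ being resurrected after being zeroed. Both follow immediately from the fact that the only write operations on $w'(v_{i'})$ in iteration $i'$ are increments guarded by $w'(v_{i'}) > 0$, and the only write operation on $w'(v_{j'})$ for $j' > i'$ is a reset to $0$.
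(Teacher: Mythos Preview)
Your argument is correct and follows the same contradiction strategy as the paper's proof: assume the distance inequality fails for some pair in $P'$, and conclude that the client with larger $\sR$ would have had its weight zeroed out. The paper's version is terser and leaves the verification that the guard $w'(v_i)>0$ holds at the relevant inner step implicit, whereas you spell out the monotonicity/protection property of $w'$ explicitly; this added care is sound and does not change the approach.
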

\begin{proof}
Suppose there exists a pair of clients $v, u\in P'$ such that $d(v,u) \leq  2^\frac{p+1}{p}\cdot \max (\sR(v),\sR(u))$. Wlog, assume that $\sR(v)\leq \sR(u)$.
However, since $d(v,u)\leq 2^\frac{p+1}{p}\cdot \sR(u)$, at the iteration in Algorithm~\ref{alg:consolidating_clients} that processes client $u$, the algorithm will move the demand of $u$ to $v$. Hence, at the end of the algorithm $w'(u) = 0$ which is a contradiction.
\end{proof}

%Let $D \coloneqq \{j\in \sD:d'_j>0\}$ be the set of these cluster centers and let $OPT'\coloneqq \sum_i f_iy_i+\sum_{j\in D,i}d'_jc_{ij}^px_{ij}$ denote the cost of $(x,y)$ for the modified instance consisting of the cluster centers.
%\begin{lemma}\label{minDistanceCenters}
%If $j,k \in D$ then $c_{jk} >  2^\frac{p+1}{p}\cdot \max (\bar{C}_k,\bar{C}_j)$
%\end{lemma}
%\begin{proof}
%Suppose there exists pair of $j,k\in D$ such that $c_{jk} \leq  2^\frac{p+1}{p}\cdot \max (\bar{C}_k,\bar{C}_j)$. Wlog, assume that $\bar{C}_j\leq \bar{C}_k$. %It is clear that the first two conditions of \hyperlink{step1}{step I} are satisfied and $\max (\bar{C}_k,\bar{C}_j)=\bar{C}_k$ holds. 
%However, since $c_{kj}\leq 2^\frac{p+1}{p}\cdot \bar{C}_k = \max(\bar{C}_k,\bar{C}_j)$ the demand of $k$ will move to $j$ when we are processing client $j$ as in our algorithm. Hence $j\notin D$ which is a contradiction.
%was true, then $k$ would not be in $D$ giving a contradiction.
%\end{proof}

Next, we show that a feasible solution of $\cllp$ for the original instance is a feasible solution of $\cllp$ for the constructed well-separated instance with the same or smaller cost.

\begin{lemma}\label{lem:feasible-sol}
Let $(x,y)$ be a feasible solution of $\cllp(w, \sM)$ with cost $z$. Then, $(x,y)$ is a feasible solution of $\cllp(w', \sM)$ with cost at most $z$.
\end{lemma}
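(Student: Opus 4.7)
The plan is to observe that the two LPs share the same feasible region, so the content of the lemma is really a statement about how the objective value changes when we replace $w$ by $w'$. I will first note that the constraints \eqref{casgn}, \eqref{matroid_constraint}, and \eqref{under_assignment} of $\cllp$ do not reference $w$ at all: they only involve the variables $x$ and $y$, the set of facilities $\sF$, the set of clients $P$ (whose underlying index set does not change when we pass from $w$ to $w'$), and the matroid $\sM$. Consequently $(x,y)$ automatically satisfies every constraint of $\cllp(w',\sM)$, so feasibility is immediate.

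Next I would rewrite the assignment part of the objective in a convenient form. Since by definition $\sR(v)^p = \sum_{u\in\sF} d(v,u)^p\, x_{vu}$, the assignment cost under a demand vector $\omega$ is exactly $\sum_{v\in P} \omega(v)\,\sR(v)^p$, and the facility-opening part $\sum_{u\in\sF} f(u)\,y_u$ does not depend on the demands. So it suffices to show that $\sum_v w'(v)\sR(v)^p \le \sum_v w(v)\sR(v)^p$.

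For this last inequality I would track the effect of Algorithm~\ref{alg:consolidating_clients} step by step. At any iteration where the algorithm transfers the current demand $w'(v_j)$ from $v_j$ to $v_i$ (with $i<j$), the pair was sorted so that $\sR(v_i)\le \sR(v_j)$. The induced change in $\sum_v w'(v)\sR(v)^p$ is therefore $w'(v_j)\,(\sR(v_i)^p - \sR(v_j)^p)\le 0$, i.e. each transfer weakly decreases the quantity. Summing (or, equivalently, arguing by induction on the number of completed moves), the final vector $w'$ satisfies $\sum_v w'(v)\sR(v)^p \le \sum_v w(v)\sR(v)^p$, which combined with the equality of facility costs yields cost at most $z$.

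The argument is almost entirely bookkeeping, and there is no real obstacle; the only subtle point to check is that once the algorithm has finished, one can still legitimately interpret the assignment cost under $w'$ using the same $\sR(v)$ values computed from the original $(x,y)$, since the variables $x_{vu}$ themselves are unchanged. This is exactly what makes the argument work: demand is redistributed among the existing clients, but the LP variables (and therefore the per-unit-demand costs $\sR(v)^p$) are kept fixed throughout.
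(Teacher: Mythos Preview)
Your proposal is correct and follows essentially the same approach as the paper: both argue that feasibility is immediate because the constraint set does not depend on $w$, and that the objective can only decrease because each demand transfer in Algorithm~\ref{alg:consolidating_clients} moves weight from a client $v_j$ to a client $v_i$ with $\sR(v_i)\le \sR(v_j)$. Your version is simply more explicit in tracking the change in $\sum_v w'(v)\sR(v)^p$ move by move, whereas the paper compresses this into a single sentence.
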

\begin{proof}
Since the set of constraints in $\cllp(w', \sM)$ is the same as the set of constraints in $\cllp(w, \sM)$, $(x,y)$ is a feasible solution of $\cllp(w', \sM)$.
Moreover, in Algorithm~\ref{alg:consolidating_clients}, a client $u$ moves its demand to another client $v$ with a lower assignment cost (i.e., $\sR(v) \leq \sR(u))$. Hence, the cost of solution $(x,y)$ on $\cllp(w', \sM)$ is at most $z$.
\end{proof}

\begin{theorem}\label{thm:convert-to-original}
Let $F'$ be an integral solution of the well-separated instance $(w', \sM)$ of cost at most $z'$. Then, 
$F'$ is a solution of the original instance $(w, \sM)$ of cost at most $4 \cdot 16^{p-1}\cdot z^* + (\frac{8}{7})^{p-1}\cdot z'$ where $z^*$ is the optimal cost of $\cllp(w, \sM)$.
%Let $F'$ be an integral solution of the well-separated instance $(w', \sM)$ of cost at most $z'$. Then, 
%$F'$ is a solution of the original instance $(w, \sM)$ of cost at most $4 \cdot 3^{p-1}\cdot z^* + 3^{p-1}\cdot z'$ where $z^*$ is the optimal cost of $\cllp(w, \sM)$.
\end{theorem}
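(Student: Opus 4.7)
My plan is to map every original client back to the client of $P'$ that absorbs its demand during Algorithm~\ref{alg:consolidating_clients}, and then bound the cost of $F'$ on the original instance by its cost on the well-separated instance plus a ``detour'' term controlled by the fractional assignment radii $\sR(\cdot)$. Concretely, define $\phi : P \to P'$ by $\phi(v) = v$ if $v\in P'$, and otherwise let $\phi(v)$ be the client $v_i$ to which $v$'s demand is moved in Algorithm~\ref{alg:consolidating_clients}. The guard $w'(v_i)>0$ together with the one-pass structure of the algorithm guarantees that each client's demand is relocated at most once, so $\phi$ is well defined. The acceptance test of the algorithm directly yields $d(v,\phi(v))\leq 2^{(p+1)/p}\,\sR(v)$ for every $v\in P$, and by construction $w'(v') = \sum_{v\in \phi^{-1}(v')} w(v)$.

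Next, I would combine the triangle inequality $d(v, F')\leq d(v,\phi(v)) + d(\phi(v), F')$ with Lemma~\ref{lem:p-norm-ineq} (taking $n=1$ and a parameter $\lambda>0$ to be tuned) to obtain
\begin{align*}
d(v,F')^p \;\leq\; (1+\lambda)^{p-1}\,d(\phi(v),F')^p + \Big(\tfrac{1+\lambda}{\lambda}\Big)^{p-1} d(v,\phi(v))^p.
\end{align*}
Weighting by $w(v)$ and summing over $v\in P$, the first term regroups, via the fibers of $\phi$, into $(1+\lambda)^{p-1}\sum_{v'\in P'} w'(v')\,d(v',F')^p$, which is at most $(1+\lambda)^{p-1}$ times the assignment cost of $F'$ on the well-separated instance, and hence at most $(1+\lambda)^{p-1}\,z'$. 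The second term is at most $\big(\tfrac{1+\lambda}{\lambda}\big)^{p-1}\cdot 2^{p+1}\sum_{v\in P}w(v)\,\sR(v)^p$, and since $\sum_{v}w(v)\sR(v)^p$ is exactly the fractional assignment cost of the LP optimum $(x^*,y^*)$, it is bounded by $\big(\tfrac{1+\lambda}{\lambda}\big)^{p-1}\cdot 2^{p+1}\,z^*$.

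Setting $\lambda = 1/7$ makes $(1+\lambda)^{p-1}$ collapse to $(8/7)^{p-1}$ and $\big(\tfrac{1+\lambda}{\lambda}\big)^{p-1}\cdot 2^{p+1}$ collapse to $8^{p-1}\cdot 2^{p+1} = 4\cdot 16^{p-1}$, matching the claimed coefficients. Since the facility-opening portion of the cost of $F'$ is identical on the two instances and $(8/7)^{p-1}\geq 1$ lets us absorb it into the bound on the assignment cost, the total cost of $F'$ on $(w,\sM)$ is at most $(8/7)^{p-1}z' + 4\cdot 16^{p-1}z^*$, as claimed. I do not foresee a serious obstacle here beyond reverse-engineering the right $\lambda$: once one observes that $2^{p+1}\big(\tfrac{1+\lambda}{\lambda}\big)^{p-1}$ must be made to look like $4\cdot 16^{p-1}$, the choice $\lambda = 1/7$ is forced, and the rest of the argument is careful bookkeeping.
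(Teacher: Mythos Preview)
Your proposal is correct and follows essentially the same route as the paper: define the relocation map $\phi$, use $d(v,\phi(v))\le 2^{(p+1)/p}\sR(v)$, apply the weighted $p$-th power triangle inequality, and regroup via the fibers of $\phi$ to get the $(8/7)^{p-1}z'$ and $4\cdot 16^{p-1}z^*$ terms. The only cosmetic difference is that the paper invokes Corollary~\ref{cor:gen-triangle-ineq} with $\lambda=7$ (placing the $8^{p-1}$ factor on $d(v,v')^p$), whereas you swap the roles of the two terms and take $\lambda=1/7$; the resulting inequalities are identical.
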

\begin{proof}
Since the set of constraints in $\cllp(w, \sM)$ is the same as the set of constraints in $\cllp(w', \sM)$, $F'$ is a feasible solution of $\cllp(w, \sM)$. 
For every client $v\in P$, we assume that Algorithm~\ref{alg:consolidating_clients} has moved the demand of $v$ to $v'\in P'$. More precisely, the algorithm may either move the demand of $v$ to another client $v' = u$ or keep it at the same client $v' = v$. Moreover, in both cases, $d(v, v') \leq 2^\frac{p+1}{p} \sR(v)$. Hence,
%\begin{align}
%    d(v, F')^p &\leq (\frac{3}{2})^{p-1}\cdot d(v, v')^p + 3^{p-1}\cdot d(v', F')^p &&\rhd\text{Corollary~\ref{cor:gen-triangle-ineq} with $\lambda =2$} \nonumber \\
%    &\leq 4 \cdot 3^{p-1}\cdot \sR(v)^p + 3^{p-1}\cdot \sR(v')^p &&\rhd\text{since $d(v, v') \leq 2^\frac{p+1}{p} \sR(v)$}\label{eq:distance-F'}
%\end{align}
\begin{align}
    d(v, F')^p &\leq 8^{p-1}\cdot d(v, v')^p + (\frac{8}{7})^{p-1}\cdot d(v', F')^p &&\rhd\text{Observation~\ref{obser:gen-triangle-ineq} with $\lambda = 7$} \nonumber \\
    &\leq 4 \cdot 16^{p-1}\cdot \sR(v)^p + (\frac{8}{7})^{p-1}\cdot \sR(v')^p &&\rhd\text{since $d(v, v') \leq 2^\frac{p+1}{p} \sR(v)$}\label{eq:distance-F'}
\end{align}
Thus, the cost of solution $F'$ over the original instance is, 
\begin{align*}
    \sum_{u\in F'} f(u) + \sum_{v\in P} d(v, F')^p 
    &\leq  \sum_{u\in F'} f(u) + \sum_{v\in P} 4 \cdot 16^{p-1}\cdot \sR(v)^p + (\frac{8}{7})^{p-1}\cdot \sR(v')^p &&\rhd\text{by Eq.~\eqref{eq:distance-F'}}\nonumber \\
    &\leq 4\cdot 16^{p-1} \cdot z^* + (\frac{8}{7})^{p-1} \cdot z'
\end{align*}
%\begin{align*}
%    \sum_{u\in F'} f(u) + \sum_{v\in P} d(v, F')^p 
%    &\leq  \sum_{u\in F'} f(u) + \sum_{v\in P} 4 \cdot 3^{p-1}\cdot \sR(v)^p + 3^{p-1}\cdot \sR(v')^p &&\rhd\text{by Eq.~\eqref{eq:distance-F'}}\nonumber \\
%    &\leq 4\cdot 3^{p-1} \cdot z^* + 3^{p-1} \cdot z'
%\end{align*}
\end{proof}

\paragraph{Step II: Transforming to a half-integral solution.}
%Here, our approach is identical to \citep{swamy2016improved}. 
%\avnote{What is the goal of this section? To me, it seems like location consolidation in~\cite{charikar2002constant}. Basically, the goal is construct a half-integral solution over facilities. Each facility is either closed or open with fractional value at least $1/2$. }
%\avnote{In this section, we show that there exists }
%\mynote{In this section, we show that there exists a half-integral solution $(\hat{x},\hat{y})$ to the well-separated instance whose cost to the well-separated instance is at most $3^p\cdot z*$ where $z*$ denotes the optimal cost of $\cllp(w, \sM)$.}
In this section, we provide a method to construct a half-integral solution $(w', \sM)$ from the optimal fractional solution $(x, y)$, previously denoted as $(x^*,y^*)$. More precisely, we start with the optimal fractional solution $(x,y)$ and after two steps, construct a feasible half-integral solution of the well-separated instance whose cost is not more than $3^p \cdot z^*$.

First, we define a few useful notions for our algorithm in this section and its analysis. For every client $v\in P'$, we define $F(v)$ to be the set of all facilities $u$ such that $v$ is the closest client to $u$ in $P'$, i.e. $F(v):= \{u\in \sF: d(v,u) = \min_{s\in P'} d(s,u)\}$ with ties broken arbitrarily. 
Furthermore, let $F'(v)\subseteq F(v) := \{u \in F(v): d(u,v )\leq 2^{1/p}\cdot \sR(v)\}$. 
%\avnote{$F'_j$ should be defined as $F'_j\subseteq F_j \coloneqq \{i\in F_j:c_{ij}\leq 2^{1/p}\cdot \bar{C}_j\}$. Otherwise, $F'_j \subseteq G_j$ and $\sum_{i\in F'_j}x_{ij}$ do not hold.}
Lastly, for each client $v$, we define $\gamma_v := \min_{u\notin F(v)} d(v,u)$ and let $G(v) := \{u\in F(v): d(v,u)\leq \gamma_v\}$.

First, we show that for every client $v$, all ``nearby'' facilities are contained in $F(v)$. 
\begin{lemma}\label{lem:nearby-facilities}
For every client $v \in P'$, the set $F(v)$ contains all the facilities $u$ such that $d(v,u)\leq 2^{1/p} \cdot \sR(v)$.
\end{lemma}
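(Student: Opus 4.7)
The plan is to take a facility $u$ satisfying $d(v,u) \leq 2^{1/p}\sR(v)$ and verify directly that $v$ is the unique closest client in $P'$ to $u$. Concretely, I would fix an arbitrary client $s \in P' \setminus \{v\}$ and show $d(s,u) > d(v,u)$; this immediately yields $u \in F(v)$, regardless of how ties are resolved.

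The main tool is Claim~\ref{clm:well-separated} (well-separatedness of $P'$): since both $v,s \in P'$, we have
\[
  d(v,s) > 2^{\frac{p+1}{p}}\max\{\sR(v),\sR(s)\} \geq 2^{\frac{p+1}{p}}\sR(v) \;=\; 2 \cdot 2^{1/p}\,\sR(v).
\]
Combined with the ordinary (not $\ell_p$-approximate) triangle inequality $d(v,s) \leq d(v,u) + d(u,s)$, this gives
\[
  d(u,s) \;\geq\; d(v,s) - d(v,u) \;>\; 2 \cdot 2^{1/p}\sR(v) - 2^{1/p}\sR(v) \;=\; 2^{1/p}\sR(v) \;\geq\; d(v,u),
\]
where the last inequality uses the hypothesis on $u$. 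Since $s \in P' \setminus \{v\}$ was arbitrary, $v$ achieves the minimum of $d(\cdot,u)$ over $P'$, so $u \in F(v)$ by definition.

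There is no real obstacle here; the point to highlight is simply that the separation constant $2^{(p+1)/p} = 2\cdot 2^{1/p}$ used in Algorithm~\ref{alg:consolidating_clients} was calibrated precisely so that subtracting the ``slack'' $d(v,u) \leq 2^{1/p}\sR(v)$ still leaves at least $2^{1/p}\sR(v)$ on the right-hand side. This matching of constants is what makes the argument go through with the standard triangle inequality; no appeal to the $\alpha_p$-approximate version (Corollary~\ref{cor:gen-triangle-ineq}) is needed, which is why the lemma holds for every $p \geq 1$ with the clean constant $2^{1/p}$.
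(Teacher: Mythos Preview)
Your proof is correct and follows essentially the same approach as the paper: both combine the well-separatedness property (Claim~\ref{clm:well-separated}) with a triangle inequality to rule out any other client of $P'$ being at least as close to $u$. The only difference is cosmetic---the paper argues by contradiction using the $2^{p-1}$-approximate triangle inequality on $d(\cdot,\cdot)^p$, whereas you argue directly with the ordinary triangle inequality on $d$, which is slightly cleaner.
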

\begin{proof}
Suppose that there exists a facility $u\notin F(v)$ with $d(v,u) \leq 2^{1/p} \cdot \sR(v)$. Hence, there exists a client $v'\in P'$ such that $d(v', u) \leq d(v, u) \leq 2^{1/p} \cdot \sR(v)$. By the approximate triangle inequality for $d(\cdot, \cdot)^p$,
\begin{align*}
 d(v, v')^p & \leq 2^{p-1}\cdot(d(v, u)^p + d(u, v')^p) \\
 & \leq 2^{p}\cdot d(v,u)^p &&\rhd\text{since $d(v', u) \leq d(v, u)$}\\
 & \leq 2^{p+1}\cdot \sR(v)^p,
\end{align*}
which contradicts the well-separatedness property of clients in $P'$.
%Proof by contradiction.
%Let $i$ be a facility such that $c_{ij}\leq 2^{\sfrac{1}{p}}\bar{C}_j$ with $i\not \in F_j$. 
%Then there exists $k\in D$ such that $c_{ik}\leq 2^{\sfrac{1}{p}}\bar{C}_j$
%Then by the approximate triangle inequality:
%\begin{align*}
%\begin{split}
% c^p_{jk} & \leq 2^{p-1}\cdot(c^p_{ij}+c^p_{ik}) &&\rhd\text{approximate triangle inequality~\eqref{ApproxInequality}}\\
% & \leq 2^{p-1}\cdot((2^{\sfrac{1}{p}}\bar{C}_j)^p+(2^{\sfrac{1}{p}}\bar{C}_j)^p)\\
% & \leq 2^{p-1}\cdot (2 \bar{C}^p_j+2 \bar{C}^p_j)\\
% & \leq 2^{p+1}\bar{C}^p_j
%\end{split}
%\end{align*}
%Therefore $c_{ik} \leq 2^{\frac{p+1}{p}}\bar{C}_j$ which contradicts lemma \ref{minDistanceCenters}
\end{proof}
\begin{corollary}\label{cor:gamma}
For every client $v\in P'$, $\gamma_v > 2^{1/p} \cdot \sR(v)$. In particular, for every client $v \in P'$, $F'(v) \subseteq G(v)$.
\end{corollary}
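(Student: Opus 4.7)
The plan is to derive both claims of the corollary as essentially immediate consequences of Lemma~\ref{lem:nearby-facilities}, which already does the real work; the corollary only repackages it in a form convenient for the half-integrality argument that follows. The only thing to be careful about is the strict versus non-strict inequality in the definition of $\gamma_v$ versus that of $F'(v)$, which determines whether the inclusion $F'(v)\subseteq G(v)$ requires anything beyond the first claim.

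First I would prove the strict inequality $\gamma_v > 2^{1/p}\cdot \sR(v)$ by contrapositive. By definition $\gamma_v = \min_{u\notin F(v)} d(v,u)$, so it suffices to show that every facility $u$ with $d(v,u) \leq 2^{1/p}\cdot \sR(v)$ lies in $F(v)$. But this is exactly what Lemma~\ref{lem:nearby-facilities} asserts. Hence every $u\notin F(v)$ satisfies $d(v,u) > 2^{1/p}\cdot \sR(v)$, and taking the minimum over such $u$ gives $\gamma_v > 2^{1/p}\cdot \sR(v)$, as desired. (The strictness is important here and comes for free, because the lemma guarantees that any $u$ attaining $d(v,u)\leq 2^{1/p}\cdot\sR(v)$ cannot be among the facilities outside $F(v)$ over which the minimum defining $\gamma_v$ is taken.)

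Next I would deduce $F'(v)\subseteq G(v)$ from the first part together with the definitions of the two sets. Take any $u\in F'(v)$. By definition of $F'(v)$, we have $u\in F(v)$ and $d(v,u)\leq 2^{1/p}\cdot \sR(v)$. Combining the latter with the just-proved bound $\gamma_v > 2^{1/p}\cdot \sR(v)$ yields $d(v,u) < \gamma_v$, in particular $d(v,u)\leq \gamma_v$. Thus $u\in G(v)$ by definition of $G(v)$, completing the inclusion.

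Given the brevity of the argument, there is no real obstacle — once Lemma~\ref{lem:nearby-facilities} is in hand, the corollary is a one-line consequence. The only minor subtlety to flag is making sure the strict inequality $\gamma_v > 2^{1/p}\sR(v)$ is extracted (rather than merely $\geq$), since it is this strictness that lets $F'(v)$ (defined with a non-strict bound $\leq 2^{1/p}\sR(v)$) sit inside $G(v)$ (defined with $\leq \gamma_v$) without any additional tie-breaking argument.
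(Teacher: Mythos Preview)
Your proof is correct and essentially identical to the paper's: both parts follow immediately from Lemma~\ref{lem:nearby-facilities}, with the first part argued by contradiction/contrapositive and the second by combining the first part with the definitions of $F'(v)$ and $G(v)$. The paper phrases the first step as a direct contradiction (assume $\gamma_v\le 2^{1/p}\sR(v)$, take the attaining $u\notin F(v)$, and invoke the lemma), which is exactly your contrapositive rewritten.
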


\begin{proof}
Let $u\in \sF \setminus F(v)$ be the facility such that $\gamma_v=d(u,v)\leq 2^{1/p} \cdot \sR(v)$. Then, by Lemma~\ref{lem:nearby-facilities}, facility $u$ is element of $F(v)$, giving a contradiction. For the second part, consider a facility $u\in F'(v)$. Then, by definition of $F'$ and the first part of this proof, $d(u,v)\leq 2^{1/p} \cdot \sR(v) < \gamma_v$ and therefore $u\in G(v)$.
\end{proof}
%So $\gamma_j\geq 2^{\sfrac{1}{p}}\bar{C}_j$ and therefore $F'_j\subseteq G_j$ for all $p\geq 1$.
%Also, $\sum_{i\in F'_j}x_{ij} = \sum_{i:c_{ij}\leq 2\bar{C}_j}x_{ij}\geq \frac{1}{2}$ holds due to Markov's inequality.
%The sets $F_j$ for $j\in D$ are disjoint.
For convenience, we provide a simple subroutine that computes the optimal assignments of clients (i.e. $x$) for any given feasible (possibly {\em fractional}) set of open facilities $y$. %Note that this algorithm works for both the original as well as the well-separated instance. Furthermore, the constructed assignments $x$ keep the numeric property of $y$, i.e. if $y$ is fractional/half-integral/integral, then $x$ is also fractional/half-integral/integral respectively.
Note that if $y$ is integral (resp. half-integral), the resulting optimal assignment $x$ is integral (resp. half-integral) too.

\begin{algorithm}[h]
	\begin{algorithmic}[1]
		\STATE {\bfseries Input:} open facilities $y$ and demand function $w: P \rightarrow \mathbb{R}$
		\STATE \textbf{initialize} $x_{uv}\gets 0$ \textbf{for all} $u\in \sF$, $v\in P$. 
		\WHILE{there exists $v\in P$ with $w(v)>0$ and $\sum_{u\in \sF}x_{uv}<1$}
		\STATE {\bf find} a closest facility $u$ to $v$ that satisfies $y_u-x_{uv}>0$.
		\STATE $x_{uv}\gets \min(1-\sum_{u\in \sF}x_{uv},y_u)$.\label{algo:line:RespectProp}
		\ENDWHILE
		\RETURN $x$
    \end{algorithmic}
	\caption{constructs an optimal assignment of clients to a given set of facilities.}
	\label{alg:assign_optimal}
\end{algorithm}

%\avnote{maybe it is better not to call this solution well-separated. It may be confusing given that we have well-separated instance. You can just simply define the solution as I what mentioned ``intermediate'' solution later in the section.}Given the optimal solution to $\cllp(w,\sM)$, we can obtain a fractional solution to the well-separated instance $(x',y')$ that, as we will later show, has some useful properties.
%\begin{definition}[G-restricted-solution]\label{def:G-restricted}
%For all $u \in \sF$, define $y'_u$:
%$$
%y'_u :=
%\begin{cases}
%      x_{uv} &\text{if } \exists v\in P': u\in G(v)\\
%      0 &\text{otherwise}
%\end{cases} $$
%Note that if there is no $v\in P'$ such that $u\in G(v)$,
%then $y_u = 0$.
%Further, we invoke Algorithm~\ref{alg:assign_optimal} to compute an optimal %assignment $x'$ w.r.t $y'$ and $w'$. See Figure~\ref{fig:UseOnlyGs} for an %illustration of solution $(x',y')$.
%\end{definition}
Next, we show another useful property of $F(\cdot)$ which is crucial in constructing the half-integral solution of the well-separated instance.
\begin{claim}\label{claim:markov}
For every client $v\in P'$, $\sum_{u \in F'(v)} x_{vu} \geq 1/2$.
\end{claim}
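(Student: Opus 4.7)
The plan is a straightforward Markov-style argument applied to the quantity $\sR(v)^p = \sum_{u\in\sF} d(v,u)^p\, x_{vu}$. The key observation is that by Lemma~\ref{lem:nearby-facilities}, every facility $u$ with $d(v,u)\le 2^{1/p}\sR(v)$ already belongs to $F(v)$, so in fact
\[
 F'(v) = \{u\in \sF : d(v,u)\le 2^{1/p}\cdot \sR(v)\}.
\]
In other words, $F'(v)$ is exactly the ``close'' set, and $\sF\setminus F'(v)$ is the ``far'' set where $d(v,u)^p > 2\cdot \sR(v)^p$.

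I will then argue by contradiction. Suppose $\sum_{u\in F'(v)} x_{vu} < 1/2$. Using constraint~\eqref{casgn} (or the WLOG assumption that $\sum_{u} x_{vu}=1$), this forces $\sum_{u\notin F'(v)} x_{vu} > 1/2$. Estimating the contribution of the far facilities alone:
\[
 \sR(v)^p \;=\; \sum_{u\in\sF} d(v,u)^p\, x_{vu} \;\ge\; \sum_{u\notin F'(v)} d(v,u)^p\, x_{vu} \;>\; 2\cdot \sR(v)^p\cdot \tfrac{1}{2} \;=\; \sR(v)^p,
\]
which is a contradiction. Hence $\sum_{u\in F'(v)} x_{vu}\ge 1/2$, as claimed.

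There is essentially no obstacle here: the only subtlety is making sure that the inclusion $\{u : d(v,u)\le 2^{1/p}\sR(v)\}\subseteq F(v)$ given by Lemma~\ref{lem:nearby-facilities} is invoked correctly so that the ``close'' set really coincides with $F'(v)$ (otherwise the argument would only lower-bound the mass on $\{u : d(v,u)\le 2^{1/p}\sR(v)\}$, which might strictly contain $F'(v)$). Given that, the threshold $2^{1/p}$ in the definition of $F'(v)$ was chosen precisely so that $d(v,u)^p>2\sR(v)^p$ on the complement, making the Markov bound yield exactly the constant $1/2$.
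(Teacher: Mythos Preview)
Your argument is correct and is essentially the same Markov-type bound as in the paper: both show that facilities outside $F'(v)$ satisfy $d(v,u)^p>2\,\sR(v)^p$ (you justify this via Lemma~\ref{lem:nearby-facilities}, the paper via Corollary~\ref{cor:gamma}), and then conclude that the $x_{vu}$-mass outside $F'(v)$ is at most $1/2$. The only cosmetic difference is that you phrase it as a contradiction while the paper argues directly.
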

\begin{proof}
To prove the statement, we show that $\sum_{u \notin F'(v)} x_{vu} \leq 1/2$.
\begin{align*}
    \sum_{u \notin F'(v)} x_{vu} \cdot 2\cdot \sR(v)^p 
    &\leq \sum_{u \notin F'(v)} x_{vu} \cdot d(u,v)^p &&\rhd\text{by Corollary~\ref{cor:gamma}} \\
    &\leq \sum_{u \in \sF} x_{vu} \cdot d(u,v)^p = \sR(v)^p
\end{align*}
Hence, $\sum_{u \notin F'(v)} x_{vu} \leq 1/2$. Since in an optimal solution of $\cllp(w, \sM)$, for every client $v$, $\sum_{u\in \sF} x_{vu} =1$. Thus, we have that $\sum_{u \in F'(v)} x_{vu} \geq 1/2$. 
\end{proof}
%\begin{figure}
%	\centering
%	\begin{tikzpicture}[x=1cm,y=1cm]
%	\draw [dotted, gray] (0,0) grid (10,4);
%	
%	\draw [blue,thick] (2,2) circle (1.5cm);
%	\node (Gv) [black] at (2,4) {$G(v)$};
%	\draw [blue,thick] (8,2) circle (1.5cm);
%	\node (Gv') [black] at (8,4) {$G(v')$};
%	\node (v) [black,label=below:{\small $v$}] at (2,2) {\textbullet};
%	\node (v') [black,label=below:{\small $v'$}] at (8,2) {\textbullet};
%	%\draw [|-|] (s1) to node [above]{$\leq 2\alpha \cdot r(c_1)$}(o1);
%	\end{tikzpicture}
%	\caption{In solution $(x',y')$ only facilities inside $G(v)$ for any $v\in P'$ are open. In the assignments $x'$ client $v$ may be assigned to facilities $G(v)$ or $G(v')$ of another $v'\in P'$.\mynote{TODO for me: improve figure to illustrate G-restricted-Solution}}\label{fig:UseOnlyGs}
%\end{figure}

%\avnote{TODO: add text to explain what we want to show here.}
Next, we show that in the well-separated instance, for every client $v\in P'$ there exists another client $v'\in P'$ whose ``nearby'' facilities are close to $v$ as well. More precisely,
\begin{claim}\label{clm:3gamma}
Consider a client $v\in P'$. Let $u\in F(v')$ be the facility such that $\gamma_v = d(v,u)$. Then, for every facility $u' \in F'(v')$, $d(v, u')\le 3\gamma_v$.
\end{claim}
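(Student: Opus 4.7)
The plan is to bound $d(v,u')$ by routing through $v'$, using the triangle inequality $d(v,u')\le d(v,v')+d(v',u')$ in the underlying metric, and then bounding each of the two legs separately by a small constant multiple of $\gamma_v$.

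For the first leg I would exploit the hypothesis $u\in F(v')$: by the definition of $F(\cdot)$, $v'$ is (one of) the closest client(s) in $P'$ to $u$, so $d(v',u)\le d(v,u)=\gamma_v$, and the triangle inequality gives $d(v,v')\le d(v,u)+d(u,v')\le 2\gamma_v$. For the second leg, since $u'\in F'(v')$ the definition of $F'$ yields $d(v',u')\le 2^{1/p}\sR(v')$, so the task reduces to showing $\sR(v')\le 2^{-1/p}\gamma_v$. Here I would invoke the well-separatedness of $P'$ established in Claim~\ref{clm:well-separated}, which asserts $d(v,v')>2^{(p+1)/p}\max\{\sR(v),\sR(v')\}\ge 2^{(p+1)/p}\sR(v')$. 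Combining this lower bound on $d(v,v')$ with the upper bound $d(v,v')\le 2\gamma_v$ from the previous sentence gives $2^{(p+1)/p}\sR(v')<2\gamma_v$, i.e., $\sR(v')<2^{-1/p}\gamma_v$, so $d(v',u')<\gamma_v$. Adding the two legs yields $d(v,u')\le 2\gamma_v+\gamma_v=3\gamma_v$, as claimed.

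The pivotal observation, which I expect to be the main (and only real) difficulty, is that the single hypothesis ``$u\in F(v')$ realizes $\gamma_v$'' does double duty: it locates $v'$ close to $v$ (within $2\gamma_v$) and it simultaneously forces $\sR(v')$ to be small through the well-separatedness bound, which has $d(v,v')$ on the large side. Neither consequence in isolation is strong enough---the geometric closeness alone does not control $d(v',u')$, and the well-separated inequality alone gives only a relative bound between $d(v,v')$ and $\sR(v')$---but chaining them via the common quantity $d(v,v')$ closes the argument without any case analysis on the relative magnitudes of $\sR(v)$ and $\sR(v')$.
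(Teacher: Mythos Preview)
Your proposal is correct and follows essentially the same route as the paper: route through $v'$, bound $d(v,v')\le 2\gamma_v$ using $u\in F(v')$, bound $d(v',u')\le\gamma_v$ by combining the definition of $F'(v')$ with the well-separatedness of $P'$, and add. The only cosmetic difference is that you work directly in the metric $d$ with the ordinary triangle inequality, whereas the paper carries out the identical steps on $d(\cdot,\cdot)^p$ via the approximate triangle inequalities of Corollary~\ref{cor:gen-triangle-ineq}; your version is arithmetically a bit cleaner.
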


\begin{proof}
See Figure~\ref{fig:scales} for an illustration of the relevant distances. First, we bound $d(v, v')$.
\begin{align}
 d(v, v')^p 
 &\leq 2^{p-1}\cdot(d(v, u)^p + d(u, v')^p) &&\rhd \text{approximate triangle inequality} \nonumber\\
 & \leq 2^{p}\cdot d(v, u)^p &&\rhd \text{since $u\in F(v')$} \nonumber\\
 & \leq 2^{p}\cdot \gamma_v^p \label{eq:clients-dist}
\end{align}
Furthermore, since clients in $P'$ are well-separated, by Claim~\ref{clm:well-separated}, $d(v, v')^p > 2^{p+1} \cdot \max(\sR(v),\sR(v'))^p$. 
Moreover, since $u' \in F'(v')$, by Lemma~\ref{lem:nearby-facilities}, $d(v', u')^p \leq 2 \cdot \sR(v')^p$. Hence,
\begin{align}
    d(v', u')^p 
    &\leq 2\cdot \sR(v')^p &&\rhd\text{by Lemma~\ref{lem:nearby-facilities}} \nonumber\\
    &\leq 2\cdot \max(\sR(v),\sR(v'))^p \nonumber\\
    &\leq \frac{1}{2^p} \cdot d(v, v')^p &&\rhd\text{by the well-separateness property of $P'$ (Claim~\ref{clm:well-separated})}\nonumber \\
    &\leq \gamma_v^p &&\rhd\text{Eq.~\eqref{eq:clients-dist}}\label{eq:client-facility-dist}
\end{align}
By an application of the general form of approximate triangle inequality for $d(\cdot, \cdot)^p$,
\begin{align*}
 d(v, u')^p 
 &\leq (\frac{3}{2})^{p-1}\cdot d(v, v')^p + 3^{p-1}\cdot d(v', u') &&\rhd\text{Corollary~\ref{cor:gen-triangle-ineq} with $\lambda =2$}\\
 &\leq 2\cdot 3^{p-1}\cdot \gamma_v^p + 3^{p-1}\cdot \gamma_v^p &&\rhd\text{Eq.~\eqref{eq:clients-dist} and~\eqref{eq:client-facility-dist}} \\
 &\leq 3^p \cdot \gamma_v^p 
\end{align*}
%Thus $c_{i'j}\leq 0.75^{\sfrac{1}{p}}\cdot 4\gamma_j\leq 4\gamma_j$

%The inequality $c_{jk}\le 2 \gamma_j$ derives from the following:
%\begin{align*}
%\begin{split}
% c^p_{jk} & \leq 2^{p-1}\cdot(c^p_{ij}+c^p_{ik})
% \hspace{40pt} \triangleright \text{approximate triangle inequality \ref{ApproxInequality}}\\
% & \leq 2^{p-1}\cdot 2\gamma_j^p\\
% & \leq 2^{p}\cdot \gamma_j^p\\
%\end{split}
%\end{align*}

%Furthermore, since $2^{p+1} \cdot \max(\bar{C}_k,\bar{C}_j)^p\le c^p_{jk}$ and $c^p_{i'k}\le 2^p \cdot \bar{C}^p_k$, it follows that
%\begin{align*}
%\begin{split}
% c^p_{i'k} & \leq 0.5\cdot 2^{p+1}\bar{C}_k^p\\
% & \leq 0.5\cdot c^p_{jk}\\
% & \leq 0.5\cdot 2^p\gamma_j^p
%\end{split}
%\end{align*}

%By approximate triangle inequality, we have:
%\begin{align*}
%\begin{split}
% c^p_{i'j} & \leq 2^{p-1}(c^p_{jk}+c^p_{i'k})\\
% & \leq 2^{p-1}(2^p \cdot\gamma^p_j+0.5\cdot 2^{p}\gamma^p_j)\\
% & \leq 2^{p-1}(1.5 \cdot 2^p \cdot\gamma^p_j)\\
% & \leq 1.5\cdot 2^{2p-1}\cdot \gamma_j^p\\
% & \leq 1.5 \cdot 0.5\cdot 2^{2p}\gamma_j^p\\
% & \leq 0.75 \cdot 2^{2p}\gamma_j^p
%\end{split}
%\end{align*}
%Thus $c_{i'j}\leq 0.75^{\sfrac{1}{p}}\cdot 4\gamma_j\leq 4\gamma_j$
%Most likely we instead get an 11-approximation
\end{proof}
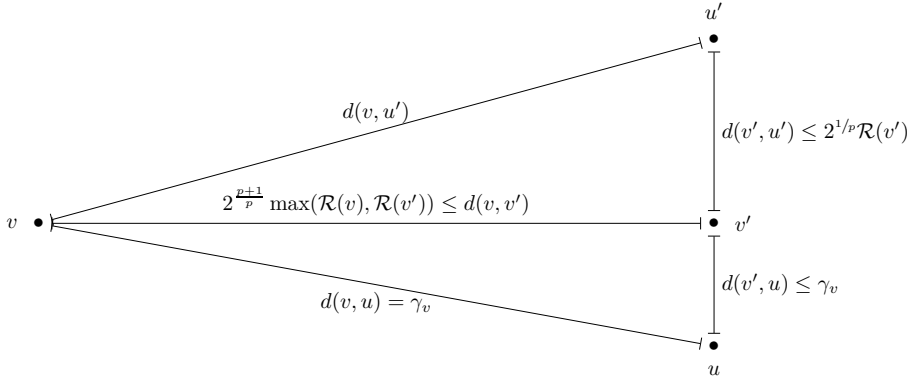
\begin{figure}
	\centering
	\resizebox{.75\textwidth}{!}{
	\begin{tikzpicture}[x=1cm,y=1cm]
	%\draw [dotted, gray] (0,0) grid (16,7);
	
	\node (v) [black,label=left:{$v$}] at (1,3) {\textbullet};
	\node (v') [black,label=right:{$v'$}] at (12,3) {\textbullet};
	\draw [|-|] (v) to node [above]{$2^{\frac{p+1}{p}} \max(\sR(v),\sR(v'))\le d(v,v')$}(v');
	\node (u) [black,label=below:{$u$}] at (12,1) {\textbullet};
	\draw [|-|] (v') to node [right]{$d(v',u)\le \gamma_v$}(u);
	\draw [|-|] (v) to node [below]{$d(v,u)=\gamma_v$}(u);
	\node (u') [black,label=above:{$u'$}] at (12,6) {\textbullet};
	\draw [|-|] (v') to node [right]{$d(v',u')\le 2^{\sfrac{1}{p}}\sR(v')$}(u');
	\draw [|-|] (v) to node [above]{$d(v,u')$}(u');
	\end{tikzpicture}
	}
    \caption{Illustration of the relevant distances to bound the distance between $v$ and $u'$.} %Assuming that $d(v,u)=\gamma_v$ the distance of $u$ to its closest cluster-center $v'$ is bounded by $d(v',u)\leq \gamma_v$. Since clients $v, v'$ are well-separated their minimum distance is $2^{\frac{p+1}{p}} \max(\sR(v),\sR(v'))\le d(v,v')$. Due to Claim \ref{claim:markov}, $v'$ must have a nearby facility  $u'\in F'(v')$ at distance $d(v',u')\le 2^{\sfrac{1}{p}}\sR(v')$. By applying the approximate triangle inequality twice, we can first bound the distance $d(v,v')$ by $2^p\cdot \gamma_v^p$ and then bound the distance $d(v,u')$ by $3^p\cdot \gamma_v^p$.}
	\label{fig:scales}
\end{figure}

Next, we prove the main theorem of this section. %which shows that there exist a half-integral solution of $\cllp(w', \sM)$ (i.e., the constructed well-separated instance) of cost at most $3^p z^*$ where $z^*$ denote the optimal cost of $\cllp(w, \sM)$.

\begin{theorem}\label{thm:main-half-integral}
Let $z^*$ denote the cost of an optimal solution of $\cllp(w, \sM)$. There exists a half-integral solution of $\cllp(w', \sM)$ of cost at most $3^p\cdot z^*$.
\end{theorem}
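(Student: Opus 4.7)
The starting point is Lemma~\ref{lem:feasible-sol}, which guarantees that the optimal solution $(x^*, y^*)$ of $\cllp(w,\sM)$ is also feasible for $\cllp(w',\sM)$ with cost at most $z^*$. Following the framework of~\citep{charikar2002constant,swamy2016improved}, the plan is to round $(x^*, y^*)$ into a half-integral $\hat y$ in two stages: (i) consolidate the fractional opening onto a structured support, and (ii) extract a half-integral extreme point of a polytope whose linear objective serves as a proxy for the actual assignment cost.

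In stage (i), for each $v \in P'$, I would consolidate the fractional mass $y^*|_{F(v)}$ onto $F'(v)$ together with a single ``backup'' facility $i_v \in F(v) \setminus F'(v)$ that aggregates the residual mass $y^*$ places in $F(v) \setminus F'(v)$ (chosen, say, as the closest such facility to $v$). Since the sets $\{F(v)\}_{v \in P'}$ partition $\bigcup_v F(v)$ by definition, this redistribution preserves every matroid constraint of $\sM$. Shifting mass within $F(v)$ toward $F'(v)$ only decreases the assignment cost of $v$'s demand drawn from $F(v)$, and by Claim~\ref{claim:markov} the resulting solution $y'$ satisfies $y'(F'(v)) \geq 1/2$.

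In stage (ii), I would set up a polytope $\mathcal{P}$ that encodes (a) the matroid base polytope of $\sM$, (b) the support restriction $y_u = 0$ for $u \notin \bigcup_{v \in P'}(F'(v) \cup \{i_v\})$, and (c) the lower-bound constraints $y(F'(v)) \geq 1/2$ for each $v \in P'$. By Claim~\ref{clm:well-separated} and Lemma~\ref{lem:nearby-facilities}, the sets $\{F'(v) \cup \{i_v\}\}_{v \in P'}$ are pairwise disjoint, so the support restriction meshes cleanly with the matroid constraints and effectively renders $\mathcal{P}$ an intersection of two matroid polytopes augmented with the constraints in (c). A standard uncrossing argument (as in~\citep{charikar2002constant,krishnaswamy2011matroid}) shows that the tight constraints at any extreme point of $\mathcal{P}$ form a laminar family; combined with the tight constraints in (c) being of the form $y(F'(v)) = 1/2$, this forces half-integrality of every extreme point. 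I would then minimize a proxy linear objective over $\mathcal{P}$ to obtain a half-integral $\hat y$ whose proxy cost is no larger than that of $y'$.

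Finally, to bound the actual assignment cost of $\hat y$ by $3^p z^*$, I would use a per-client charging argument. For each $v \in P'$, the at least $1/2$ fraction of its demand served from $F'(v)$ contributes at most $2\sR(v)^p$ per unit (by Lemma~\ref{lem:nearby-facilities}), which charges against $v$'s fractional contribution to $z^*$ up to a factor $3^p$. The remaining demand must be served from outside $F'(v)$: Claim~\ref{clm:3gamma} guarantees an open facility within distance $3\gamma_v$ of $v$ (since some neighboring $v' \in P'$ has $i_{v'} \in F(v)$, so its open facilities in $F'(v')$ lie within $3\gamma_v$ of $v$), and this portion is in turn charged against $v$'s fractional $\ell_p$ contribution from facilities outside $F'(v)$ in $(x^*, y^*)$ with a multiplicative loss of at most $3^p$. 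The main obstacle will be defining the proxy objective so that it is simultaneously (a) a linear function upper-bounding the actual assignment cost over $\mathcal{P}$ and (b) at most $3^p z^*$ when evaluated at $y'$; once that is in place, the half-integrality via uncrossing and the per-client cost accounting via Claim~\ref{clm:3gamma} give the desired bound.
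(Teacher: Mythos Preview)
Your proposal follows the right framework but has a real gap in stage~(i). You write that aggregating the residual mass in $F(v)\setminus F'(v)$ onto a single backup facility $i_v$ ``preserves every matroid constraint of $\sM$'' because the sets $\{F(v)\}_{v\in P'}$ partition their union. That inference fails for a general matroid: the rank inequalities $\sum_{u\in S}y_u\le r(S)$ must hold for \emph{every} $S\subseteq\sF$, not only for sets aligned with the partition $\{F(v)\}$. Concentrating mass onto $i_v$ can push $y_{i_v}$ above $y^*_{i_v}$ and thereby violate a rank constraint for some $S$ that contains $i_v$ but not the facilities whose mass you moved. The paper sidesteps this by never moving mass at all: it sets $y'_u:=x^*_{vu}$ for $u\in G(v)$ (and $0$ otherwise), so that $y'_u\le y^*_u$ pointwise and every rank inequality is preserved by monotonicity.

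A second, related gap is in your polytope. You impose only the lower bound $y(F'(v))\ge 1/2$ together with a support restriction, whereas the paper's polytope $\solsp$ also carries the upper bound $y(G(v))\le 1$ with $F'(v)\subseteq G(v)$. That upper bound does two jobs. First, it is exactly what makes the proxy objective
\[
\cst(y)=\sum_{u}f(u)\,y_u+\sum_{v\in P'}w'(v)\Bigl(\sum_{u\in G(v)}d(v,u)^p\,y_u+3^p\gamma_v^p\bigl(1-\textstyle\sum_{u\in G(v)}y_u\bigr)\Bigr)
\]
an upper bound on the true cost over the whole polytope (Lemma~\ref{lem:proxy-cost-sol-space}); without it the last term can go negative. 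Second, the nested pair $F'(v)\subseteq G(v)$ with bounds $1/2$ and $1$ is precisely the laminar structure that makes $2\solsp$ the intersection of two matroid polytopes, which is the actual source of half-integrality (Lemma~\ref{lem:half-integral-opt}, via~\citep[Appendix~A]{swamy2016improved}). Your support restriction to $F'(v)\cup\{i_v\}$ does not deliver either property, and you yourself flag the proxy construction as ``the main obstacle''. Once you drop the backup-facility idea and use $G(v)$ with the $\le 1$ constraint, the proxy $\cst$ above resolves that obstacle directly: $\cst(y')\le 3^p z^*$ follows because every $u\notin G(v)$ has $d(v,u)\ge\gamma_v$ (Lemma~\ref{lem:feasible-intermediate-sol}), and $\cst(\hat y)\ge\cost(\hat x,\hat y)$ follows from Claim~\ref{clm:3gamma} exactly as you anticipated.
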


We begin with an optimal solution $(x,y)$ of $\cllp(w, \sM)$ of cost $z^*$ which by Lemma~\ref{lem:feasible-sol} is a feasible solution of $\cllp(w', \sM)$ of cost at most $z^*$. The first step in the proof is to construct an ``intermediate'' {\em feasible} solution $(x', y')$ as follows. 
$$
y'_u :=
\begin{cases}
      x_{vu} &\text{if there exists a client }  v\in P' \text{ such that } u\in G(v)\\
      0 &\text{otherwise}
\end{cases} 
$$
Note that if there is no $v\in P'$ such that $u\in G(v)$, then $y_u = 0$. 
\begin{claim}\label{clm:disjoint-G}
For every client $v\in P'$, $\sum_{u\in G(v)} y'_{u} \leq 1$.
\end{claim}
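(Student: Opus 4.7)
\bigskip

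\textbf{Proof proposal.} The plan is to observe that the definition of $y'$ makes this claim an immediate consequence of the fact that the sets $F(v)$ partition the relevant facilities, combined with the normalization $\sum_u x_{vu} = 1$ in the optimal LP solution.

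First, I would note that by the definition of $F(v) = \{u \in \sF : d(v,u) = \min_{s\in P'} d(s,u)\}$ with ties broken arbitrarily, the collection $\{F(v)\}_{v \in P'}$ is a pairwise disjoint family of subsets of $\sF$. Since $G(v) \subseteq F(v)$ for each $v \in P'$, the collection $\{G(v)\}_{v \in P'}$ is also pairwise disjoint. Consequently, for any facility $u \in \bigcup_{v \in P'} G(v)$, there is a \emph{unique} client $v \in P'$ with $u \in G(v)$, and that client is precisely the $v$ we are considering when summing over $G(v)$.

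Next, using this uniqueness, I would unfold the definition of $y'_u$: for $u \in G(v)$, we have $y'_u = x_{vu}$, because the ``witness'' client in the definition of $y'$ is forced to be $v$ itself. Therefore
\begin{align*}
\sum_{u \in G(v)} y'_u \;=\; \sum_{u \in G(v)} x_{vu} \;\leq\; \sum_{u \in \sF} x_{vu} \;=\; 1,
\end{align*}
where the final equality uses the standing assumption (stated immediately after the LP relaxation) that an optimal solution $(x,y)$ satisfies $\sum_{u \in \sF} x_{vu} = 1$ for every client $v$.

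The only real ``obstacle'' is the bookkeeping observation that the case split in the definition of $y'$ is unambiguous on $G(v)$; once one notices that $\{F(v)\}_{v \in P'}$ partitions its union (ties being broken so that each $u$ is assigned to exactly one $F(v)$), the proof reduces to inserting the definitions and invoking the LP constraint $\sum_u x_{vu} \geq 1$ together with the tightness assumption on the optimal solution.
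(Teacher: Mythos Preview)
Your proposal is correct and matches the paper's own proof essentially line for line: both argue that the disjointness of $\{F(v)\}_{v\in P'}$ forces $\{G(v)\}_{v\in P'}$ to be disjoint, hence $y'_u = x_{vu}$ for $u\in G(v)$, and then conclude via $\sum_{u\in G(v)} x_{vu} \le \sum_{u\in\sF} x_{vu} = 1$ using the tightness of the optimal LP solution.
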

\begin{proof}
Note that since the sets of facilities $\{F(v)\}_{v\in P'}$ are disjoint and for each $v\in P'$, $G(v) \subseteq F(v)$, the sets $\{G(v)\}_{v\in P'}$ are disjoint too.
Thus, by the way we constructed the solution $y'$, 
\begin{align*}
    \sum_{u\in G(v)} y'_v = \sum_{u\in G(v)} x_{vu} \leq \sum_{u\in \sF} x_{vu} =1,
\end{align*}
where the first equality follows from the disjointness of $\{G(v)\}_{v\in P'}$ and the second equality follows from the optimality of the solution $(x,y)$.
\end{proof}

%For every facility $u\in \bigcup_{v\in P'} G(v)$, we set $y'_v = y_v$. Moreover, For every facility $u\in \sF\setminus \bigcup_{v\in P'} G(v)$, we set $y'_u = 0$. In other words, a facility remains open only if it is a nearby facility of a client in $P'$. While $(x', y')$ has a simpler structure compared to $(x,y)$, its cost is within a constant factor of $z^*$.
%\begin{itemize}
%\item For every facility $u\in \sF\setminus \bigcup_{v\in P'} G(v)$, $y'_u = 0$. In other words, a facility may be open only if it is a nearby facility of a client in $P'$. For every facility $u\in \bigcup_{v\in P'} G(v)$, $y'_v = y_v$.
%\item The cost of $(x', y')$ is not larger than $z^*$ by more than a constant factor. 
%\end{itemize}
%In fact, we show a stronger upper bound on the cost of $(x', y')$. 

Next, following the approach of~\citep{swamy2016improved}, we introduce a {\em modified cost function} $\cst$ that serves as a proxy for the actual cost: 
\begin{align}\label{eq:proxy-cost}
    \cst(y') = \sum_{u\in \sF} y'_u \cdot f(u) + \sum_{v\in P'} w'(v) \cdot \Big(\sum_{u\in G(v)} d(v, u)^p \cdot y'_{u} + 3^p \gamma^p (1 - \sum_{u\in G(v)} y'_u)\Big)
\end{align}
This is crucial in bounding the cost of $(x', y')$, where $x'$ is the optimal assignment w.r.t. $y'$ and $w'$ as constructed by Algorithm~\ref{alg:assign_optimal}. Furthermore, the cost function $\cst$ plays an important role in showing the existence of a good approximate half-integral solution of $\cllp(w', \sM)$.

\begin{lemma}\label{lem:feasible-intermediate-sol}
The solution $(x', y')$ is a feasible solution of $\cllp(w', \sM)$ and $\cost(x',y') \leq \cst(y') \leq 3^p \cdot z^*$.
\end{lemma}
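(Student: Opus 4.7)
My plan is to verify feasibility of $(x',y')$ for $\cllp(w',\sM)$, then prove the two inequalities $\cost(x',y')\le \cst(y')$ and $\cst(y')\le 3^p z^*$ separately. For feasibility, the matroid constraint~\eqref{matroid_constraint} is inherited from $y$, because whenever $y'_u\ne 0$ the definition of $y'$ and constraint~\eqref{under_assignment} give $y'_u=x_{vu}\le y_u$, so $y'\le y$ pointwise. Constraint~\eqref{under_assignment} for $(x',y')$ is enforced by line~\ref{algo:line:RespectProp} of Algorithm~\ref{alg:assign_optimal}. The nontrivial check is $\sum_u x'_{vu}\ge 1$ for every $v\in P'$, and here I will argue that $v$ has enough reachable $y'$-capacity: Corollary~\ref{cor:gamma} together with Claim~\ref{claim:markov} yields $\sum_{u\in G(v)} y'_u=\sum_{u\in G(v)} x_{vu}\ge 1/2$, and Claim~\ref{clm:3gamma} supplies a distinct client $v'\in P'$ whose $F'(v')\subseteq G(v')$ is disjoint from $G(v)$ and contributes another $\ge 1/2$ units by the same bound.

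For $\cost(x',y')\le \cst(y')$, since Algorithm~\ref{alg:assign_optimal} outputs an optimal assignment for the fixed $(y',w')$, it suffices to exhibit any feasible assignment $\tilde{x}$ whose cost is at most $\cst(y')$. I will construct $\tilde{x}$ by routing, for each $v\in P'$, the full $y'_u$ mass to each $u\in G(v)$ (incurring $\sum_{u\in G(v)} d(v,u)^p y'_u$), and then sending the residual $1-\sum_{u\in G(v)} y'_u\le 1/2$ to facilities in $F'(v')$ for the backup client $v'$ of Claim~\ref{clm:3gamma}. That claim gives $d(v,u')\le 3\gamma_v$ for every $u'\in F'(v')$, so the residual contribution is at most $3^p\gamma_v^p(1-\sum_{u\in G(v)} y'_u)$; weighting by $w'(v)$ and adding the facility opening term $\sum_u f(u) y'_u$ exactly reproduces the right-hand side of~\eqref{eq:proxy-cost}.

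For $\cst(y')\le 3^p z^*$, I will substitute $y'_u = x_{vu}$ on $G(v)$ and use $\sum_u x_{vu}=1$ to rewrite each client-$v$ bracket in~\eqref{eq:proxy-cost} as $\sum_{u\in G(v)} d(v,u)^p x_{vu} + 3^p\gamma_v^p \sum_{u\notin G(v)} x_{vu}$. By definition of $\gamma_v$ and $G(v)$, every $u\notin G(v)$ satisfies $d(v,u)\ge \gamma_v$, so the second summand is at most $3^p\sum_{u\notin G(v)} d(v,u)^p x_{vu}$, making the whole bracket at most $3^p\sR(v)^p$. Combined with $\sum_u f(u) y'_u \le \sum_u f(u) y_u$ (from $y'\le y$) and Lemma~\ref{lem:feasible-sol}, which implies $\sum_u f(u) y_u+\sum_{v\in P'} w'(v)\sR(v)^p\le z^*$, this yields $\cst(y')\le 3^p z^*$. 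The main obstacle I anticipate is the feasibility check: one needs to verify carefully that the capacities from $G(v)$ and $F'(v')$ are genuinely disjoint (which follows from the disjointness of the $F(\cdot)$'s across $P'$) and to handle the degenerate case $|P'|=1$, where $F(v)=\sF$ forces $\sum_u y'_u=1$ trivially.
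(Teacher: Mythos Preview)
Your proposal is correct and follows essentially the same argument as the paper. The only organizational difference is that the paper factors the inequality $\cost(x',y')\le \cst(y')$ through membership of $y'$ in the polytope $\solsp$ of~\eqref{eq:sol-space} and then invokes the general Lemma~\ref{lem:proxy-cost-sol-space} (whose proof is exactly your backup-assignment construction via Claim~\ref{clm:3gamma}), whereas you carry out that construction inline for $y'$ directly; the paper's modular route is reused later for the half-integral $y''$, but for the present lemma the two arguments coincide.
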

\begin{proof}
For every facility $u\in \bigcup_{v\in P'} G(v)$, let $v$ denote the client such that $u\in G(v)$. Note that the disjointness of $\{G(v)\}_{v\in P'}$ implies the uniqueness of such client.
Note that for every $S\subseteq \sF$, 
\begin{align*}
    \sum_{u\in S} y'_u 
    &= \sum_{u\in \bigcup_{v\in P'}G(v) \cap S} y'_u + \sum_{u\in S\setminus \bigcup_{v\in P'}G(v)} y'_u\\
    &= \sum_{u\in \bigcup_{v\in P'}G(v) \cap S} x_{vu} &&\rhd\text{by the definition of $y'$}\\
    &\leq \sum_{u\in S} y_u \leq r(S), 
\end{align*} 
where the last two inequalities hold since $(x,y)$ is a feasible solution of $\cllp(w', \sM)$. Hence, $(x',y')$ satisfies the matroid constraint $\sM$ (i.e., constraint~\eqref{matroid_constraint} in $\cllp$).

Furthermore, line~\ref{algo:line:RespectProp} of Algorithm~\ref{alg:assign_optimal} ensures the constraints~\eqref{casgn} and~\eqref{under_assignment} of $\cllp(w',\sM)$ are satisfied by $(x', y')$. Hence, $(x',y')$ is a feasible solution of $\cllp(w' ,\sM)$. 

To prove $\cost(x',y') \leq \cst(y')$ we show that $y'$ is contained in the polytope $\solsp$---the polytope $\solsp$ is defined formally in~\eqref{eq:sol-space}. Therefore, by Lemma~\ref{lem:proxy-cost-sol-space}, $\cst(y')\geq \cost(x', y')$.
The first condition of $\solsp$ encodes the matroid independence constraint with $\sum_{u\in S}y'_u \leq r(S)$. As already stated in the context of the feasibility of $(x', y')$ for $\cllp(w', \sM)$, $y'$ satisfies the matroid constraint.
Secondly, for each client $v \in P'$,
\begin{align*}
    \sum_{u\in F'(v)} y'_u 
    &= \sum_{u\in F'(v)} x_{vu} &&\rhd\text{by the definition of $y'$ and since $\forall v\in P', F'(v)\subseteq G(v)$} \\
    &\geq 1/2 &&\rhd\text{by Claim~\ref{claim:markov}}
\end{align*}
Finally, by Claim~\ref{clm:disjoint-G}, for every $v\in P'$, $\sum_{u\in G(v)}y'_u \leq 1$. Hence, $y' \in \solsp$.

Next, we show that $\cst(y') \leq 3^p \cdot z^*$.
\begin{align}
    \cst(y')
    &=\sum_{u\in \sF}f(u)\cdot y'_u + \sum_{v\in P'} w'(v) \cdot \big(\sum_{u\in G(v)}d(v,u)^p\cdot y'_u + 3^p\gamma^p_v (1-\sum_{u\in G(v)}y'_u)\big) \nonumber \\
    & \leq \sum_{u\in \sF}f(u)\cdot y'_u + \sum_{v\in P'}w'(v) \cdot \big(\sum_{u\in G(v)}d(v,u)^p \cdot x_{vu}+ 3^p \sum_{u\in \sF \setminus G(v)}d(v,u)^p \cdot x_{vu}\big) \label{ineq:F'-G}\\
    &\leq \sum_{u\in \sF}f(u)\cdot y_u+3^p\sum_{v\in P'}w'(v)\sum_{u\in \sF}d(v,u)^p\cdot x_{uv} \label{ineq:y'-y} \\
    & \leq 3^p\cdot z^* \nonumber &&\rhd\text{by Lemma~\ref{lem:feasible-sol}}
\end{align}
Inequality~\eqref{ineq:F'-G} holds since by the definition of $G(v)$, for every client $v\in P'$ and facility $u\in \sF\setminus G(v)$, $d(v,u) > \gamma_v$. Inequality~\eqref{ineq:y'-y} holds since for any non-zero $y'_u$ there exist a client $v$ such that $y'_u \leq x_{vu}\leq y_u$---note that if $y'_u=0$ then $y'_u\leq y_u$ holds trivially. %By the feasibility of $(x,y)$ for $\cllp(w', \sM)$, $x_{vu} \leq y_u$; hence, for every client in $u\in \sF$, $y'_u \leq y_u$. 
\end{proof}
Besides the fact that the modified cost function $\cst$ provides an upper bound for the actual cost of {\em certain solutions} of the well-separated instance, by the standard results for the matroid intersection problem, we can find a half-integral solution $(x'', y'')$ whose modified cost $\cst$ is minimized. In particular, we can find a half-integral feasible solution $(x'', y'')$ of the well-separated instance such that $\cst(y'') \leq \cst(y')$. Before describing our method for constructing $(x'', y'')$, we formally define the set of solutions from which we choose the half-integral solution $(x'', y'')$.
\begin{align}\label{eq:sol-space}
\solsp := \{y \in \mathbb{R}^\mathcal{F}_+:\sum_{u\in S}y_u \leq r(S) \quad \forall S\subseteq \sF,\quad 1/2 \leq \sum_{u\in F'(v)} y_u, \sum_{u\in G(v)} y_u\leq 1 \quad \forall v \in P'\}
\end{align}
Note that $y'\in \solsp$ (it is formally proved in the proof of Lemma~\ref{lem:feasible-intermediate-sol}).
First, we show that for any solution $\bar{y} \in \solsp$ and its optimal assignment $\bar{x}$ w.r.t. $\bar{y}$ and $w'$ (e.g. as described in Algorithm~\ref{alg:assign_optimal}), $\cost(\bar{x}, \bar{y}) \leq \cst(\bar{y})$.

\begin{claim}\label{clm:feasible-polytope}
For every feasible solution $\bar{y}\in \solsp$ and any feasible assignment $\tilde{x}$ w.r.t. $\bar{y}$ and $w'$, the solution $(\tilde{x}, \bar{y})$ is a feasible solution of $\cllp(w', \sM)$. 
\end{claim}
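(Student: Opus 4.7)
The plan is to verify, one by one, the three families of constraints in $\cllp(w',\sM)$ for the candidate pair $(\tilde{x},\bar{y})$.

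The matroid constraint~\eqref{matroid_constraint} is the simplest: the very definition of $\solsp$ in Eq.~\eqref{eq:sol-space} already imposes $\sum_{u\in S}\bar{y}_u \leq r(S)$ for every $S\subseteq \sF$, so this holds by hypothesis on $\bar{y}$. For the under-assignment constraint~\eqref{under_assignment}, the bounds $0\leq \tilde{x}_{vu}\leq \bar{y}_u$ are precisely what ``feasible assignment w.r.t.~$\bar{y}$ and $w'$'' means---in particular, if $\tilde{x}$ is produced by Algorithm~\ref{alg:assign_optimal} on inputs $(\bar{y}, w')$, line~\ref{algo:line:RespectProp} directly enforces both inequalities throughout the loop.

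The only substantive verification is the client-coverage constraint~\eqref{casgn}: $\sum_{u\in\sF}\tilde{x}_{vu}\geq 1$ for every $v\in P$. For clients $v\in P'$ (the only ones with positive demand $w'(v)$), the assignment procedure can saturate $v$ provided that $\sum_{u\in\sF}\bar{y}_u\geq 1$. Here I would invoke two facts already available in the paper: (i) the sets $\{F(v)\}_{v\in P'}$ are pairwise disjoint (as used in the proof of Claim~\ref{clm:disjoint-G}), and (ii) $\solsp$ demands $\sum_{u\in F'(v)}\bar{y}_u\geq 1/2$ while $F'(v)\subseteq F(v)$. Combining them yields $\sum_{u\in\sF}\bar{y}_u\geq |P'|/2$, which is at least $1$ in any nontrivial instance. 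For $v\in P\setminus P'$, the weight $w'(v)$ is zero and the assignment can be extended arbitrarily---e.g., by setting $\tilde{x}_{vu}=\bar{y}_u/\sum_{u'\in\sF}\bar{y}_{u'}$---so~\eqref{casgn} is satisfied at no contribution to the objective.

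I expect no real obstacle: the lemma is essentially stitching together three near-tautological verifications, since $\solsp$ was engineered precisely to encode (a strengthening of) the LP constraints on the $y$-side while the assignment subroutine encodes them on the $x$-side. The only point worth noting is the degenerate $|P'|\leq 1$ corner case, which can be sidestepped by the mild non-degeneracy assumption that the original instance has at least two demand points; otherwise the problem is trivial anyway.
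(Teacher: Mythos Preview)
Your proposal is correct and follows the same constraint-by-constraint verification as the paper's proof. The paper's own argument is even terser: it treats ``feasible assignment w.r.t.~$\bar{y}$ and $w'$'' as, by hypothesis, already encoding constraints~\eqref{casgn} and~\eqref{under_assignment}, so the claim reduces to the single observation that $\solsp$ contains the matroid constraint~\eqref{matroid_constraint}.

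Your third paragraph, on client coverage, actually argues something the claim does not ask for---namely, that a feasible assignment \emph{exists} (i.e., that $\sum_{u\in\sF}\bar{y}_u\geq 1$ so Algorithm~\ref{alg:assign_optimal} terminates with full coverage). The claim takes $\tilde{x}$ as given, so this existence question is outside its scope. Your remark about extending $\tilde{x}$ to zero-weight clients in $P\setminus P'$ is a nice touch that the paper glosses over, but it too is an aside: once ``feasible assignment'' is read as a hypothesis, nothing remains to check for those clients either.
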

\begin{proof}
Since $\bar{y}\in \solsp$, it trivially satisfies the matroid constraint $\sM$ (i.e., constraint~\eqref{matroid_constraint} in $\cllp$).
Furthermore, given that $\tilde{x}$ is a feasible assignment w.r.t. $\bar{y}$ and $w'$, $(\tilde{x}, \bar{y})$ satisfies constraints~\eqref{casgn} and~\eqref{under_assignment} of $\cllp(w',\sM)$. Hence, $(\tilde{x}, \bar{y})$ is a feasible solution of $\cllp(w' ,\sM)$. 
\end{proof}

\begin{lemma}\label{lem:proxy-cost-sol-space}
For every solution $\bar{y} \in \solsp$ and its optimal assignment $\bar{x}$ w.r.t. $w'$, $\cost(\bar{x}, \bar{y}) \leq \cst(\bar{y})$.
\end{lemma}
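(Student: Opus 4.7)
The plan is to exploit the fact that $\cllp(w',\sM)$ couples each client to $\bar{y}$ through the bounds $\bar{x}_{vu}\le\bar{y}_u$, but imposes no cross-client capacity constraint. Consequently the assignment $\bar{x}$ returned by Algorithm~\ref{alg:assign_optimal} is, for each client $v$ separately, the distance-greedy allocation of one unit of demand among the open facilities with per-client capacity $\bar{y}_u$. The facility-opening term $\sum_u f(u)\bar{y}_u$ is identical in $\cost(\bar{x},\bar{y})$ and $\cst(\bar{y})$, so it suffices to show, for every $v\in P'$,
\[
\sum_{u\in\sF} d(v,u)^p\,\bar{x}_{vu}\;\le\;\sum_{u\in G(v)} d(v,u)^p\,\bar{y}_u + 3^p\gamma_v^p\bigl(1-\sum_{u\in G(v)}\bar{y}_u\bigr).
\]

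Fix $v\in P'$ and write $\beta_v := 1-\sum_{u\in G(v)}\bar{y}_u$. The $\solsp$ constraint $\sum_{u\in F'(v)}\bar{y}_u\ge 1/2$ combined with $F'(v)\subseteq G(v)$ (Corollary~\ref{cor:gamma}) gives $\beta_v\le 1/2$. By the definition of $\gamma_v$, every facility in $G(v)$ lies within distance $\gamma_v$ of $v$ while every facility outside $G(v)$ is strictly farther; hence the greedy assignment first exhausts $G(v)$. If $\sum_{u\in G(v)}\bar{y}_u\ge 1$, the demand is served entirely from $G(v)$ and the bound is immediate. Otherwise the greedy consumes $G(v)$ in full, contributing exactly $\sum_{u\in G(v)}d(v,u)^p\bar{y}_u$ to the cost, and still has $\beta_v$ units to place on facilities at distance exceeding $\gamma_v$.

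To bound the cost of these residual $\beta_v$ units I will invoke Claim~\ref{clm:3gamma}. Let $u^\star$ be any facility achieving $\gamma_v=d(v,u^\star)$; by definition $u^\star\notin F(v)$, so $u^\star\in F(v')$ for some (necessarily distinct) $v'\in P'$. Claim~\ref{clm:3gamma} then guarantees that every facility in $F'(v')$ sits within distance $3\gamma_v$ of $v$. Because the sets $\{F(w)\}_{w\in P'}$ are pairwise disjoint and $F'(v')\subseteq F(v')$, $G(v)\subseteq F(v)$, we have $F'(v')\cap G(v)=\emptyset$; so the mass $\sum_{u\in F'(v')}\bar{y}_u\ge 1/2\ge\beta_v$ furnished by $\bar{y}\in\solsp$ is genuinely new capacity, lying outside $G(v)$ and within distance $3\gamma_v$ of $v$. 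Since the greedy assignment takes facilities in increasing order of distance, the first $\beta_v$ units it places after exhausting $G(v)$ must come from facilities at distance at most $3\gamma_v$, contributing at most $\beta_v\cdot(3\gamma_v)^p=3^p\gamma_v^p\beta_v$ to the cost.

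Summing the per-client bound weighted by $w'(v)$ and re-attaching the shared facility-opening term yields $\cost(\bar{x},\bar{y})\le\cst(\bar{y})$. The step I expect to require the most care is the identification of that ``slack'' capacity: it simultaneously needs (i) to lie outside $G(v)$ (handled by disjointness of $\{F(w)\}$ together with $v'\neq v$), (ii) to total at least $\beta_v$ (handled by the $\ge 1/2$ lower bound from $\solsp$ together with $\beta_v\le 1/2$), and (iii) to be within distance $3\gamma_v$ of $v$ (handled by Claim~\ref{clm:3gamma}). Once those three conditions line up, the remainder is a clean case split on whether $G(v)$ alone already absorbs $v$'s demand.
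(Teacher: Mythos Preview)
Your proposal is correct and follows essentially the same route as the paper: both arguments identify, for each $v\in P'$, the ``slack'' capacity in $F'(v')$ (disjoint from $G(v)$, total mass $\ge 1/2\ge\beta_v$, distance $\le 3\gamma_v$ by Claim~\ref{clm:3gamma}) and use it to serve the residual $\beta_v$ units at cost at most $3^p\gamma_v^p\beta_v$. The only cosmetic difference is that the paper explicitly builds an alternative assignment $\hat{x}$ (fill $G(v)$, then $F'(v')$) and invokes optimality of $\bar{x}$, whereas you reason directly about the distance-greedy order; one small imprecision in your write-up is the claim that facilities outside $G(v)$ are \emph{strictly} farther than $\gamma_v$ (facilities in $\sF\setminus F(v)$ can sit at exactly $\gamma_v$), but this does not affect the bound since either phrasing yields cost at most the right-hand side.
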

\begin{proof}
For every client $v\in P'$ let $v'$ denote the client guaranteed by Claim~\ref{clm:3gamma}; $\forall u'\in F'(v')$, $d(v, u') \leq 3\gamma_v$. Moreover, we construct an assignment of $\bar{y}$ denoted as $\hat{x}$ as follows. For each $v\in P'$, $\hat{x}_{vu} := \bar{x}_{vu}$ if $u\in G(v)$. Next, we consider the facilities in $F'(v')$ in an arbitrary order $u'_1, \cdots u'_\ell$ and process them in this order one by one. For each $j\leq \ell$, we set $\hat{x}_{vu'_j} := \min(y'_{u'_j}, (1- \sum_{u\in G(v)} \hat{x}_{vu}-\sum_{i<j}\hat{x}_{v u'_i}))$. Finally, for the remaining facilities $u'\in \sF \setminus (G(v)\cup F'(v'))$, we set $\hat{x}_{vu'}=0$.
Since for each client $v\in P'$, $1/2 \leq \sum_{u\in F'(v)} y'_u\leq \sum_{u\in G(v)} y'_u$, the constructed assignment $\hat{x}$ is a feasible assignment---i.e., $(\hat{x},\bar{y})$ is a feasible solution of $\cllp(w', \sM)$. Moreover, our constructions ensures that for every client $v\in P'$, $\sum_{u\in \sF}\hat{x}_{vu} = \sum_{u\in G(v)\cup F'(v')}\hat{x}_{vu} = 1$.
Finally, by the optimality of the assignment $\bar{x}$ w.r.t. $\bar{y}$ and $w'$, $\cost(\bar{x}, \bar{y}) \leq \cost(\hat{x},\bar{y})$. 

\begin{align*}
    \cost(\bar{x},\bar{y})
    &\leq \cost(\hat{x}, \bar{y})\\
    &= \sum_{u \in \sF} f(u) \cdot \bar{y}_u + \sum_{v\in P'} w'(v) \cdot \sum_{u\in \sF} d(v, u)^p \cdot \hat{x}_{vu} \\
    &= \sum_{u \in \sF} f(u) \cdot \bar{y}_u + \sum_{v\in P'} w'(v) \cdot \big(\sum_{u\in G(v)} d(v, u)^p \cdot \hat{x}_{vu} + \sum_{u\in F'(v')} d(v, u)^p \cdot \hat{x}_{vu}\big) \\
    &\leq \sum_{u \in \sF} f(u) \cdot \bar{y}_u + \sum_{v\in P'} w'(v) \cdot \big(\sum_{u\in G(v)} d(v, u)^p \cdot \hat{x}_{vu} + 3^p\gamma_v^p\sum_{u\in F'(v')} \hat{x}_{vu}\big) &&\rhd\text{by Claim~\ref{clm:3gamma}}\\
    &= \sum_{u \in \sF} f(u) \cdot \bar{y}_u + \sum_{v\in P'} w'(v) \cdot \big(\sum_{u\in G(v)} d(v, u)^p \cdot \hat{x}_{vu} + 3^p \gamma^p (1-\sum_{u\in G(v)}\hat{x}_{vu})\big)
    &&\rhd\sum_{u\in G(v)\cup F'(v')}\hat{x}_{vu} = 1\\
    &= \sum_{u \in \sF} f(u) \cdot \bar{y}_u + \sum_{v\in P'} w'(v) \cdot \big(\sum_{u\in G(v)} d(v, u)^p \cdot \bar{x}_{vu} + 3^p \gamma^p (1-\sum_{u\in G(v)}\bar{x}_{vu})\big) &&\rhd \forall u\in G(v), \hat{x}_{vu} = \bar{x}_{vu}\\
    &= \cst(\bar{y})
\end{align*}
\end{proof}
Next, we show that there exist a half-integral solution $y''$ that minimizes the modified cost function $\cst$ over the set of solutions described by $\solsp$.
\begin{lemma}\label{lem:half-integral-opt}
There is a half-integral solution $y''$ that minimizes $\cst$ over the polytope $\sP$. 
\end{lemma}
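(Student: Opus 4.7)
The function $\cst$ is affine in $y$: expanding the summand gives
\[
\cst(y) \;=\; \sum_{u\in\sF} f(u)\,y_u \;+\; \sum_{v\in P'} w'(v)\!\!\sum_{u\in G(v)}\!\!\bigl(d(v,u)^p - 3^p\gamma_v^p\bigr)\, y_u \;+\; \sum_{v\in P'} w'(v)\cdot 3^p\gamma_v^p,
\]
so minimizing $\cst$ over $\sP$ is a linear program and an optimum is attained at a vertex of $\sP$. The whole claim therefore reduces to the polyhedral statement that every vertex of $\sP$ is half-integral, equivalently that $2\sP$ has only integer vertices.

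To get integer vertices of $2\sP$, I would exhibit it as the intersection of two integer polytopes. Setting $\bar y := 2y$, I would write $2\sP = Q_1 \cap Q_2$ where
\begin{align*}
Q_1 &= \bigl\{\bar y \ge 0 \;:\; \textstyle\sum_{u\in S}\bar y_u \le 2\,r(S) \ \forall\, S\subseteq\sF\bigr\},\\
Q_2 &= \bigl\{\bar y \ge 0 \;:\; \textstyle\sum_{u\in F'(v)}\bar y_u \ge 1,\ \textstyle\sum_{u\in G(v)}\bar y_u \le 2 \ \forall\, v\in P'\bigr\}.
\end{align*}
$Q_1$ is the polymatroid polytope of the integer submodular function $2r$ and hence has integer vertices. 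For $Q_2$ I would use that $F'(v)\subseteq G(v)$ (Corollary~\ref{cor:gamma}) and that the sets $\{G(v)\}_{v\in P'}$ are pairwise disjoint (as used in the proof of Claim~\ref{clm:disjoint-G}); this makes $\{F'(v), G(v)\}_{v\in P'}$ a laminar family and, crucially, decouples the constraints of $Q_2$ into one independent block per $v$, with the coordinates $\bar y_u$ for $u\notin\bigcup_v G(v)$ only subject to $\bar y_u\ge 0$. Each block polytope has just two non-trivial rows, and a short case analysis on which of them are tight at a vertex shows that every vertex of the block lies in $\{0,1,2\}$, so $Q_2$ is integer.

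Finally, I would invoke Edmonds' matroid (more generally, polymatroid) intersection theorem: the intersection of two integer (generalized) polymatroid polytopes is integer. Hence every vertex of $2\sP = Q_1\cap Q_2$ is integer, every vertex of $\sP$ is half-integer, and picking $y''$ to be an optimal vertex of the LP $\min_{y\in\sP}\cst(y)$ furnishes the desired half-integral minimizer. The step I expect to be the main obstacle is the integrality of $Q_2$: although the per-block case analysis is elementary, one has to be careful that the laminar decoupling is complete and that the two-sided $\{\ge 1,\le 2\}$ structure genuinely places $Q_2$ inside the polymatroid-intersection framework (as opposed to a more general polyhedron whose vertices could be fractional).
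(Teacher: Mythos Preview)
Your proposal is correct and is exactly the route the paper takes: it simply cites \citep[Appendix~A]{swamy2016improved}, where half-integrality of $\sP$ is proved by the same scale-by-two and matroid/polymatroid-intersection argument you outline. The one point to make precise is that $Q_2$ must be a \emph{generalized} polymatroid (integrality alone is not enough for the intersection theorem), which holds because $\{F'(v),G(v)\}_{v\in P'}$ is a laminar family with integer bounds---and you already flag this as the step requiring care.
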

\begin{proof}
For the proof of this Lemma, we refer to \citep[Appendix A]{swamy2016improved} where 
it is shown that the polytope $\solsp$ has half-integral extreme solution. Hence, there is a polynomial time algorithm to find a half-integral solution $y''\in \solsp$ that minimizes the {\em linear} cost function $\cst$.

Note that by Claim~\ref{clm:feasible-polytope}, $(x'', y'')$ is a feasible solution of $\cllp(w', \sM)$ where $x''$ is the optimal (feasible) assignment w.r.t. $y''$ and $w'$.
\end{proof}
Finally, we have all the pieces to prove Theorem~\ref{thm:main-half-integral}.

\begin{proofof}{\bf Theorem~\ref{thm:main-half-integral}:}
By Lemma~\ref{lem:feasible-intermediate-sol}, $(x',y')$ is a feasible solution to $\cllp(w',\sM)$ and its cost is at most $\cst(y')\leq 3^p\cdot z^*$. Moreover, Lemma~\ref{lem:feasible-intermediate-sol} shows that the solution $y'$ is contained in the polytope $\solsp$. 
Then, by an application of Lemma~\ref{lem:half-integral-opt}, there exists a half-integral solution $y''$ such that $\cst(y'') \leq \cst(y')$---in fact, the solution $y''$ minimizes $\cst$ in the polytope $\solsp$. Now, we consider the half-integral solution $(x'', y'')$ of $\cllp(w', \sM)$ where $x''$ is the optimal assignment w.r.t. $y''$ and $w'$.
\begin{align*}
    \cost(x'', y'') 
    &\leq \cst(y'') &&\rhd\text{by Lemma~\ref{lem:proxy-cost-sol-space}} \\
    &\leq \cst(y') &&\rhd \text{by the optimality of $y''$ w.r.t. $\cst$ in the polytope $\solsp$} \\
    &\leq 3^p \cdot z^* &&\rhd \text{by Lemma~\ref{lem:feasible-intermediate-sol}}
\end{align*}
\end{proofof}

\subsection{Converting $(x'',y'')$ to an integer solution}
In this section, we show how to convert the half-integral solution $(x'', y'')$ of the well-separated instance to an integral solution of the well-separated instance without losing more than $e^{O(p)}\cdot z^*$ in the cost.

\begin{theorem}\label{thm:main-integral}
Let $z^*$ denote the cost of an optimal solution of $\cllp(w, \sM)$. There exists an integral solution of $\cllp(w', \sM)$ of cost at most $(4\cdot 3^{p-1} + 2) \cdot 3^p \cdot z^*$.
\end{theorem}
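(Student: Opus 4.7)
The plan is to round the half-integral solution $(x'', y'')$ from Theorem~\ref{thm:main-half-integral} to an integer solution by matroid-intersection rounding, losing a multiplicative factor of at most $4\cdot 3^{p-1}+2$ against the proxy $\cst(y'')$. Combined with $\cst(y'')\leq 3^p\cdot z^*$ this yields the claimed bound $(4\cdot 3^{p-1}+2)\cdot 3^p\cdot z^*$.

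First I would analyze the structure of $y''$. Let $H=\{u\in \sF: y''_u=1/2\}$ and $F_1=\{u: y''_u=1\}$. Since the sets $\{G(v)\}_{v\in P'}$ are pairwise disjoint (as $G(v)\subseteq F(v)$ and the $F(v)$'s partition the facilities they cover), each half-open facility lies in at most one $G(v)$. The constraints of $\solsp$ together with half-integrality force, for every $v\in P'$, one of two patterns: either $\sum_{u\in G(v)}y''_u=1$ (realized by one fully open facility or by two half-open facilities in $G(v)$), or $\sum_{u\in F'(v)}y''_u=\sum_{u\in G(v)}y''_u=1/2$, realized by a single half-open facility in $F'(v)$.

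Next I would form pairs of half-open facilities: two half-open facilities inside the same $G(v)$ are paired together; each ``singleton'' half-open facility $u\in F'(v)$ arising in the second pattern is paired with a half-open facility in $F'(v')$, where $v'$ is the witness client of $v$ from Claim~\ref{clm:3gamma}, with a greedy fix-up to ensure each half-open facility appears in exactly one pair. Let $\sN$ be the partition matroid on $H$ with these pairs as blocks of rank one, and consider
\begin{align*}
\intSolSp=\bigl\{y\in\mathbb{R}_+^{\sF}\colon y_u=y''_u\text{ for }u\notin H,\ \sum_{u\in S}y_u\leq r(S)\ \forall S\subseteq\sF,\ \sum_{u\in B}y_u=1\ \forall\text{ pair }B\bigr\}.
\end{align*}
This polytope is the intersection of the independence polytopes of the two matroids $\sM$ and $\sN$, together with coordinate fixings and per-block equalities, so by Edmonds's matroid-intersection theorem its extreme points are integral. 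Since $y''\in \intSolSp$ and $\cst$ is linear in $y$ (because every $v\in P'$ has $\sum_{u\in G(v)}y_u$ fixed by the pairing equalities), minimizing $\cst$ over $\intSolSp$ returns an integer $\tilde{y}\in\intSolSp$ with $\cst(\tilde{y})\leq \cst(y'')$.

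To bound the actual cost, let $\tilde{x}$ be the optimal assignment against $\tilde{y}$ and $w'$. For each $v\in P'$: if some facility in $G(v)$ is opened under $\tilde{y}$ then $d(v,\tilde{y})^p\leq \gamma_v^p$; otherwise, by the pairing construction and Claim~\ref{clm:3gamma}, an opened facility lies within $3\gamma_v$ of $v$, yielding $d(v,\tilde{y})^p\leq 3^p\gamma_v^p$. The delicate case is when $v$'s preferred half-open facility in $G(v)$ is closed in favor of its partner at distance close to $3\gamma_v$; here Corollary~\ref{cor:gen-triangle-ineq} applied along the 2-hop path $v\to\text{(closed partner)}\to\text{(opened facility)}$ absorbs the detour, producing the coefficient $4\cdot 3^{p-1}$ on the proxy's fractional-assignment terms $\sum_{u\in G(v)}d(v,u)^p y''_u$ and an additive $2$ on the slack $3^p\gamma_v^p(1-\sum_{u\in G(v)}y''_u)$. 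Summing over $v\in P'$ gives $\cost(\tilde{x},\tilde{y})\leq (4\cdot 3^{p-1}+2)\cdot \cst(\tilde{y})$, and chaining with Theorem~\ref{thm:main-half-integral} completes the bound. The main obstacle is constructing a globally consistent pairing so that (i) $y''\in\intSolSp$, (ii) every integer point of $\intSolSp$ still guarantees a nearby opened facility for every client, and (iii) the rerouting charged to the proxy absorbs the constants exactly as stated; the $3^p\gamma_v^p$ slack built into $\cst$ in Section~\ref{sec:half-integral} is precisely what the $p$-norm approximate triangle inequality needs in order to pay the detour at a controlled multiplicative price.
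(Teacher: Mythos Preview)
Your proposal takes a different route from the paper's and, as written, has a genuine gap that you yourself flag as ``the main obstacle.'' The paper does \emph{not} round $y''$ by pairing half-open facilities via the witness client of Claim~\ref{clm:3gamma}. Instead it builds a fresh combinatorial structure from $(x'',y'')$ itself: for each client $v\in P'$ let $\flt(v)=\{u:x''_{vu}>0\}$ (so $|\flt(v)|\le 2$), and greedily select \emph{core clients} $P''\subseteq P'$ in increasing order of $\sR''(v)=(\sum_u d(v,u)^p x''_{vu})^{1/p}$, adding $v$ only if $\flt(v)$ is disjoint from all previously selected $\flt(\cdot)$, and recording $cr(v)\in P''$ as the core client that blocked $v$. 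This makes $\{\flt(v)\}_{v\in P''}$ pairwise disjoint by construction, so the polytope
\[
\intSolSp=\Bigl\{y\ge 0:\ \sum_{u\in S}y_u\le r(S)\ \forall S,\quad \sum_{u\in \flt(v)}y_u=1\ \forall v\in P''\Bigr\}
\]
is honestly a matroid-intersection polytope (the matroid $\sM$ with a partition matroid on the blocks $\flt(v)$), hence integral. A new linear proxy $\pcst(y)=\sum_u f(u)y_u+\sum_{v\in P'}w'(v)\sum_{u\in \flt(cr(v))}d(v,u)^p y_u$ is used (not $\cst$), and a three-case analysis using primary/secondary facilities of $v$ together with Eq.~\eqref{ineq:2hop-triangle-ineq} gives $\pcst(\tilde{y}')\le(4\cdot3^{p-1}+2)\,\cost(x'',y'')$, from which the theorem follows by minimizing $\pcst$ over $\intSolSp$.

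By contrast, your pairing scheme tries to mate each ``singleton'' half-open facility $u\in F'(v)$ with a half-open facility in $F'(v')$, where $v'$ is the witness of Claim~\ref{clm:3gamma}. But that witness is defined with respect to the \emph{original} optimal $(x,y)$ via $\gamma_v$ and the partition $\{F(\cdot)\}$; nothing ties it to where $y''$ places half-open mass, so there is no reason $F'(v')$ contains any half-open facility at all, let alone an unpaired one. The ``greedy fix-up'' you invoke is exactly the missing idea: you would need to exhibit a perfect matching on the half-open facilities whose blocks simultaneously (i) sit inside $\sM$'s span so $y''\in\intSolSp$, and (ii) guarantee for every $v\in P'$ that whichever endpoint survives is within $3\gamma_v$. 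Neither follows from Claim~\ref{clm:3gamma}. Relatedly, your assertion that $\sum_{u\in G(v)}y_u$ is constant over $\intSolSp$ (so that $\cst$ is linear there) is unjustified: the pairs you build need not coincide with, or even be contained in, the $G(v)$'s. (The function $\cst$ is linear in $y$ regardless, so this particular claim is harmless, but it signals a mismatch between the structure you build and the structure you need.) The paper sidesteps all of this by switching to $\flt(\cdot)$ and $\pcst$, which are native to the half-integral solution and automatically give disjoint blocks.
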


Similarly to the notion {\em fractional distance} $\sR$ defined w.r.t. the optimal solution of $(x,y)$ of the original instance (see Step I in Section~\ref{sec:half-integral}), for every client $v\in P'$, we define the fractional distance of $v$ w.r.t. $(x'', y'')$ as $\sR''(v):= \big(\sum_{u\in \sF} d(u,v)^p \cdot x''_{uv}\big)^{\frac{1}{p}}$. Moreover, for each client $v\in P'$, we denote the set of {\em serving} facilities of $v'$ in $(x'', y'')$ as $\flt(v) := \{u\in \sF: x''_{vu}>0\}$.

\paragraph{Step III: Identify {\core} clients.}\label{step:alg:core}
First, in Algorithm~\ref{alg:new-cluster-centers}, we construct a subset of clients $P'' \subseteq P'$, called {\em core clients} and a mapping $cr: P' \rightarrow P''$. %and provides a mapping from client $v \in P'$ to a client $v' \in P''$ using Algorithm \ref{alg:new-cluster-centers}.
The crucial property of the core clients $P''$ is the following: every facility $u \in \sF$ is serving {\em at most} one client in $P''$. In other words the family of sets $\{\flt(v)\}_{v\in P''}$ are disjoint.
%is used by at most one client in $P''$ due to line $\ref{alg:line:disjoint-core}$ where any client that shares a facility with the current {\core} is prevented from becoming a {\core} itself by removing it from $P'$.
%Note any client $v \in P''$ is its own {\core} $cr(v)=v$.
\begin{algorithm}[h]
	\begin{algorithmic}[1]
		\STATE {\bfseries Input:} $(x'',y'')$: Half-integral solution of $\cllp(w',\sM)$ from Theorem \ref{thm:main-half-integral}
		\STATE $\sR''(v) \gets \left(\sum_{u\in \sF}  d(v,u)^p \cdot x''_{vu}\right)^{1/p}$ \textbf{for all} $v\in P'$	
        \STATE $\flt(v) \gets \{u:x''_{uv}>0\}$ \textbf{for all} clients $v\in P'$
        \STATE $P'' \gets \emptyset$
		\WHILE{$P' \neq \emptyset$}
		\STATE $v^*$ = $\argmin_{v\in P'} \sR''(v)$ \label{alg:line:minR}
        \STATE $P' \leftarrow P' \setminus \{v^*\}$, $P''\leftarrow P'' \cup \{v^*\}, cr(v^*) \leftarrow v^*$
        \FORALL{$v'\in P'$}
            \IF{$\flt(v^*)\cap \flt(v')\neq \emptyset$} \label{alg:line:disjoint-core}
                \STATE $P'\leftarrow P'\setminus \{v'\}, cr(v')= v^*$
            \ENDIF    
        \ENDFOR
		\ENDWHILE
        \RETURN $P''$, $cr$
    \end{algorithmic}
	\caption{constructs a set of core clients.}
	\label{alg:new-cluster-centers}
\end{algorithm}

\begin{claim}\label{clm:non-increasing-R}
For every client $v\in P'$, $\sR''(cr(v))\leq \sR''(v)$.
\end{claim}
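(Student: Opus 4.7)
The plan is to do a short case analysis based on how $cr(v)$ gets assigned in Algorithm~\ref{alg:new-cluster-centers}. Inspection of the algorithm shows there are exactly two places where the mapping $cr$ is written: either in the line following the choice of $v^*$, where we set $cr(v^*) = v^*$, or inside the inner \textbf{for} loop, where we set $cr(v') = v^*$ for some $v' \in P'$ whose serving set intersects $\flt(v^*)$. So every client $v \in P'$ falls into one of these two cases.

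In the first case, $cr(v) = v$, so the inequality $\sR''(cr(v)) \leq \sR''(v)$ holds with equality and there is nothing to prove. The substantive case is the second one: suppose $cr(v) = v^* \neq v$, meaning $v$ was removed from $P'$ during the iteration in which $v^*$ was selected. At the start of that iteration, $v$ must still have been an element of $P'$, because the only mechanism by which $v$ can leave $P'$ is exactly the assignment $cr(v) = v^*$ itself (or the initial line in the iteration which removes only $v^*$). Therefore $v$ was a candidate when line~\ref{alg:line:minR} computed $v^* = \argmin_{w \in P'} \sR''(w)$, and hence $\sR''(v^*) \leq \sR''(v)$, as desired.

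I do not expect any obstacle here; the claim is essentially a bookkeeping observation about the order in which clients are processed. The only subtle point to make precise in the write-up is that $v$ is indeed still in $P'$ at the moment $v^*$ is selected, which is immediate from the structure of the while-loop since $cr$-assignments only happen after $v^*$ has been chosen.
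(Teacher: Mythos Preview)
Your proposal is correct and follows essentially the same argument as the paper: handle the trivial case $cr(v)=v$ separately, and for the remaining clients observe that $v$ is still present in $P'$ at the iteration in which $cr(v)=v^*$ is selected, so the $\argmin$ at line~\ref{alg:line:minR} gives $\sR''(v^*)\le \sR''(v)$. The paper's write-up is just a terser version of the same bookkeeping.
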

\begin{proof}
The inequality holds trivially for the core clients $v\in P''$. Let $v'\in P'\setminus P''$. At the iteration in which $cr(v)$ is added to $P''$, $v$ is still present in $P'$. Hence, by the condition at line~\ref{alg:line:minR}, $\sR''(cr(v))\leq \sR''(v)$.
\end{proof}
\paragraph{Step IV: Obtaining an integral solution $(\Tilde{x},\Tilde{y})$.}
Similarly to our approach for constructing the half-integral solution $(x'', y'')$, we first construct an ``intermediate'' solution $\tilde{y}'$ with certain structures. Later, we exploit the known results in matroid intersection to find a ``good'' integral solution for the constructed intermediate solution.

$$
\tilde{y}'_u :=
\begin{cases}
      x''_{vu} &\text{if there exists a client }  v\in P'' \text{ such that } u\in \flt(v)\\
      y''_u &\text{otherwise}
\end{cases} 
$$

\begin{lemma}\label{lem:feasible-tilde-y}
The solution $(\tilde{x}', \tilde{y}')$ where $\tilde{x}'$ is an optimal assignment w.r.t. $\tilde{y}'$ and $w'$ is a feasible solution of $\cllp(w', \sM)$.
\end{lemma}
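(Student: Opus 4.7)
The plan is to verify the three families of constraints of $\cllp(w',\sM)$ for the pair $(\tilde{x}', \tilde{y}')$. The groundwork is already in place: Theorem~\ref{thm:main-half-integral} furnishes the feasible half-integral solution $(x'', y'')$, and Algorithm~\ref{alg:new-cluster-centers} (specifically the check at line~\ref{alg:line:disjoint-core}) ensures that the family $\{\flt(v)\}_{v\in P''}$ is pairwise disjoint. This disjointness is what makes the definition of $\tilde{y}'$ unambiguous: for every facility $u$, at most one core client $v \in P''$ satisfies $u \in \flt(v)$, so $\tilde{y}'_u$ is a single well-defined number.

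First I would verify the matroid constraint~\eqref{matroid_constraint} via the pointwise bound $\tilde{y}'_u \le y''_u$. For $u \in \flt(v)$ with $v \in P''$, constraint~\eqref{under_assignment} applied to the half-integral solution gives $\tilde{y}'_u = x''_{vu} \le y''_u$; for every other $u$, $\tilde{y}'_u = y''_u$ by definition. Summing over any $S \subseteq \sF$ and using feasibility of $(x'',y'')$ yields $\sum_{u\in S}\tilde{y}'_u \le \sum_{u\in S}y''_u \le r(S)$.

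Next I would invoke Algorithm~\ref{alg:assign_optimal} to produce $\tilde{x}'$ from $(\tilde{y}', w')$: line~\ref{algo:line:RespectProp} directly enforces $\tilde{x}'_{uv} \le \tilde{y}'_u$, so constraint~\eqref{under_assignment} is immediate. The algorithm terminates with $\sum_u \tilde{x}'_{uv} = 1$ for every client $v \in P'$ with positive demand, provided the total open capacity of $\tilde{y}'$ is at least $1$. To confirm the latter, I would pick any core client $v^* \in P''$ (one exists since $P'$ is non-empty at the start of Algorithm~\ref{alg:new-cluster-centers}) and write
\[
    \sum_{u\in\sF}\tilde{y}'_u \;\ge\; \sum_{u\in \flt(v^*)}\tilde{y}'_u \;=\; \sum_{u\in \flt(v^*)} x''_{v^* u} \;=\; \sum_{u\in \sF} x''_{v^* u} \;\ge\; 1,
\]
where the middle equality uses the definition $\flt(v^*) = \{u : x''_{v^* u} > 0\}$ and the last inequality is constraint~\eqref{casgn} applied to $(x'', y'')$. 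Because the LP imposes no cap on $\sum_v x_{vu}$, the same open capacity of $\tilde{y}'$ can be shared across every client, so each client in $P'$ is fully servable and constraint~\eqref{casgn} holds.

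I do not anticipate a genuine obstacle here: the whole argument is a routine feasibility verification. The only subtlety is ensuring that $\tilde{y}'$ is well-defined, which is precisely what the disjointness of $\{\flt(v)\}_{v\in P''}$ secured in Step III provides. The pointwise bound $\tilde{y}'_u \le y''_u$ together with this disjointness will also be the key structural property used in the subsequent step when rounding $\tilde{y}'$ to an integral solution via matroid intersection and bounding the resulting cost in terms of that of $(x'', y'')$.
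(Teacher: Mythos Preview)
Your proposal is correct and follows essentially the same approach as the paper: the matroid constraint is handled via the pointwise bound $\tilde{y}'_u \le y''_u$, and constraints~\eqref{casgn} and~\eqref{under_assignment} are delegated to Algorithm~\ref{alg:assign_optimal}. Your write-up is more careful than the paper's two-line proof in that you explicitly verify the total open mass of $\tilde{y}'$ is at least $1$ (so Algorithm~\ref{alg:assign_optimal} can indeed terminate with $\sum_u \tilde{x}'_{uv}=1$) and that $\tilde{y}'$ is well-defined via the disjointness of $\{\flt(v)\}_{v\in P''}$, but these are details the paper leaves implicit rather than a different route.
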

\begin{proof}
The constraints~\eqref{casgn} and~\eqref{under_assignment} are satisfied by the way $\tilde{x}'$ is constructed via Algorithm~\ref{alg:assign_optimal}. The matroid constraint, constraint~\eqref{matroid_constraint}, holds since for every facility $u\in \sF$, $\tilde{y}'_u\leq y''_u$ and $y''$ satisfies the matroid constraint.
\end{proof}
%The intermediate solution $\Tilde{y}'$ with its optimal assignment $\Tilde{x}'$ using Algorithm~\ref{alg:assign_optimal} is a feasible solution of $\cllp(w',\sM)$ because the first and last constraint of $\cllp$ is guaranteed by the algorithm and Matroid constraint is also satisfied due to the feasibility of solution $(x'',y'')$. 
\begin{claim}
For every core client $v\in P''$, $\sum_{u\in \flt(v)}\tilde{y}'_u = 1$.
\end{claim}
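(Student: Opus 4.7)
The plan is to unpack the definitions carefully and use the disjointness property of the family $\{\flt(v)\}_{v\in P''}$ that is guaranteed by Algorithm~\ref{alg:new-cluster-centers}. The proof is essentially definition-chasing once that disjointness is in hand.

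First I would verify that for any core client $v\in P''$ and any facility $u\in \flt(v)$, the client $v$ is the \emph{unique} element of $P''$ whose serving set contains $u$. This follows because in Algorithm~\ref{alg:new-cluster-centers}, immediately after a core client $v^*$ is added to $P''$, line~\ref{alg:line:disjoint-core} removes every remaining candidate $v'$ from $P'$ whenever $\flt(v^*)\cap \flt(v')\neq \emptyset$. Consequently, no two elements of $P''$ share a serving facility, so the conditional in the definition of $\tilde{y}'$ is triggered by exactly one core client (namely $v$ itself) and we get $\tilde{y}'_u=x''_{vu}$ for every $u\in \flt(v)$.

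Next I would simply sum: $\sum_{u\in \flt(v)} \tilde{y}'_u = \sum_{u\in \flt(v)} x''_{vu}$. By the definition of $\flt(v)$ as $\{u:x''_{uv}>0\}$, the assignment $x''_{v\cdot}$ is supported on $\flt(v)$, so $\sum_{u\in \flt(v)} x''_{vu} = \sum_{u\in \sF} x''_{vu}$. Finally, since $v\in P''\subseteq P'$ has positive demand $w'(v)>0$, and since $x''$ is obtained from Algorithm~\ref{alg:assign_optimal} (whose while-loop keeps assigning until $\sum_u x''_{vu}=1$ for every positive-demand client), we conclude $\sum_u x''_{vu}=1$, giving the claimed identity.

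The only potentially subtle point — and the reason the claim requires any argument at all — is making sure that the piecewise definition of $\tilde{y}'_u$ is unambiguous on $\flt(v)$; this is precisely what the disjointness of $\{\flt(v)\}_{v\in P''}$ buys us. Everything else reduces to the LP feasibility of $x''$ on clients of nonzero demand, which is built into Algorithm~\ref{alg:assign_optimal}. No nontrivial calculation or appeal to the matroid structure is needed for this claim.
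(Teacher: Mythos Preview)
Your proof is correct and follows essentially the same three-step chain as the paper: $\sum_{u\in\flt(v)}\tilde{y}'_u = \sum_{u\in\flt(v)} x''_{vu} = \sum_{u\in\sF} x''_{vu} = 1$. You are simply more explicit than the paper about two points the paper takes for granted---the disjointness of $\{\flt(v)\}_{v\in P''}$ (needed to make the piecewise definition of $\tilde{y}'$ unambiguous) and the fact that Algorithm~\ref{alg:assign_optimal} drives $\sum_u x''_{vu}$ all the way to $1$ for positive-demand clients---but the argument is the same.
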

\begin{proof}
%Let $v$ be a client in $P''$. Then,
\begin{align*}
    \sum_{u\in \flt(v)}\Tilde{y}'_u
    &= \sum_{u\in \flt(v)}x''_{vu} &&\rhd\text{by the definition of $\tilde{y}'$}\\
    &= \sum_{u\in \sF}x''_{vu} &&\rhd\flt(v):=\{u\in \sF:x''_{vu}>0\}\\
    &= 1 &&\rhd\text{by the feasibility of $(x'',y'')$ for $\cllp$}
\end{align*} 
\end{proof}
Next, for every client $v\in P'$, we define the {\em primary} $\pri_v$ and the {\em secondary} $\sec_v$ facilities of $v$ such that $\pri_v$ denotes the nearest facilities to $v$ among the possibly two facilities serving $v$ in the half-integral solution $(x'',y'')$. Note that since $(x'', y'')$ is a half-integral solution, either $x''_{v\pri_v} = x''_{v\sec_v} = \frac{1}{2}$ or $x''_{v\pri_v} = 1$ otherwise. For technical reason, in the latter case, we set $\sec_v = \pri_v$.

\begin{claim}\label{claim:R-half-integr-def}
For every client $v\in P'$, $\sR''(v)^p=\frac{1}{2}(d(\pri_v,v)^p+d(\sec_v,v)^p)$.
\end{claim}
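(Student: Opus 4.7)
The plan is to derive the identity by directly unfolding the definition of $\sR''(v)^p$ and invoking the half-integrality of $(x'',y'')$ to characterize the support of $x''_{v\,\cdot}$. Concretely, since $(x'',y'')$ is a feasible solution of $\cllp(w',\sM)$ (as established via Algorithm~\ref{alg:assign_optimal} in the setting of Lemma~\ref{lem:feasible-intermediate-sol}), we have $\sum_{u\in\sF} x''_{vu}\ge 1$; moreover, since the optimal assignment produced by Algorithm~\ref{alg:assign_optimal} never over-assigns a client, we may assume $\sum_{u\in\sF} x''_{vu}=1$ for each client $v$ with positive demand. Because every coordinate $x''_{vu}$ lies in $\{0,\tfrac12,1\}$, this forces the support $\flt(v)=\{u\in\sF:x''_{vu}>0\}$ to have size either $1$ or $2$.

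Next, I would do a short case analysis according to $|\flt(v)|$. If $|\flt(v)|=1$, let $u$ be the unique facility with $x''_{vu}=1$; by the definition stated just before the claim, $\pri_v=u$ and $\sec_v:=\pri_v$, so
\[
\sR''(v)^p \;=\; \sum_{u'\in\sF} d(v,u')^p\cdot x''_{vu'} \;=\; d(v,\pri_v)^p \;=\; \tfrac{1}{2}\bigl(d(v,\pri_v)^p+d(v,\sec_v)^p\bigr).
\]
If $|\flt(v)|=2$, let $\flt(v)=\{\pri_v,\sec_v\}$ with $\pri_v$ the closer of the two to $v$; both coordinates equal $\tfrac12$, hence
\[
\sR''(v)^p \;=\; \sum_{u'\in\sF} d(v,u')^p\cdot x''_{vu'} \;=\; \tfrac{1}{2}\,d(v,\pri_v)^p+\tfrac{1}{2}\,d(v,\sec_v)^p.
\]
In either case the claimed identity holds.

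There is essentially no technical obstacle here: the statement is a direct consequence of the half-integrality of $x''$ together with the convention that $\sec_v=\pri_v$ when $v$ is fully assigned to a single facility. The only subtlety worth double-checking is the assumption $\sum_{u} x''_{vu}=1$, which I would justify exactly as in Section~\ref{sec:half-integral}, where the analogous property for an optimal fractional solution of $\cllp$ is invoked; here it follows from the construction of $x''$ as the optimal assignment with respect to $y''$ and $w'$ via Algorithm~\ref{alg:assign_optimal}, for each client $v$ with $w'(v)>0$ (clients with $w'(v)=0$ contribute nothing to $\sR''$ or to the cost and may be disregarded).
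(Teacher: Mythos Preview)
Your proposal is correct and follows essentially the same approach as the paper: the paper's proof simply states that the claim follows from the half-integrality of $(x'',y'')$ and the definition of primary and secondary facilities, and your case analysis is exactly the unpacking of that one-line justification. One minor remark: your caveat about clients with $w'(v)=0$ is unnecessary, since $P'$ is defined as the support of $w'$, so every $v\in P'$ already has $w'(v)>0$.
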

\begin{proof}
It simply follows from the half-integrality of the solution $(x'', y'')$ and the definition of the primary and the secondary facilities. 
%\textbf{Case $\pri_v\neq \sec_v$}\\
%\begin{align*}
%    \sR''(v)^p
%    &= \sum_{u\in \sF} d(u,v)^px''_{uv} &&\rhd\text{Definition}\\
%    &= d(\pri_v,v)^px''_{\pri_vv}+d(\sec_v,v)^px''_{\sec_v,v} &&\rhd\text{$\pri_v\neq \sec_v$}\\
%    &= \frac{1}{2}(d(\pri_v,v)^p+d(\sec_v,v)^p) &&\rhd\text{$x''_{\pri_v,v}=x''_{\sec_v,v}=0.5$}\\
%\end{align*} 
%
%\textbf{Case $\pri_v=\sec_v$}\\
%\begin{align*}
%    \sR''(v)^p
%    &= \sum_{u\in \sF} d(u,v)^px''_{uv} &&\rhd\text{Definition}\\
%    &= d(\pri_v,v)^px''_{\pri_vv} &&\rhd\text{$x''_{\pri_vv}=1$}\\
%    &= \frac{1}{2}(d(\pri_v,v)^p+d(\pri_v,v)^p)\\
%    &= \frac{1}{2}(d(\pri_v,v)^p+d(\sec_v,v)^p) %&&\rhd\text{$\pri_v=\sec_v$}\\
%\end{align*} 
\end{proof}

\begin{claim}\label{claim:DistCmpR}
For every client $v\in P'$, $d(\pri_v,v)^p\leq \sR''(v)^p\leq d(\sec_v,v)^p\leq 2\sR''(v)^p$.
\end{claim}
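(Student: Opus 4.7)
The plan is to derive all three inequalities as immediate arithmetic consequences of the preceding Claim~\ref{claim:R-half-integr-def}, namely
\[
\sR''(v)^p \;=\; \tfrac{1}{2}\bigl(d(\pri_v, v)^p + d(\sec_v, v)^p\bigr),
\]
together with the defining property $d(\pri_v, v) \le d(\sec_v, v)$ (which holds since $\pri_v$ was chosen as the nearer of the at-most-two facilities that serve $v$ fractionally in $(x'', y'')$, with $\sec_v = \pri_v$ in the integral case).

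First I would raise the distance inequality to the $p$-th power to get $d(\pri_v, v)^p \le d(\sec_v, v)^p$. Since $\sR''(v)^p$ is the arithmetic mean of these two nonnegative quantities, the minimum is at most the mean and the mean is at most the maximum, which yields the first two inequalities
\[
d(\pri_v, v)^p \;\le\; \sR''(v)^p \;\le\; d(\sec_v, v)^p.
\]
For the final bound, I would rearrange Claim~\ref{claim:R-half-integr-def} to obtain $d(\sec_v, v)^p = 2\sR''(v)^p - d(\pri_v, v)^p$ and then drop the nonnegative term $d(\pri_v, v)^p$ to conclude $d(\sec_v, v)^p \le 2\sR''(v)^p$.

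There is no real obstacle here: the claim is a purely numerical consequence of the half-integral structure of $(x'', y'')$ already established in Claim~\ref{claim:R-half-integr-def}. The only subtlety worth flagging explicitly is the degenerate case $x''_{v\pri_v} = 1$, where $\sec_v = \pri_v$ by convention; in that case all three quantities in the chain coincide, so the inequalities hold trivially.
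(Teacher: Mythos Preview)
Your proposal is correct and follows essentially the same argument as the paper: both use Claim~\ref{claim:R-half-integr-def} together with $d(\pri_v,v)\le d(\sec_v,v)$ to invoke the min--mean--max inequality for the first two bounds, and then drop the nonnegative $d(\pri_v,v)^p$ term from $2\sR''(v)^p=d(\pri_v,v)^p+d(\sec_v,v)^p$ for the last. Your treatment is a bit more explicit (in particular you spell out the degenerate case $\sec_v=\pri_v$), but the underlying proof is the same.
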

\begin{proof}
By the definition of $\pri_v$ and $\sec_v$, $d(\pri_v,v)^p \leq \frac{1}{2}(d(\pri_v,v)^p+d(\sec_v,v)^p) \leq d(\sec_v,v)^p$. By an application of Claim~\ref{claim:R-half-integr-def} the proof is complete.
%\begin{align*}
%    d(\pri_v,v)^p 
%    &\leq \frac{1}{2}(d(\pri_v,v)^p+d(\sec_v,v)^p) &&\rhd\text{$d(\pri_v,v)^p\leq d(\sec_v,v)^p$}\\
%    &= \sR''(v)^p &&\rhd\text{Claim~\ref{claim:R-half-integr-def}}\\
%    &\leq \quad d(\sec_v,v)^p&& \rhd\text{$d(\pri_v,v)\leq d(\sec_v,v)$}\\
%    \leq& \quad 2\sR''(v)^p &&\rhd\text{Markov - Claim~\ref{claim:markov}}\\
%\end{align*} 
\end{proof}

Similarly to the previous section, we introduce a cost function $\pcst$ that serves as a proxy to bound the cost of a set of half-integral solutions we are considering in this section.
Let $\pcst(\tilde{y}'):= \sum_{u\in\sF}f(u)\cdot \tilde{y}'_u + \sum_{v\in P'}w'(v) A_v(\tilde{y}')$ where $A_v(\tilde{y}')$ is the proxy for the per-unit assignment cost of a client $v\in P'$ which is defined as
\begin{align}\label{eq:assignment-cost-int}
A_v(\tilde{y}'):= \sum_{u\in \flt(cr(v))} d(u,v)^p \cdot \tilde{y}'_u
%\begin{cases}
%      \sum_{u\in \flt(cr(v))} w'(v)d(u,v)^p\Tilde{y}'_u &\text{if } \pri_v\in \flt(cr(v))\\
%      (\sum_{u\in \flt(cr(v))} w'(v)d(u,v)^p\Tilde{y}'_u)+ w'(v)\cdot(d(\pri_v,v)^p-d(\sec_v,v)^p)\cdot \Tilde{y}'_{\pri_v}
%      &\text{otherwise}
%\end{cases}
\end{align}
%\mynote{Maybe we should remove the second sum $w'(v)\cdot(d(\pri_v,v)^p-d(\sec_v,v)^p)\cdot \Tilde{y}'_{\pri_v}$ (Note that this term is always negative I think). In other words, $A_v(\cdot)$ would only consist of the upper case. The advantage is that the following two lemmas become simpler. The disadvantage is that the proxy-cost would become less tight. All in all the proxy-cost would remain an upper bound and we would still be able to find a constant factor to bound the proxy cost in terms of the actual cost of $(x'',y'')$. I think Our paper focuses on keeping \textit{any} constant approximation rather than having a low constant like in Swamys paper. What do you think? For now, I work with the simplified version for the proof of the lemmas. I kept the definition of $A_v(\cdot)$ as it was. I can either rewrite the lemmas later or adjust the definition of $A_v(\cdot)$, depending on your opinion.}
\begin{lemma}\label{lem:cost-compare-half-integral}
The proxy cost of the intermediate solution $\tilde{y}'$ is at most $(4\cdot 3^{p-1} + 2)$ times the cost of $(x'',y'')$; $\pcst(\tilde{y}') \leq (4\cdot 3^{p-1} + 2) \cdot \cost(x'', y'')$.
\end{lemma}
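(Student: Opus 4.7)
The plan is to split $\pcst(\tilde{y}')$ into its facility-opening and assignment-proxy components and bound each against the corresponding term in $\cost(x'',y'')$.

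The facility-opening half is immediate from the definition of $\tilde{y}'$: if $u\in \flt(v)$ for some core client $v\in P''$ then $\tilde{y}'_u = x''_{vu}\leq y''_u$, and otherwise $\tilde{y}'_u = y''_u$. Hence $\sum_{u} f(u)\tilde{y}'_u \leq \sum_{u}f(u)y''_u$, matching the facility-opening cost of $(x'',y'')$.

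The real work is in controlling the per-client assignment proxy $A_v(\tilde{y}')$. Since the sets $\{\flt(v')\}_{v'\in P''}$ are pairwise disjoint (by line~\ref{alg:line:disjoint-core} of Algorithm~\ref{alg:new-cluster-centers}), for every $u\in \flt(cr(v))$ we have $\tilde{y}'_u = x''_{cr(v),u}$, and therefore $A_v(\tilde{y}') = \sum_{u\in \flt(cr(v))} d(u,v)^p\, x''_{cr(v),u}$. For a core client $v\in P''$, $cr(v)=v$ and so this equals $\sR''(v)^p$ exactly. For a non-core $v$ with $cr(v)=v^*$, I would pick the witness $u^*\in \flt(v)\cap \flt(v^*)$ (non-empty by the algorithm), and, using that $(x'',y'')$ is half-integral so $|\flt(v^*)|\leq 2$, split the sum into the term at $u=u^*$ and the term at the at-most-one remaining facility.

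For $u=u^*$ no triangle inequality is required: since $u^*\in \flt(v)$, Claim~\ref{claim:DistCmpR} gives $d(u^*,v)^p\leq 2\sR''(v)^p$. For the remaining $u\in \flt(v^*)\setminus\{u^*\}$, I would route through $v^*$ and $u^*$ and invoke Corollary~\ref{cor:gen-triangle-ineq}, inequality~\eqref{ineq:2hop-triangle-ineq}, to get $d(u,v)^p\leq 3^{p-1}\bigl(d(u,v^*)^p+d(v^*,u^*)^p+d(u^*,v)^p\bigr)$, then bound each of the three distances by Claim~\ref{claim:DistCmpR}: $d(u,v^*)^p,\,d(v^*,u^*)^p\leq 2\sR''(v^*)^p$ and $d(u^*,v)^p\leq 2\sR''(v)^p$. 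Combining the two pieces with the half-integral weights $x''_{v^*u}\in\{\tfrac12,1\}$ yields $A_v(\tilde{y}')\leq (1+3^{p-1})\sR''(v)^p + 2\cdot 3^{p-1}\sR''(v^*)^p$.

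Summing over $v\in P'$ and using $\sR''(v^*)=\sR''(cr(v))\leq \sR''(v)$ from Claim~\ref{clm:non-increasing-R} collapses the non-core contribution to at most $(1+3^p)\,w'(v)\sR''(v)^p$, so the total assignment proxy is at most $(1+3^p)\sum_v w'(v)\sR''(v)^p\leq (4\cdot 3^{p-1}+2)\sum_v w'(v)\sR''(v)^p$. Combined with the facility-opening bound, this gives $\pcst(\tilde{y}')\leq (4\cdot 3^{p-1}+2)\cost(x'',y'')$. The main obstacle is getting the constant right: applying the $3$-point inequality uniformly to every $u\in\flt(v^*)$, \emph{including} $u=u^*$, yields only $A_v\leq 3^{p-1}\bigl(3\sR''(v^*)^p+2\sR''(v)^p\bigr)\leq 5\cdot 3^{p-1}\sR''(v)^p$, which already exceeds $4\cdot 3^{p-1}+2$ at $p=2$. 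Isolating the $u=u^*$ summand so that the tight bound $d(u^*,v)^p\leq 2\sR''(v)^p$ replaces a $3^{p-1}$ multiplier is precisely what supplies the needed slack.
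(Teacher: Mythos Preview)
Your proof is correct and follows essentially the same approach as the paper: bound the facility-opening part via $\tilde{y}'_u\leq y''_u$, show $A_v(\tilde{y}')=\sR''(v)^p$ for core clients, and for non-core $v$ route from $v$ through the shared facility in $\flt(v)\cap\flt(cr(v))$ and then through $cr(v)$ using the $3$-hop inequality~\eqref{ineq:2hop-triangle-ineq} together with Claims~\ref{clm:non-increasing-R} and~\ref{claim:DistCmpR}. The only difference is bookkeeping: the paper splits on whether the shared facility is $\pri_v$ or $\sec_v$ and bounds each weight by $1$, whereas you keep the half-integral weights $x''_{v^*,u}=\tfrac12$ and avoid the case split, which is why you reach the slightly sharper intermediate bound $1+3^p$ before relaxing to $4\cdot 3^{p-1}+2$.
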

\begin{proof}
Since for every facility $u\in \sF$, $\tilde{y}'_u \leq y''_u$, $\sum_{u\in \sF} f(u)\cdot \tilde{y}'_u \leq \sum_{u\in \sF} f(u)\cdot y''_u$. 
%The facility opening proxy-cost of $\Tilde{y}'$ is $\sum_{u\in\sF}f(u)\cdot \Tilde{y}'\leq \sum_{u\in\sF}f(u)\cdot y''$ because by definition of $\Tilde{y}'$ either $\Tilde{y}_u'=y_u''$ or $\Tilde{y}'=x''_{uv}\leq y''_u$ where the inequality $x''_{uv}\leq y''_u$ follows from the feasibility of the half-integral solution $(x'',y'')$.
Next, we consider the following cases to bound the contribution of the assignment cost of a client $v$ in $\pcst(\Tilde{y}')$.
\begin{enumerate}[leftmargin=*]
    \item{\bf $v$ is a core client ($v\in P''$)}
        \begin{align*}
            A_v(\Tilde{y}')
            &= \sum_{u\in \flt(cr(v))} d(v, u)^p \cdot \tilde{y}'_u \\ %&&\rhd\text{$\pri_v\in \flt(cr(v))$ because $cr(v)=v$ for $v\in P''$}\\
            %v\in P'' \rightarrow cr(v)=v \rightarrow \pri_v,\sec_v\in \flt(cr(v))$
            &= \sum_{u\in \flt(v)} d(v, u)^p \cdot \tilde{y}'_u &&\rhd cr(v) = v\\
            &= \sum_{u\in \flt(v)} d(v, u)^p \cdot x''_{vu} &&\rhd\text{by the definition of $\tilde{y}'$}\\
            &= \sum_{u\in \sF} d(v, u)^p \cdot x''_{vu} &&\rhd\flt(v):=\{u\in \sF:x''_{vu}>0\} \\
            &= \sR''(v)^p
        \end{align*} 
        %Where the last equation corresponds exactly to the assignment cost of $v$ in the solution $(x'',y'')$.
    \item{\bf $v$ is not a core client ($v\in P\setminus P''$) and $\pri_v \in \flt(cr(v))$.} Let $u^*\in \flt(cr(v))\setminus \{\pri_v\}$.
        \begin{align*}
            A_v(\tilde{y}')
            &= \sum_{u\in \flt(cr(v))} d(v, u)^p \cdot \tilde{y}'_{u} \\ %&&\rhd\text{$\pri_v\in \flt(cr(v))$}\\
            &= d(v,\pri_v)^p \cdot \tilde{y}'_{\pri_v} + d(v,u^*)^p \cdot \tilde{y}'_{u^*} &&\rhd\flt(cr(v))=\{u^*,\pri_v\} \\
            &\leq d(v, \pri_v)^p + 3^{p-1}\cdot \big(d(v, \pri_v)^p + d(\pri_v, cr(v))^p + d(cr(v), u^*)^p\big) &&\rhd\text{Eq.~\eqref{ineq:2hop-triangle-ineq} and $\left\|\tilde{y}'\right\|_{\infty}\leq 1$}\\
            &\leq \sR''(v)^p + 3^{p-1}\cdot \big(\sR''(v)^p+ 2\sR''(cr(v))^p\big) &&\rhd\text{Claim~\ref{claim:R-half-integr-def} and~\ref{claim:DistCmpR}}\\
            &\leq \sR''(v)^p + 3^p \cdot \sR''(v)^p &&\rhd\text{Claim~\ref{clm:non-increasing-R}}\\
            %&\leq \sR''(v)^p+3^p\sR''(v)^p &&\rhd\text{$\forall u: y_u\leq 1$}\\
            &\leq (3^p +1) \cdot \sR''(v)^p %&&\rhd\text{$\forall u\in \sF, y_u\leq 1$}
        \end{align*}     
    \item{\bf $v$ is not a core client ($v\in P\setminus P''$) and $\pri_v \notin \flt(cr(v))$.} Since $\pri_v\notin \flt(cr(v))$, we have that $\sec_v\in \pri_v\notin \flt(cr(v))$. Let $u^*\in \flt(cr(v))\setminus \{\sec_v\}$. %Note that $\sec_v\in \flt(cr(v))$ because otherwise $\flt(v)\cap\flt(cr(v))=\emptyset$ which is not possible as shown in step~\hyperref[step:alg:core]{III}.
        %\mynote{In the following inequality, I also use the simplified proxy-cost function.}
        \begin{align*}
            A_v(\Tilde{y}')
            &= \sum_{u\in \flt(cr(v))} d(v,u)^p \cdot \tilde{y}'_u\\ %&&\rhd\text{adjusted definition}\\
            &= d(v, \sec_v)^p \cdot \tilde{y}'_{\sec_v} + d(v, u^*)^p \cdot \tilde{y}'_{u^*} &&\rhd\flt(cr(v))=\{u^*,\sec_v\} \\
            &\leq 2\sR''(v)^p + d(v, u^*)^p &&\rhd\text{Claim~\ref{claim:DistCmpR}, $\Tilde{y}'_{\sec_v}\leq 1$}\\
            &\leq 2\sR''(v)^p + 3^{p-1}\cdot\big(d(v, \sec_v)^p + d(\sec_v,cr(v))^p + d(cr(v), u^*)^p\big) &&\rhd\text{Eq.~\eqref{ineq:2hop-triangle-ineq}}\\
            &\leq 2\sR''(v)^p + 3^{p-1}\cdot (2\sR''(v)^p+2\sR''(cr(v)))^p &&\rhd\text{Claim~\ref{claim:R-half-integr-def} and~\ref{claim:DistCmpR}}\\
            &\leq 2\sR''(v)^p + 4\cdot 3^{p-1} \cdot \sR''(v)^p &&\rhd\text{Corollary~\ref{clm:non-increasing-R}}\\
            &\leq (4\cdot 3^{p-1} + 2) \cdot \sR''(v)^p
        \end{align*}    
\end{enumerate}
Hence, summing over all clients in $P'$,
\begin{align*}
    \pcst(\tilde{y}') 
    &= \sum_{u\in\sF}f(u)\cdot \tilde{y}'_u + \sum_{v\in P'} w'(v) \cdot A_v(\tilde{y}') \\
    &\leq \sum_{u\in\sF}f(u)\cdot \tilde{y}'_u + (4\cdot 3^{p-1} + 2)\cdot \sum_{v\in P'} w'(v)\cdot \sR''(v)^p \\
    &\leq \sum_{u\in\sF}f(u)\cdot y''_u + (4\cdot 3^{p-1} + 2)\cdot \sum_{v\in P'} w'(v) \cdot \big(\sum_{u\in \sF} d(u,v)^p \cdot x''_{uv}\big) \\
    &\leq (4\cdot 3^{p-1} + 2)\cdot \cost(x'', y'')
\end{align*}

\end{proof}
Next, we define the following polytope $\intSolSp$ that has integral extreme points and contains $\tilde{y}'$. 
%with integral extreme points w.r.t. the proxy cost to obtain a solution $(\tilde{y})$ such that $\pcst(\tilde{y})\leq \pcst(\Tilde{y}')$.
%We define the polytope as
\begin{align}\label{eq:int-sol-space}
\intSolSp := \{y \in \mathbb{R}^\mathcal{F}_+:\sum_{u\in S}y_u \leq r(S) \quad \forall S\subseteq \sF,\quad \sum_{u\in \flt(v)} y_u=1 \quad \forall v \in P''\}
\end{align}
First we show that for every solution $\bar{y}\in \intSolSp$, $\pcst(\bar{y}) \geq \cost(\bar{x}, \bar{y})$ where $\bar{x}$ is an optimal assignment w.r.t.~$\bar{y}$ and $w'$ as described in Algorithm~\ref{alg:assign_optimal}.
We now prove that the cost of any vector $\Tilde{y}\in \intSolSp$ with its optimal assignment $\Tilde{x}$ obtained by Algorithm \ref{alg:assign_optimal} is at most $H(\Tilde{y})$. This Lemma proves for both $\Tilde{y}'$ and for $\Tilde{y}$ that $H(\Tilde{y}')$ and respectively $H(\Tilde{y})$ is an upper bound on the assignment cost.
\begin{lemma}\label{lem:pcst-upper-bound}
For every $\bar{y}\in \intSolSp$, $\cost(\bar{x},\bar{y})\leq H(\bar{y})$ where $\bar{x}$ is an optimal assignment w.r.t.~$\bar{y}$ and $w'$.
\end{lemma}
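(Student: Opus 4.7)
The plan is to exhibit an explicit feasible assignment $\hat{x}$ satisfying $\cost(\hat{x},\bar{y}) = \pcst(\bar{y})$, and then invoke the optimality of $\bar{x}$ (with respect to $\bar{y}$ and $w'$) to conclude $\cost(\bar{x},\bar{y}) \leq \cost(\hat{x},\bar{y}) = \pcst(\bar{y})$. The construction is guided directly by the definition of $A_v(\bar{y})$, which already routes client $v$'s demand through facilities in $\flt(cr(v))$.

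Concretely, first I would set, for every client $v\in P'$ and every facility $u\in \sF$,
\begin{equation*}
\hat{x}_{vu} :=
\begin{cases}
\bar{y}_u & \text{if } u\in \flt(cr(v)), \\
0 & \text{otherwise.}
\end{cases}
\end{equation*}
Next I would verify feasibility for $\cllp(w',\sM)$. Constraint~\eqref{under_assignment} is immediate since $\hat{x}_{vu}\in\{0,\bar{y}_u\}\leq \bar{y}_u$. For constraint~\eqref{casgn}, note that $cr(v)\in P''$ for every $v\in P'$, so the defining equality of $\intSolSp$ in~\eqref{eq:int-sol-space} yields $\sum_{u\in \flt(cr(v))}\bar{y}_u = 1$, which gives $\sum_{u\in\sF}\hat{x}_{vu}=1$. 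The matroid constraint~\eqref{matroid_constraint} is inherited directly from $\bar{y}\in\intSolSp$.

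Then I would compute the objective value of $(\hat{x},\bar{y})$ by straightforward substitution:
\begin{align*}
\cost(\hat{x},\bar{y})
&= \sum_{u\in\sF} f(u)\cdot \bar{y}_u + \sum_{v\in P'} w'(v) \sum_{u\in\sF} d(u,v)^p \cdot \hat{x}_{vu} \\
&= \sum_{u\in\sF} f(u)\cdot \bar{y}_u + \sum_{v\in P'} w'(v) \sum_{u\in \flt(cr(v))} d(u,v)^p \cdot \bar{y}_u \\
&= \sum_{u\in\sF} f(u)\cdot \bar{y}_u + \sum_{v\in P'} w'(v) \cdot A_v(\bar{y}) \;=\; \pcst(\bar{y}),
\end{align*}
using the definition of $A_v$ in~\eqref{eq:assignment-cost-int} and of $\pcst$. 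Since $\bar{x}$ is an optimal assignment with respect to $\bar{y}$ and $w'$ (in particular, it minimizes assignment cost among all feasible assignments compatible with $\bar{y}$), we conclude $\cost(\bar{x},\bar{y})\leq \cost(\hat{x},\bar{y}) = \pcst(\bar{y})$.

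There is essentially no hard step: the whole argument is a direct consequence of the equality constraint $\sum_{u\in \flt(v)}y_u = 1$ built into $\intSolSp$ for every core client, which guarantees that the demand of both core clients and their non-core affiliates (via $cr(\cdot)$) can be routed entirely through $\flt(cr(v))$ at exactly the rates prescribed by $\bar{y}$. The only subtlety worth noting is that this construction crucially depends on the disjointness of the sets $\{\flt(v)\}_{v\in P''}$ established in Step~III only insofar as it makes $\bar{y}\in\intSolSp$ consistent (i.e. the equalities for different core clients do not over-commit any single facility); the bound $\cost(\bar{x},\bar{y})\leq \pcst(\bar{y})$ itself follows without invoking disjointness.
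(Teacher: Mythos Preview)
Your proposal is correct and follows essentially the same approach as the paper: construct the explicit assignment $\hat{x}_{vu}=\bar{y}_u$ for $u\in\flt(cr(v))$ and zero otherwise, verify feasibility using the equality $\sum_{u\in\flt(v)}\bar y_u=1$ for $v\in P''$ built into $\intSolSp$, compute $\cost(\hat{x},\bar{y})=\pcst(\bar{y})$, and invoke optimality of $\bar{x}$. If anything, your feasibility check is spelled out more carefully than the paper's.
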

\begin{proof}
The total contribution of the facility opening cost in $\cost(\bar{x},\bar{y})$ and $\pcst(\bar{y})$ are the same.
Observe that for every client $v'\in P'$, there exists a core client $v \in P''$ such that $cr(v')=v$. %Note that $v'=v$ if $v'$ is a \core-client. 
In the following, we construct a feasible assignment $\hat{x}$ w.r.t.~$\bar{y}$ and $P''$ such that its assignment cost is equal to the assignment cost of $\pcst(\tilde{y}')$ (which is equal to $\sum_{v\in P'} A_v(\bar{y})$). Once we have $\hat{x}$, the lemma simply follows from the optimality of assignment $\bar{x}$ w.r.t.~$\bar{y}$ and $P''$.
%The lemma then follows from the fact that $\Tilde{x}$ is an optimal assignment and therefore $H(\Tilde{y})$ is an upper bound on the cost of $(\Tilde{x},\Tilde{y})$.

We construct $\hat{x}$ as follows. For each client $v\in P'$, $\hat{x}_{vu} = \bar{y}_u$ if $u\in \flt(cr(v))$ and zero otherwise. This is a feasible assignment w.r.t. $\bar{y}$ and $w'$ because $\bar{y}\in \sQ$. We next bound the cost of solution $(\hat{x},\bar{y})$.
%It clearly corresponds to the assignment costs in $A(\tilde{y})$ which proves the Lemma.
\begin{align*}
    \cost(\bar{x}, \bar{y}) 
    &\leq \cost(\hat{x}, \bar{y}) &&\rhd\text{by the optimality of $\bar{x}$} \\
    &= \sum_{u\in \sF} f(u)\cdot \bar{y}_u + \sum_{v\in P'} \sum_{u\in \sF} w'\cdot d(v,u)^p\cdot \hat{x}_{vu} \\
    &=\sum_{u\in \sF} f(u)\cdot \bar{y}_u + \sum_{v\in P'} \sum_{u\in \flt(cr(v))} w'(v)\cdot d(v,u)^p\cdot \bar{y}_u &&\rhd\text{by the definition of $\tilde{x}$}\\
    &=\pcst(\bar{y})
\end{align*}
%\mynote{Note that here, I used the simplified proxy-cost function as suggested above.}
%\avnote{using the simplified is ok to me. Make sure to complete the proof of Lemma 4.21}
\end{proof}

\begin{lemma}\label{lem:integral-opt}
There is an integral solution $\tilde{y}$ that minimizes $\pcst$ over the polytope $\intSolSp$. 
\end{lemma}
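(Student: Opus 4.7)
The plan is to recognize $\intSolSp$ as a face of the intersection of two matroid polytopes, and then invoke Edmonds' matroid intersection theorem to obtain integrality of its extreme points. Since $\pcst(y) = \sum_{u\in\sF} f(u)\, y_u + \sum_{v\in P'} w'(v) \sum_{u\in \flt(cr(v))} d(u,v)^p \cdot y_u$ is a linear function of $y$, minimizing it over a nonempty polytope whose extreme points are all integral automatically yields an integer optimum.

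The key combinatorial ingredient I would use is that the sets $\{\flt(v)\}_{v\in P''}$ are pairwise disjoint: this is enforced in Algorithm~\ref{alg:new-cluster-centers} at line~\ref{alg:line:disjoint-core}, where any remaining client $v'$ with $\flt(v')\cap\flt(v^*)\neq\emptyset$ is removed from $P'$ the moment a core client $v^*$ is chosen. Consequently the inequalities $\sum_{u\in \flt(v)} y_u \leq 1$ for $v\in P''$, together with nonnegativity, describe the independence polytope of a partition matroid $\sN$ on $\sF$ whose blocks are the sets $\flt(v)$ for $v \in P''$ (each of rank $1$), with facilities outside $\bigcup_{v\in P''}\flt(v)$ forming an unconstrained block. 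Then $\intSolSp$ is precisely the face of $P(\sM)\cap P(\sN)$ on which every partition matroid inequality is tight at $1$.

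By Edmonds' matroid intersection theorem, $P(\sM)\cap P(\sN)$ is an integral polytope, and since faces of integral polytopes are integral, every extreme point of $\intSolSp$ is integral. The polytope is nonempty because $\tilde{y}'\in\intSolSp$: indeed, $\tilde{y}' \leq y''$ pointwise ensures the matroid constraints hold (as in Lemma~\ref{lem:feasible-tilde-y}), while the block-tightness identity $\sum_{u\in\flt(v)}\tilde{y}'_u=1$ for each $v\in P''$ shown just above witnesses the equality constraints. Hence $\pcst$ attains its minimum on $\intSolSp$ at an integer extreme point $\tilde{y}$, which can be computed in polynomial time via standard matroid intersection algorithms. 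The only step I anticipate as slightly delicate is confirming that $\intSolSp$ really is a face of $P(\sM)\cap P(\sN)$ rather than a generic slice; this is immediate once one observes that each equality in $\intSolSp$ comes from tightening a specific inequality of $P(\sN)$. An alternative route, if one prefers to avoid explicit face arguments, would be to directly adapt the polyhedral analysis from~\citep[Appendix A]{swamy2016improved} used for Lemma~\ref{lem:half-integral-opt}, replacing its two-sided block bounds with the present one-sided equality bounds.
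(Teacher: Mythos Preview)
Your proposal is correct and follows essentially the same approach as the paper: the paper's proof simply states that $\pcst$ is linear and that $\intSolSp$ is nonempty with integral extreme points because it is the intersection of the matroid polytope of $\sM$ with the (base polytope of the) partition matroid defined by the disjoint sets $\flt(v)$, $v\in P''$. Your write-up is more detailed and arguably more precise in framing $\intSolSp$ as a face of $P(\sM)\cap P(\sN)$, but the underlying argument is identical.
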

\begin{proof}
The desired solution $\tilde{y}$ exists since $\pcst$ is a linear function and the extreme point of the polytope $\intSolSp$ are integral. The latter holds since $\intSolSp$ is non-empty and an intersection of two matroid polytopes (defined by $\sM$ and the partition matroid corresponding to $\sum_{u\in \flt(v)}y_u=1, \forall v\in P''$).  
\end{proof}

\begin{proofof}{\bf Theorem~\ref{thm:main-integral}}
By Theorem~\ref{thm:main-half-integral}, there exist a half-integral solution $(x'', y'')$ of $\cllp(\sM, w')$ of cost at most $3^p \cdot z^*$.
Let $\tilde{y}$ be the minimizer of $\pcst$ over the polytope $\intSolSp$. Moreover, let $\tilde{x}$ denote the optimal assignment of $\tilde{y}$ w.r.t.~$w'$. By Lemma~\ref{lem:integral-opt}, the solution $(\tilde{x}, \tilde{y})$ is an integral feasible solution of $\cllp(\sM, w')$. Furthermore,
\begin{align*}
    \cost(\tilde{x}, \tilde{y}) 
    &\leq \pcst(\tilde{y}) &&\rhd\text{by Lemma~\ref{lem:pcst-upper-bound}} \\
    &\leq \pcst(\tilde{y}') &&\rhd\text{since $\tilde{y} = \argmin_{y\in \intSolSp}\pcst(y)$ and $\tilde{y}'\in \intSolSp$}\\
    &\leq (4\cdot 3^{p-1} + 2) \cdot \cost(x'', y'') &&\rhd\text{by Lemma~\ref{lem:cost-compare-half-integral}}\\
    &\leq (4\cdot 3^{p-1} + 2) \cdot 3^p \cdot z^* &&\rhd\text{by Theorem~\ref{thm:main-half-integral}}
\end{align*}
\end{proofof}

Now we are ready to state the main theorem of $\ell_p$-norm facility location under matroid constraint.
\begin{theorem}[Main Theorem of $\ell_p$-norm Facility Location Under Matroid Constraint]\label{thm:main-facility}
For $p>1$, there exists a polynomial time algorithm that finds a $(16^p)$-approximate solution of $\ell_p$-clustering on $(w, P)$ under matroid constraint $\sM$.
\end{theorem}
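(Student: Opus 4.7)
\textbf{Proof proposal for Theorem~\ref{thm:main-facility}.}
The plan is to chain together the three building-block theorems already established and track how the approximation factors compound. Given an instance of facility location with $\ell_p$-cost under the matroid constraint $\sM$, I first solve the LP relaxation $\cllp(w, \sM)$ to obtain an optimal fractional solution $(x^*, y^*)$ of cost $z^*$. Since $\cllp(w, \sM)$ is a valid relaxation, $z^*$ lower-bounds the optimal integral cost, so it suffices to exhibit a feasible integral solution of cost at most $16^p \cdot z^*$.

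Next, I run Algorithm~\ref{alg:consolidating_clients} on $(x^*, y^*)$ to produce the well-separated instance $(w', \sM)$. By Theorem~\ref{thm:main-half-integral}, there is a half-integral feasible solution $(x'', y'')$ of $\cllp(w', \sM)$ of cost at most $3^p \cdot z^*$. Feeding this half-integral solution into the matroid-intersection rounding of Theorem~\ref{thm:main-integral} yields an integral solution $F'$ of the well-separated instance with cost
\[
z' \;\leq\; (4\cdot 3^{p-1} + 2) \cdot 3^p \cdot z^*.
\]
Finally, Theorem~\ref{thm:convert-to-original} reinterprets $F'$ as a feasible solution of the original instance, bounding its cost by
\[
4 \cdot 16^{p-1} \cdot z^* + (8/7)^{p-1} \cdot z' \;\leq\; 4 \cdot 16^{p-1} \cdot z^* + (8/7)^{p-1} (4\cdot 3^{p-1} + 2) \cdot 3^p \cdot z^*.
\]

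What remains is to verify that the right-hand side above is at most $16^p \cdot z^*$. Since $4\cdot 16^{p-1} = 16^p/4$, it suffices to show that the second summand is bounded by $(3/4)\cdot 16^p \cdot z^*$. Up to constants the second summand grows like $9^p (8/7)^p \approx (72/7)^p$, and because $72/7 < 16$ the ratio $(72/(7\cdot 16))^p = (9/14)^p$ decreases geometrically in $p$; a direct check at $p=2$ gives $64 + 144 = 208 \leq 256 = 16^2$, and the slack only widens from there.

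There is no real conceptual obstacle — all the technical content has been absorbed into the three earlier theorems (the half-integral rounding via the proxy cost $\cst$ and the polytope $\solsp$, the integer rounding via the polytope $\intSolSp$ whose extreme points are integral by matroid intersection, and the distortion bound tracking how the client consolidation degrades the objective). The only delicate step is bookkeeping the constants; the restriction $p > 1$ is what makes the compounded bound collapse into the clean form $16^p$, whereas for $p = 1$ the better $k$-median-style analysis of~\citep{swamy2016improved} is invoked directly to obtain the $8$-approximation quoted in Theorem~\ref{thm:main}.
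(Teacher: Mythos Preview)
Your proposal is correct and follows essentially the same route as the paper: chain Theorem~\ref{thm:main-integral} (which already subsumes the half-integral step of Theorem~\ref{thm:main-half-integral}) with Theorem~\ref{thm:convert-to-original} to obtain the bound $4\cdot 16^{p-1}\cdot z^* + (8/7)^{p-1}(4\cdot 3^{p-1}+2)\cdot 3^p \cdot z^* < 16^p\cdot z^*$. The paper simply asserts this final numerical inequality for $p>1$ without further justification, so your spot-check at $p=2$ together with the asymptotic comparison is already more than the paper provides (though note that your ``slack only widens from there'' leaves the range $1<p<2$ unexamined, a lacuna the paper shares).
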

\begin{proof}
Following the result of this section, we first construct a well-separated instance $(w', P')$. By Theorem~\ref{thm:main-integral}, we can construct an integral solution of the well-separated instance of cost at most $(4\cdot 3^{p-1} + 2) \cdot 3^p \cdot z^*$. Next, by Theorem~\ref{thm:convert-to-original}, the constructed solution can be extended to a feasible solution of the original instance $(w, P)$ of cost at most $4 \cdot 16^{p-1}\cdot z^* + (\frac{8}{7})^{p-1}\cdot (4\cdot 3^{p-1} + 2) \cdot 3^p \cdot z^* < 16^p \cdot z^*$ (for $p>1$).  
\end{proof}
\begin{remark}
Note that our approach works for $p=1$ too and achieves a $22$-approximation guarantee. However, since the result of~\citep{swamy2016improved} provides an $8$-approximation in this case ($p=1$), we only consider $p>1$ here.
\end{remark}

\section{Missing Proofs}\label{sec:missing-proofs}
\begin{lemma}[Lemma~A.1~\cite{makarychev2019performance}]\label{lem:p-norm-ineq}
Let $x, y_1, \cdots, y_n$ be non-negative real numbers and $\lambda>0, p\ge 1$. Then,
\begin{align*}
    (x + \sum_{i=1}^n y_i)^p \leq (1+\lambda)^{p-1} x^p + \Big(\frac{(1+\lambda)n}{\lambda}\Big)^{p-1} \sum_{i=1}^n y_i^p.
\end{align*}
\end{lemma}

\begin{proofof}{\bf Lemma~\ref{lem:d'-metric}.}
Let $u,v,w \in (\sF\cup \sM)$ three arbitrary points and let $u_P,v_P,w_P$ be their corresponding points from $P$. Furthermore, let $\hat{\varepsilon}:=\min\{(\frac{\eps(n-k)}{\beta \cdot k})^{1/p},1\}$.

First we prove that $d'(u,v)=0 \iff u=v$.
If $u=v$, then by line \ref{alg:reduction:zero} the distance $d'(u,v)$ is set to zero. To show the other direction, if $d'(u,v)=0$ then the constraint $u=v$ for the assignment in line \ref{alg:reduction:zero} is satisfied since $d(u_p,v_p)>0$ for all $u_p\neq v_p$ (line \ref{alg:reduction:same}) and $d'(u,v) = \hat{\varepsilon}\cdot \delta>0$ when $u_p = v_p$ and $u\neq v$ (line \ref{alg:reduction:epsilon}).

Secondly, we prove the symmetric property $d'(u,v)=d'(v,u)$.
If $d'(u,v)=0$, then by the first part $u=v$ and therefore $d'(v,u)=0=d(u,v)$.
Assume $d'(u,v)> 0$ which implies $u \neq v$. If $u_P\neq v_P$, then by line \ref{alg:reduction:same} and the metric properties of $d$, $d'(u,v)=d(u_P,v_P)=d(v_P,u_P)=d'(v,u)$ holds. 
Otherwise, by line \ref{alg:reduction:epsilon},  $d'(u,v)=\hat{\varepsilon}\cdot \delta=d'(v,u)$.

Lastly we show that the triangle inequality $d'(u,w)\leq d'(u,v)+d'(v,w)$ holds. If $u=w$ then by the first property, $d'(u,w)=0$ so the inequality holds. 
Assume $u\neq w$ and consider their corresponding points $u_P, w_P$. 
\begin{enumerate}[leftmargin=*]
    \item If $u_P = w_P$ then, $d'(u,w)=\hat{\varepsilon}\cdot \delta$. Let $v_P$ be the corresponding point of $v$. If $v_P=u_P$, then $d'(u,v)=d'(u,w)=\hat{\varepsilon}\cdot \delta$ and therefore $d'(u,w)\leq d'(u,v)+d'(v,w)$ already holds. If $v_P\neq u_P$, then $d'(u,v)=d(u_P,v_P)\geq \min_{x,y\in P}d(x,y)\geq \hat{\varepsilon}\cdot \delta$. Thus $d'(u,w)\leq d'(u,v)+d'(v,w)$ holds.
    \item If $u_P \neq w_P$ then $d'(u,w)=d(u_P,w_P)\geq \hat{\varepsilon}\cdot \delta$. Note that $(u_P=v_P \text{ and } v_P=w_P)$ can not hold, so consider the remaining three cases cases:
    \begin{enumerate}[leftmargin=*]
        \item $v_P = w_P$ and $u_P\neq v_P$. Then $d'(u,w) = d'(u,v)$ and therefore $d'(u,w) \leq d'(u,v)+d'(v,w)$
        \item $u_P = v_P$ and $v_P\neq w_P$ . Then $d'(u,w) = d'(v,w)$ and therefore $d'(u,w) \leq d'(u,v)+d'(v,w)$
        \item $u_P\neq v_P$ and $v_P\neq w_P$. Then $d'(u,w)=d(u_P,w_P), d'(u,v)=d(u_P,v_P), d'(v,w)=d(v_P,w_P)$ and since $d(\cdot)$ satisfies the triangle inequality, $d'(u,w)\leq d'(u,v)+ d'(v,w)$ holds.
    \end{enumerate}
\end{enumerate}
\end{proofof}

\begin{proofof}{\bf Theorem~\ref{thm:k-center-reduction}}
Let $\cllc$ be a $\beta$-approximation algorithm for $k$-center under partition matroid constraint. Consider an instance of $\alpha$-fair $k$-center on $P$ and let $(P', \{(\bar{P}_0,k_0), (\bar{B}_1, k_1), \cdots, (\bar{B}_m, k_m)\})$ be the instance of $k$-center under partition matroid constraint constructed by Algorithm~\ref{alg:k-center-reduction} with input parameters $P, k$ and $\alpha$. We show that the solution returned by $\cllc(P', \{(\bar{P}_0,k_0), (\bar{B}_1, k_1), \cdots, (\bar{B}_m, k_m)\})$ can be converted to a $(\beta+\eps,3)$-bicriteria approximate solution of the given instance of $\alpha$-fair $k$-center on $P$.

Let $\sB= \{B_1, \cdots, B_m\}$ be the critical regions of $P$ constructed in Algorithm~\ref{alg:AlphaCluster}.
Let $\sol_C$ be the solution returned by $\cllc(P', \{(\bar{P}_0,k_0), (\bar{B}_1, k_1), \cdots, (\bar{B}_m, k_m)\})$ and let $\opt$ be an optimal solution of $\alpha$-fair $k$-center of $P$. Note that since adding centers in $\sol_C$ only reduces the $k$-center cost of the solution on $(P', \{(\bar{P}_0,k_0), (\bar{B}_1, k_1), \cdots, (\bar{B}_m, k_m)\})$, without loss of generality we can assume that $\sol_C$ picks exactly one center from each of $B_i$, for $i\in [m]$, and exactly $k-m$ centers from $\bar{P}_0$.
Now we construct a solution $\sol$ of $\alpha$-fair $k$-center on $P$ using the solution $\sol_C$. We start with an initially empty set of centers $\sol$. In the first step, for each $B\in \sB$, let $c_i$ denote the center in $\sol_C \cap \bar{B}_i$ and then we add the point $c\in P$ corresponding to $c_i$ to $\sol$. 
Next, in the second step, for each $o_0\in  \sol_C \cap \bar{P}_0$, we add the point $o\in  P$ corresponding to $o_0$ to $\sol$. Note that as some of these points may have already been added to $\sol$ in the first step, the final solution has at most $k$ distinct centers.

%\begin{itemize}[leftmargin=*]
\paragraph{Fairness approximation.} 
By the first step in the construction of $\sol$, for each $i\in [m]$, $|B_i \cap \sol|\geq 1$. Hence, by Lemma~\ref{lem:fair-sol-critical}, $\sol$ is a $(3\alpha)$-fair $k$-center clustering of $P$.
    
\paragraph{Cost approximation.} First we show that the cost of $\sol_C$ on $P'$ is not smaller than the $k$-center clustering cost of $P$ using $\sol$. Lets assume that there exist $v\in P$ such that $d(v, \sol) > d'(v',\sol_C)$ where $v'$ is a copy of $v$ in $P'$. Let $c'$ be the closest center to $v'$ in $\sol_C$. Let $c$ denote the point in $P$ corresponding to $c'$. Since after the second step of constructing $\sol$ all original copies of the centers in $\sol_C$ are added to $\sol$, $c\in \sol$. Hence, $d(v,\sol) \leq d(v, c) \leq d'(v',c') = d'(v', \sol_C)$ which is a contradiction. Hence, the cost of $\sol_C$ is not smaller than the cost of $\sol$. 
    
Next, we bound the cost of $\sol_C$ on $P'$ in terms of the cost of $k$-center clustering of $P$ using $\opt$. By the definition of $\alpha$-fairness, each point $v\in P$ must have a center in $\opt$ within distance at most $\alpha \cdot r(v)$. Hence, for each critical region $B\in \sB$, $|\opt\cap B|\geq 1$. For each $i\in [m]$, let $c_i$ be the copy of an arbitrary center $c\in \opt\cap B_i$ in the set $\bar{B}_i$. For the remaining points in $\opt$, we pick their corresponding copies in the set $\bar{P}_0$. Let $\opt_{C}$ denote the constructed solution for the instance $P'$. 
Since $\opt_C$ picks exactly one point from each set $\bar{B}_i$, for $i\in [m]$, and exactly $k-m$ points from $\bar{P}_0$, $\opt_C$ is a feasible solution for $k$-center under partition matroid constraint on instance $(P', \{(\bar{P}_0,k_0), (\bar{B}_1, k_1), \cdots, (\bar{B}_m, k_m)\}, d')$. Moreover, since for every pair $(v\in P, c\in \opt)$, there exists a pair $(v'\in P', c'\in \opt_C)$ such that $d'(v', c') \leq d(v, c) + \eps \cdot \delta/\beta$, $\cost_{\kcenter}(\opt_C; P') \leq \cost_{\kcenter}(\opt; P) + \frac{\eps\cdot \delta}{\beta}$. Hence,
\begin{align*}
    \cost_{\kcenter}(\sol, P) 
    \leq \cost_{\kcenter}(\sol_C, P') \leq \beta\cdot \cost_{\kcenter}(\opt_C, P') 
    &\leq \beta \cdot (\cost_{\kcenter}(\opt, P) + \frac{\eps\cdot \delta}{\beta})\\
    &\leq (\beta+\eps) \cdot \cost_{\kcenter}(\opt, P),
\end{align*}
where the last inequality follows since $\cost_{\kcenter}(\opt, P) \geq \delta$.
% Hence, the cost of the solution $\sol_C$ returned by $\cllc(P', \{(\bar{P}_0,k_0), (\bar{B}_1, k_1), \cdots, (\bar{B}_m, k_m)\} ,d')$ is at most 
% $\beta$ times the cost of optimal $\alpha$-fair $k$-center on $P$, which is $\opt$.
Thus, the $k$-center clustering cost of $P$ using $\sol$ is within a $\beta+\eps$ factor of the cost of any optimal $\alpha$-fair $k$-center of $P$.  
\end{proofof}

\end{document}